\newtheorem{theorem}{Theorem}
\newtheorem{corollary}{Corollary}
\newtheorem{lemma}{Lemma}
\newtheorem{definition}{Definition}
\newtheorem{claim}{Claim}
\newtheorem{assumption}{Assumption}
\algnewcommand\algicinput{\textbf{Input:}}
\algnewcommand\INPUT{\item[\algicinput]}
\algnewcommand\algicoutput{\textbf{Output:}}
\algnewcommand\OUTPUT{\item[\algicoutput]}
\newcommand{\alg}{Time Capsule\xspace}
\newcommand{\bname}{Capsule Chain\xspace}
\newcommand{\pcrdl}{PCR-DL\xspace}
\newcommand{\pcrdlalg}{PCRGenPuz\xspace}
\newcommand{\problem}{MPTC\xspace}
\newcommand{\problemlong}{Multi-Party Timed Commitments\xspace}
\newmdtheoremenv{defibox}{Definition}
\newcommand{\cmd}[1]{\ensuremath{#1}\xspace}
\newcommand{\gamemark}{\cmd{\Gamma}}
\newcommand{\gamee}[2]{\cmd{\gamemark_{\textit{#1}}^{\textit{#2}}}}
\newcommand{\Ad}{\cmd{\mathcal{A}}}
\newcommand\mysmaller{\stackrel{\mathclap{\normalfont\mbox{\normalfont\tiny$\lambda>c$}}}{<}}
\newcommand{\randselect}{\xleftarrow[\text{}]{\text{R}}}
\newcommand{\lambdam}{{\lambda_m}}
\newcommand{\lambdah}{{\lambda_h}}
\newcommand{\yboxbegin}{\noindent\fbox\bgroup\parbox\bgroup\textwidth\egroup\bgroup}
\newcommand{\yboxend}{\egroup\egroup}
\newcounter{todocounter}
\newcommand\refc{c\ref}
\newcommand\refp{p\ref}
\newcommand\fun{\textsc}
\newcommand{\mytextbox}[1]{#1}
\date{}
\begin{document}
\title{{\problemlong}}

\author{ Yael Doweck \\Technion, IC3 \and  Ittay Eyal \\Technion, IC3}


%
%

\maketitle
\begin{abstract}
The problem of obtaining secret commitments from multiple parties and revealing them after a certain time is useful for sealed-bid auctions, games, and other applications. 
Existing solutions, dating back to Rivest, Shamir and Wagner, either do not scale or rely on synchrony for the commitment phase and trust of~$t/n$ parties. 

We formalize the problem of implementing such commitments with a probabilistic delay and without the aforementioned assumptions as~\emph{Multi-Party Timed Commitments~(MPTC)} and present a solution~-- the \emph{Time-Capsule} protocol. 
Like previous approaches, \alg forms a puzzle whose solution reveals the committed values. 
But unlike previous solutions, no party has an advantage in solving the puzzle, and individual commitments cannot be revealed before the entire set is committed. 

A particular application of \problem realizes an advancement in the study of decentralized systems. 
The state of the art in decentralized systems is manifested in blockchain systems that utilize Proof of Work to achieve censorship resistance. 
However, they are still vulnerable to frontrunning, an issue that is plaguing operational systems. 
By adapting~\alg, we allow it to be used for Proof of Work, preventing frontrunning by system operators and tuning the puzzle difficulty using the blockchain mechanism. 
\end{abstract}

\section{Introduction}

Various applications like auctions~\cite{rabin2006, parkes2008, bogetoft2006practical}, exchanges~\cite{bentov2019tesseract, Stellar, tex, idex11} and games~\cite{kumaresan2016amortizing,bentov2017instantaneous} require multiple users to each commit to a value.  
Critically, each user should learn nothing about the other commitments before making her own.  
%
%
The archetypal example~\cite{boneh2000timed, parkes2008, bogetoft2006practical} is an auction. 
In a second-price auction, where the top bidder wins and pays the second highest price, if the last bidder saw the other bids, she could propose a price negligibly smaller than the highest bid and maximize the seller's profit at the expense of the top bidder.
This is called \emph{frontrunning}. 

To prevent frontrunning, participants can use a trusted third party (TTP) to collect all commitments and publish them only after all participants have committed. 
But if the third party was compromised and collaborated with one of the participants, they could trivially perform frontrunning. 
Roughly, the question we address is how to design a protocol that doesn't use a TTP and allows a large number of participants to commit to values such that no one can perform frontrunning. 

Previous work has addressed various variants of this theme, but none solves the problem at hand. 
The TTP can be replaced by a distributed system~\cite{rabin2006, bellare1997verifiable}; but the system is vulnerable to frontrunning if enough elements of the distributed system are compromised.
Serial time-lock puzzles~\cite{Rivest1996,boneh2000timed} allow an encrypter to time-lock a single message such that decryption requires a long serial computation; but this approach does not scale to many participants. 
Other approaches require additional assumptions like trusted hardware~\cite{bentov2019tesseract} or synchrony at the commitment phase~\cite{tex}. 

In this paper we introduce the \emph{\problemlong}~(\emph{\problem}) problem. 
A set of~$N$ users complete an interactive~\emph{commit} protocol with a single \emph{coordinator} to form an arbitrary order of commitments on messages. 
Once done, the coordinator reveals the list of messages within a bounded expected time, independent of the number of users~$N$. 
Any third-party can verify the messages and that their order matches the commitments. 

A solution should be secure in face of a static \emph{PPT} adversary that can compromise the coordinator and all participants but one honest party and under the \emph{Discrete Logarithm~(DL)} and \emph{Random Oracle} assumptions and an assumption on the total computational power of the parties. 
Specifically, it should be \emph{sound}, that is, the adversary cannot reveal a message that was not committed. 
It should also be \emph{frontrunning-resistant}, that is, the adversary cannot affect the output based on an honest party's message. It should provide \emph{time-lock}, that is, the time it takes to reveal the commitments is a predefined parameter, meaning that an adversary cannot abort the protocol based on committed messages' content before that time. 
A solution should also provide an \emph{ensured output} guarantee, that is, a party that invests her resources in order to reveal the messages is guaranteed to succeed and output the original messages.
 
We present the \emph{Time-Capsule} protocol that solves the \problem problem. 
In~\alg the parties first jointly choose a random public ElGamal~\cite{Elgamal1985} moderately-hard encryption key~\cite{dwork1992pricing, Rivest1996, back2002hashcash}. 
Due to the ElGamal key-generation scheme, for each randomly chosen public key there exists a matching private key, but at this stage no entity in the system knows it.
The parties subsequently all commit to the order of their encrypted messages, locking down the order without relying on synchrony assumptions. 
Only once each party received all its peers' commitments, does it send its ciphertext.
To simplify the protocol, we employ an untrusted coordinator, which is responsible to coordinate protocol progress and aggregate data. 
Given all ciphertexts, the coordinator decrypts all messages by solving a DL puzzle whose solution is the private key.

The time to reveal the commitments is therefore determined by the parallel computational power available to the coordinator.
The estimated time to solve~DL should be calculated considering the best known algorithm.
A coordinator could in principle perform a precomputation (e.g., with Index calculus~\cite{joux2013new}) to reduce her decryption time. 
In order to improve the accuracy of time estimation, we propose a simple modification to the~DL problem definition.
We start with the observation that precomputation in known efficient solutions to the~DL problem~\cite{adrian2015imperfect, adleman1979subexponential} apply only to a single group and generator. 
We formalize this as an assumption that precomputation can only reduce effort for a polynomial set of groups. 
Based on this assumption we define the \emph{Precomputation-Resistant Discrete Logarithm} (\emph{\pcrdl}) problem and show it is as hard as a DL problem without precomputation. 
\pcrdl may thus be of independent interest for any application that depends on the hardness of the DL problem in~$\mathbb{Z}_p^*$ for its security, e.g., for ElGamal encryption and signature, Diffie Hellman key exchange and Schnorr signature, since using  \pcrdl instead of traditional DL reduces the probability to perform a precomputation attack on these schemes and reveal the private key of a party to be negligible.
 
Finally, we show how \alg can be adapted to achieve an advancement in the study of \emph{Proof of Work} (\emph{PoW})~\cite{dwork1992pricing,jakobsson1999proofs}  \emph{permissionless} protocols. 
Dwork and Naor~\cite{dwork1992pricing} and followup work presented PoW as a mechanism to increase the cost of spam~\cite{abadi2005moderately, dwork2003memory, back2002hashcash} and denial-of-service attacks~\cite{aura2000resistant,Waters2004}. 
The mechanism requires principals to prove they expended resources as a precondition to preforming actions like sending a message or an email. 
This imposes a minor cost on well-behaved users, but a high cost on an attacker who must solve numerous puzzles. 
Therefore, it establishes security in systems that do not require permission for entry. 
PoW is implemented as a \emph{cryptographic puzzle} that is quick to generate and verify but requires a tunable amount of resources to solve. 

Aspnes et al.~\cite{aspnes2005exposing} took the next step with a protocol that utilizes PoW to ensures participants in a classical consensus protocol cannot masquerade as multiple nodes in a so-called Sybil attack. 
Nakamoto~\cite{nakamoto2008bitcoin} made this approach practical by introducing the Bitcoin Blockchain PoW-oriented consensus protocol, including an organic mechanism for tuning the PoW difficulty, and an internal currency to incentivize participation. 
The blockchain operators, called \emph{miners}, create data structures called \emph{blocks} that contain user operations called \emph{transactions}. 
Each block must contain a valid proof of work, and rewards its creator with cryptocurrency tokens. 
The blockchain protocol was adopted in many so-called \emph{cryptocurrencies}~\cite{nakamoto2008bitcoin, buterin2013ethereum, litecoin, bonneau2015sok} to achieve \emph{censorship-resistance}. 
These cryptocurrencies facilitate payments among users, as well as so-called \emph{smart contracts} that enable elaborate applications like decentralized exchanges~\cite{etherdelta, idex11, radarrelay, uniswap}, auctions~\cite{galal2018verifiable, blass2018strain, galal2019trustee} and loans~\cite{aave, salt}. 
The underlying blockchain mechanism ensures censorship-resistance, i.e., no system operator can block user actions. 
Still, existing blockchains are vulnerable to frontrunning: each user's transactions are public, allowing others, particularly the system operators, to place their own transactions in advance or rearrange the order of transactions. 
This has become a practical concern in operational systems~\cite{daian2019flash,Submarine}. 

To overcome this we present \emph{\bname}, a frontrunning-resistant blockchain protocol, with a different user-miner communication model and a new PoW component.
\bname uses a variant of \alg to hide the contents of the transactions until they are placed in the blockchain. 
The decryption of the transactions serves as a PoW mechanism for the blockchain, and utilizes it as a practical way to tune the decryption hardness. 
The challenge is that in \alg the puzzle is deterministically determined by the users. 
For \problem this is not a problem~-- it is in the users' interest to choose a random puzzle each time. 
For PoW this is not sufficient, as users could collaborate to repeatedly solve the same puzzle and win the blockchain rewards. 
Therefore, \bname puzzles are based on the blockchain structure as well as the content of the block's transactions. 
We show that this change achieves the requirements for PoW, as well as the \problem definition apart from ensured output, which can be enforced through external means. 

In summary, our main contributions are as follows. 
\begin{inparaenum}
	\item Formalization of the \problem problem~(\S\ref{sec:model}), 
	\item presentation of the Time-Capsule algorithm~(\S\ref{sec:algorithm}),
	\item proof that \alg solves \problem~(\S\ref{sec:correctness}),
	\item formalization of precomputation limitation assumption (\pcrdl) and introduction of \pcrdl puzzles~(\S\ref{sec:pcrdl}), 
	\item presentation of \bname, a frontrunning-resistant blockchain~(\S\ref{sec:bc}), and  
	\item proof that \bname solves \problem, \pcrdl and provides blockchain PoW requirements~(\S\ref{sec:bc:correctness}). 
\end{inparaenum}


    \section{Related Work}
    \label{sec:related} 
\vspace{4pt}
\paragraph{Puzzles}\label{sec:related time lock puzzles}

Several works solve the problem of committing to a value that will be revealed at a certain time in the future~\cite{Rivest1996,boneh2000timed,bellare1997verifiable} but these solutions does not scale to a large set of commitments.

Other works scale to multiple commitments~\cite{rabin2006,parkes2008} but require trust in $t$ out of $n$ parties.

Homomorphic time-lock puzzles \cite{malavolta2019homomorphic} is an algorithm to compute a function over a set of time-locked puzzles. The algorithm constructs a new time-locked puzzle from all of them; the solution to the constructed puzzle is the output of the function.
Both \alg and homomorphic time-lock puzzles offer a solution for hiding a set of secrets for a period of time, while a certain functionality is finalized.
However, Homomorphic time-lock puzzles' values and solutions are restricted to a finite size, thus they do not scale to reveal a large list of commitments. Moreover, Homomorphic time-lock puzzles assumes synchrony for the commitment phase.

\paragraph{Verifiable Delay Function (VDF)} \label{sec:related vdf}
VDF~\cite{Boneh2018, pietrzak2019simple, wesolowski2019efficient, boneh2018survey} is a function that requires a large number of sequential computations to evaluate and whose output can be verified efficiently. This asymmetry is similar to the puzzle we use in \alg.
However, in general, VDFs provide only time delay and do not hide any secrets during this delay and thus are not useful for \problem.
As opposed to VDF, Time-Capsule puzzle is not sequential; it is an interesting question whether it has a serial alternative.

\paragraph{MPC and ZKP}\label{sec:related mpc}
Multi-party computation (MPC) protocols~\cite{yao1986generate, goldreich1987how,chaum1988multiparty} enable a set of un-trusting parties to compute a function of their private inputs, learning only the result of a function and nothing else about the inputs.

Unlike MPC, \problem requires that the inputs remain private only for a limited time.
Typical practical MPC solutions, for example~\cite{bogetoft2009secure, bogetoft2006practical,damgaard2012multiparty, ben2008fairplaymp}, are designed for a specific application or requirement, such as semi-honest parties, $t$ out of $n$ trusted parties or a small number of participant, whereas \alg is a general protocol for any application and any number of participants in a trustless environment.

Some MPC applications, such as auctions and coin tossing, that require the inputs to stay private only until the set of inputs is finalized, can be solved by \alg.

Zero Knowledge Proofs (ZKP)~\cite{blum2019non,ben2014succinct, blum1991noninteractive, goldreich1994definitions} allow to prove what is the output of a specific function, without revealing anything else about the input. ZKP does not solve \problem, which requires to reveal all commitments after a time delay.

\paragraph{Permissionless Blockchains}\label{sec:related pow}
We proceed to discuss blockchain work related to \bname.

Many works \cite{zhang2017rem, tsabary2019heb, gilad2017algorand, kiayias2017ouroboros, bentov2014proof, ball2017proofs, karantias2019proof, eyal2016bitcoin} address the wastefullness of blockchain PoW.
The \bname PoW puzzle provides a timely decryption service, thus being less wasteful than common PoW computation.

Privacy coins~\cite{sasson2014zerocash, miers2013zerocoin, zcash, duffield2018dash, monero} are blockchains that hide the contents of transactions. These blockchains do not solve \problem because their privacy does not allow to run on-chain smart contracts, which requires the participating parties to compute the smart contract's result for each input and know its value. 

Several works implement frontrunning-resistant auctions, exchanges and general applications over smart contracts~\cite{Asayag, bentov2019tesseract, tex, kosba2016hawk, galal2018verifiable, blass2018strain, galal2019trustee,Submarine, breidenbach2018enter}. They rely on stronger assumptions than in \problem, such as penalties, trusted execution environment, trust in an honest majority or in $t$ out of $n$ parties.

\paragraph{Blockchain Applications}\label{sec:related exchanges}
Analyses of frontrunning in cryptocurrency blockchains \cite{Eskandari, daian2019flash} report a sizable economy of bots profiting from frontrunning and arbitrage in auctions and Decentralized exchanges (DEXes).
Decentralized Finance (DeFi)~\cite{defi} is a variety of decentralized financial services and DEXes~\cite{idex11,uniswap, bancor, radarrelay, etherdelta}, typically implemented with Ethereum smart contracts. 
These services operate in a non-custodial manner, without a trusted third party holding the users' funds; they are thus protected from a malicious service provider who tries to steal the users' funds. However, these services are susceptible to frontrunning and would benefit from a frontrunning-resistant blockchain.

\paragraph{Second Layer Blockchain Applications}\label{sec:related applications}
Several works use a blockchain to enforce fairness of games and contracts \cite{kumaresan2015use,kumaresan2016amortizing,andrychowicz2014fair,andrychowicz2014secure,bentov2017instantaneous}. 
They use a blockchain as a trusted party to collect collateral from the users and use it as penalty in case they misbehave. 
However, blockchains are operated by independent nodes which can be bribed~\cite{mccorry2018smart, judmayer2019pay} to violate the expected behavior, making them unreliable.
\bname does not assume trusted third party.

Second-tier protocols~\cite{poon2016bitcoin, poon2017plasma, dziembowski2019perun, dziembowski2018general, lind2019teechain} allow a set of parties to place collateral on a blockchain, allowing any of them to terminate the protocol by placing a transaction on the blockchain that redistributes the amount in the collaterals. If the terminator acts maliciously, another party can dispute its transaction within a long time period.
The schemes are secure over a censorship-resistant blockchain and do not require frontrunning-resistance.

	\section{\problem Model} 
	\label{sec:model}
		\subsection{Structure and Assumptions}                    

The system comprises a set of $N$ participants $P=\{P_1, P_2,....,P_N\}$ and a coordinator. Each participants~$P_i$ has a secret key $Sk_i$ and access to a public key infrastructure (PKI) service that maintains the matching public keys $\{Pk_1,...,Pk_N\}$. All parties can exchange messages via reliable channels. The system is parametrized by two security parameters $\lambdam$ and $\lambdah$, for \emph{computationally moderate} and \emph{computationally hard} problems, respectively.
Each participant has a private memory of size polynomial in $\lambdam$ and a known parallel computational power. 
A static adversary, PPT in~$\lambdah$, can compromise the coordinator and all parties except for one. 

\paragraph{Cryptographic assumptions}
We assume that the DL assumption over the multiplicative group modulo~$p$ holds: Let~$p$ be a safe prime of size~$\lambdam$ bits. Let~$\mathbb{G}$ be the multiplicative group modulo $p$ generated by~$g \in \mathbb{G}$. Let $a$ be chosen uniformy at random~$a  \randselect Z^{\ast}_P$. Given~$(\mathbb{G}, p, g, g^a)$, a PPT adversary's probability of finding $a$ is negligible in~$\lambdam$.
Given a time~$\tau$, we can calculate the matching~$\lambdam$ value such that the DL puzzle with~$\lambdam$ as security parameter will take an expected mean time~$\tau$ to solve.
The participants have a \emph{difficulty table} that maps each~$\lambdam$ value to a prime number~$p$ and a generator $g$.
  
We also define and assume the Multi-DL assumption as follows:
\begin{assumption}[Multi-DL assumption]
	Let~$k$ be a number polynomial in~$\lambdam$. For any PPT adversary~\Ad, solving one out of~$k$ random DL puzzles, the advantage over solving a single DL puzzle is negligible.
\label{def:multi dl assumption}
\end{assumption} 

We assume a secure signature scheme with operations \textit{sign} and \textit{verifySig}. This scheme guarantees a signature forgery probability~$\varepsilon_{\textit{sig-forgery}}$, which is negligible in~$\lambdah$.
Each party has access to a hash function, modeled as a random oracle~$\fun{hash}:\{0,1\}^*\rightarrow\{0,1\}^{\lambdah}$. The probability to find the pre-image of the hash is~$\varepsilon_{\textit{pre-image}}$ and the probability of a hash collision is~$\varepsilon_{\textit{hash\_collision}}$
Participants can use a Merkle tree~\cite{merkle1980protocols} structure, which guarantees a negligible root collision probability~$\varepsilon_{\textit{MT\_collision}}$.

Additionally, there is a pseudo-random number generator (PRNG) that provides a function~$\fun{PRNG}(\textit{seed}, \textit{first\_bit}, \textit{bit\_count})$ and returns \textit{bit\_count} bits, starting from \textit{first\_bit} of the pseudo random bits generated from the randomness \textit{seed}. The PRNG guarantees that a polynomial length output is computationally indistinguishable from a random output with an advantage~$\varepsilon_{\textit{PRNG}}$, negligible in~$\lambdah$. This assumption can be realized, e.g., with the Blum-Micali $n$-wise sequential composition~\cite{boneh2017graduate}. Let $G$ be a secure PRG that holds the advantage~$\varepsilon_\textit{PRG}$, negligible in~$\lambdah$.
The Blum-Micali~$n$-wise sequential composition~$G'$ expands~$G$'s output from~$t + l$ bits to~$n\times t + l$ bits and is also a secure PRG with the following advantage:
\begin{equation}
\label{eq:blum_micali}
\textit{PRGadv}[\mathcal{A},G']=n*\textit{PRGadv}[\mathcal{A},G] = n*\varepsilon_{PRG},\nonumber
\end{equation}	
which is negligible for a polynomial number of iterations~$n$. A practical choice for PRNG is ChaCha~\cite{bernstein2008chacha}.

\setlength{\floatsep}{7 pt}
\setlength{\textfloatsep}{7 pt}
\begin{figure}[t]
	\centering
	\begin{tikzpicture}[node distance=0cm,auto,>=latex']
	\begin{scope}
	\node[xshift=-1cm](stub){};
	\node[fit={(0,0) (0.4,0.25)}, inner sep=0pt,xshift=0cm, yshift=1.00cm] (PN) {
		$P_N$
	};
	\node[fit={(0,0) (0.4,0.25)}, inner sep=0pt,xshift=0cm, yshift=-1.25cm] (PNe) {
	};
	\node[fit={(0,0) (0.4,0.25)}, inner sep=0pt,xshift=0.65cm, yshift=1.00cm] (Pd) {
		$\dots$
	};
	\node[fit={(0,0) (0.3,0.25)}, inner sep=0pt,xshift=1.5cm, yshift=1.00cm] (P1) {
		$P_1$
	};
	\node[fit={(0,0) (0.3,0.25)}, inner sep=0pt,xshift=1.5cm, yshift=-1.25cm] (P1e) {
	};
	\node[fit={(0,0) (1.6,0.25)}, inner sep=0pt, xshift=3.4cm, yshift=1.00cm] (Pc) {
		Coordinator
	};
	\node[fit={(0,0) (1.6,0.25)}, inner sep=0pt,xshift=3.4cm, yshift=-1.25cm] (Pce) {};
	\node[fit={(0,0) (0,0)}, inner sep=0pt, xshift=4.15cm, yshift=0.75cm] (Pcom1c) {};
	\node[fit={(0,0) (0,0)}, inner sep=0pt, xshift=1.7cm, yshift=0.75cm] (Pcom1) {};
	\node[fit={(0,0) (0,0)}, inner sep=0pt, xshift=4.15cm, yshift=0.15cm] (PcomNc) {};
	\node[fit={(0,0) (0,0)}, inner sep=0pt, xshift=0.25cm, yshift=0.15cm] (PcomN) {};
	\node[fit={(0,0) (0.25,0.25)}, inner sep=0pt, xshift=3cm, yshift=0.15cm] (Pcomd) {$\vdots$};
	\node[fit={(0,0) (0,0)}, inner sep=0pt, xshift=4.1cm, yshift=-0.1cm] (Pcomend) {};
	\node[fit={(0,0) (0,0)}, inner sep=0pt, xshift=4.1cm, yshift=-.85cm] (Prevc) {};
	\node[fit={(0,0) (1,0.2)}, inner sep=0pt, xshift=3.15cm, yshift=-1cm] (out) {\footnotesize output};
	
	\path[-] (PN) edge node {} (PNe);		
	\path[-] (P1) edge node {} (P1e);	
	\path[-] (Pc) edge node {} (Pce);
	\path[<->] (Pcom1c) edge node [above]{} (Pcom1);
	\path[<->] (PcomNc) edge node [above]{} (PcomN);
	\path[<->] (Pcomend) edge node [left]{\footnotesize $\tau$} (Prevc);
	
	\draw [decorate,decoration={brace,amplitude=4pt},yshift=0pt]
	(4.25,0.85) -- (4.25,0) node [black,midway,xshift=0.2cm] {\footnotesize
		commit};
	
	\draw [decorate,decoration={brace,amplitude=4pt},yshift=0pt]
	(4.25,-0.05) -- (4.25,-0.9) node [black,midway,xshift=0.2cm] {\footnotesize
		reveal};
	\end{scope}	
	\end{tikzpicture}
	\caption{\problem Diagram}
	\label{pmptc diagram}
\end{figure}

		\subsection{Goal}                                                                  
A protocol~$S$ solving the \problem comprises three elements: a \emph{commit} protocol, a \emph{reveal} function and a \emph{verify} function.
It proceeds in two phases: commit and reveal (Figure \ref{pmptc diagram}).

In the commit phase, each participant~$P_i$ in $P$ chooses a message  $m$ and runs an interactive protocol with the coordinator to commit this message and agree on a list of message commitments. In the reveal phase, the coordinator runs reveal($\cdot$), which proceeds for a random time with mean value $\tau$ and then outputs a list $L_m$ of tuples $(m, i)$  and a proof \textit{proof}. Anyone can verify that the output is correct using the function verify($L_m$, \textit{proof}), which returns the value TRUE if and only if $L_m$ matches the messages and order confirmed in the commit phase. If verify($L_m$, \textit{proof}) = TRUE, we say that ($L_m$, \textit{proof}) is valid.
We denote the tuple $(m,i)$ by $m^i$.

We formalize the \problem security as the following properties:
	\paragraph{Soundness} The coordinator outputs $m^i$ only if the participant $P_i$ committed to $m$ in the commit phase.
	We formally define this property using the \gamee{Sound}{S} game played between a challenger that represents the participants and an adversary that represents the coordinator. 
	We denote a multiset by \{\{$\cdot$\}\} and multiset subtraction by~$\backslash$.
	
	\mytextbox{
	\begin{itemize}
		\item The adversary chooses a multiset of messages $\{\{m_1,....,m_k\}\}$, $k$ is polynomial in $\lambdah$;
		\item the challenger and adversary run the commit protocol: The challenger has~$k$ different identities~$P_\textit{challenger}=\{P_1,..,P_k\}$; for each message~$m_j \in \{\{m_1,....,m_k\}\}$, $P_j$ commits~$m_j$.
		\item The adversary outputs $(L_m, \textit{proof})$.
	\end{itemize}

	The adversary wins the \gamee{Sound}{S} game if she outputs a valid output such that $\{\{m_i^j \in L_m|P_j \in P_\textit{challenger}\}\} \backslash \{\{m_1^1,....,m_k^k\}\} \neq \emptyset$. In other words, the output contains a message as if it was committed by the challenger, even though it was not.
	We denote the event that the adversary wins by $W^{S}_{\textit{Sound}}$. We define the advantage in this game by  $\text{Sound}\textsf{adv}[\lambdah, \Ad]\triangleq Pr[W^{S}_{\textit{Sound}}]$.
}
	\begin{definition}[Soundness]
		A protocol $S$ provides \emph{Soundness} if for any PPT adversary \Ad playing the \gamee{Sound}{S} game, $\text{Sound}\textsf{adv}[\lambdah, \Ad]$ is negligible in $\lambdah$.
	\end{definition}

	\paragraph{Frontrunning-resistance}	FR-resistance means that an adversary cannot change the content of its output based on the content of messages generated by other parties. Specifically, she cannot add (or equivalently, remove) a specific message to the list or change the order of the list. We formalize these two requirements using two games, which we describe below using the following template:

\mytextbox{
	\begin{itemize}
		\item The adversary chooses three different messages $m_A, m_0, m_1$ and sends $m_0$ and $m_1$ to the challenger;
		\item the challenger chooses $b \randselect \{0,1\}$;
		\item the challenger and adversary run the commit protocol: The challenger, as particiant $P_i$, commits a message $m_b$ and the adversary commits $k$ messages of her choice, where $k$ is polynomial in $\lambdah$.
		\item The adversary outputs $(L_m, \textit{proof})$.
	\end{itemize}
}

	In the first game, the \emph{Frontrunning Message Choosing} (\emph{FMC}) game, the adversary wants a message to appear depending on the choice of the challenger. The adversary wins if the output is valid and (1)~if $b=0$, then one of the messages in $L_m$ is $m_A$ or (2)~if $b=1$, then none of the messages in $L_m$ is $m_A$.
	We denote the event that the adversary wins by $W^{S}_{\textit{FMC}}$.
	Obviously, the adversary can win the game with probability $1/2$ by choosing whether to commit~$m_A$ uniformly at random. We define the advantage of an adversary \Ad in this game by $\text{FMC}\textsf{adv}[\lambdam, \lambdah, \Ad]\triangleq|Pr[W^{S}_{\textit{FMC}}]-1/2|$.
	
	In the second game, the \emph{Frontrunning Order choosing} (\emph{FOC}) game, the adversary's goal is to order the messages according to the choice of the challenger. The adversary wins the game if the output is valid, only one of the messages~$m_0$ or~$m_1$ appears in~$L_m$ and (1)~if~$b=0$, then~$m_A$ appears before~$m_0^i$ in~$L_m$ or (2)~if~$b=1$, then~$m_A$ appears after~$m_1^i$.
	We denote the event that the adversary wins by~$W^{\textit{S}}_{\textit{FOC}}$.
	Again, the adversary can win the game with probability~$1/2$ if the order of the list is set uniformly at random. We define the advantage of adversary~\Ad in this game by~$\text{FOC}\textsf{adv}[\lambdam, \lambdah, \Ad]\triangleq|Pr[W^{S}_{\textit{FOC}}]-1/2|$.

	With these two games we can define FR-resistance.
	\begin{definition}[FR-Resistance]
	A protocol $S$ provides \emph{FR-resistance} if for any PPT adversary \Ad playing either \gamee{FMC}{S} or \gamee{FOC}{S}, the advantages  $\text{FMC}\textsf{adv}[\lambdam,\lambdah,\Ad]$ and $\text{FOC}\textsf{adv}[\lambdam,\lambdah,\Ad]$ respectively, are negligible in $\lambdah$.
	\label{def:pmptc fr resistance}
	\end{definition}

	\paragraph{Time-Lock} If at least one participant follows the protocol, the time it takes to reveal a single commitment committed by a participant that follows the protocol has a mean value of~$\tau$. The reveal delay time can start at the beginning of the commit phase, thus the mean reveal time is~$\tau$.
	This requirement applies for each commitment submitted by a participant that is following the protocol.
	
	\begin{definition}[Time-Lock]
		Let $\tau$ be a parameter defined in time units. For a run of the protocol, we mark the beginning of the commit phase by time~$0$ and the time that the commitment of participant $P_i$ is revealed by~$t_\textit{reveal}^i$. We denote the set of participants following the protocol by $P_H \in P$. If $|P_H| \geq 1$ then the mean value of~$t_\textit{reveal}^j$ where $P_j\in P_H$ is at least~$\tau$.
	\label{def:pmptc time lock}
	\end{definition}
	Note that the mean reveal time~$\tau$ is independent of the number of participants~$N$.

	\paragraph{Ensured Output} If the coordinator is following the protocol then she's guaranteed to be able to produce a valid output after the declared time delay.
	\begin{definition}[Ensured Output]
		Once the commit phase has ended, a coordinator that is following the protocol will output a valid output after mean time $\tau$.
	\end{definition}

\paragraph{\problem}
Using these properties we define the \problem problem.
\begin{definition}[\problem]
	A tuple (\emph{commit}, \emph{reveal}, \emph{verify}) of a commit protocol~$S$ \emph{commit}, a reveal function \emph{reveal} and a verify function \emph{verify} that provides Soundness, FR-Resistance, Time-Lock and Ensured Output \emph{solves \problem}.
\label{def:pmptc_protocol}
\end{definition}

    \section{\alg Algorithm} 
    \label{sec:algorithm}
 
We present \emph{\alg}, an algorithm that solves the \problem problem. 
The crux of the approach is as follows. 
The parties jointly form a moderate-difficulty public encryption key, without anyone in the system knowing the matching secret key. 
Each party encrypts its message and sends it to the coordinator. 
The coordinator uses a brute-force search to decrypt the messages, which succeeds after a predictable time and allows her to publish the messages. 
To prevent attacks where the coordinator manages by luck to find the secret key before the commitment phase is done, the parties actually start by committing to their encrypted messages. 
We proceed to describe the commit protocol and reveal and verify algorithms.

		\paragraph{Commit}
The commit protocol between each participant $P_i$ and the coordinator proceeds as follows (Figure \ref{commit protocol}).
Each participant chooses a random nonce (line \refp{line:gen nonce}) and sends it to the coordinator. The coordinator aggregates the nonces in a Merkle tree $T_\textit{seed}$ (\refc{line:gen merkle tree}). The root of the tree will serve as a random seed (\refc{line:seed is ready}). The coordinator sends \textit{seed} to each participant as well as a Merkle proof (\refc{line:nonce merkle proof}) that her nonce is included in the tree for her to verify~(\refp{line:verify_mp}). In our pseudo code, assert means that if the check fails, the users aborts.

Next, each participant calculates an ElGamal encryption public key for the Time-Lock $Pk_\textit{TL}$ from the $\textit{seed}$ (\refp{line:gen puzzle}).
The public key for ElGamal encryption has three parameters: a prime number $p$, a generator $g$ of the group $Z^*_p$ and a member $b$ of the group $Z^*_p$. 

We need to be able to tune the puzzle's difficulty according to the computation power in the network to reach the delay target. We use the group size $|Z^{\ast}_P|= |p-1|$ to adjust the difficulty of the puzzle. This means that for each difficulty we choose a different prime number for the puzzle.

The public key calculation is done using the deterministic function $\fun{GenPuz}(\lambda, \textit{seed})$ (Algorithm~\ref{alg:generate_pk}). The function takes a security parameter~$\lambdam$ and randomness~$\textit{seed}$. First, it selects a safe prime~$p$ and a generator~$g$ from the difficulty table (line \ref{line:choose p g}).

Then it chooses the group member~$b$ with iteration sampling as follows. 
It expands the seed to a longer pseudo random bit stream by using the PRNG.
It constructs a candidate $b_\textit{candidate}$ from the pseudo random bits and checks whether it is in the group $Z^{\ast}_P$; this is repeated until a valid candidate is found~(lines \ref{line:genpuz_b_search_start}-\ref{line:genpuz_b_search_end}). The function returns the three values $Pk_\textit{TL} \gets (p,g,b)$.

Getting back to the commit protocol (Figure~\ref{commit protocol}), now that the participant has calculated~$Pk_\textit{TL}$, she chooses a random value $r_i$ (\refp{line:gen_r}), concatenates the message $m_i$ and $r_i$ and encrypts it using ElGamal encrytion (\refp{line:enc m}). 
She computes a hash of the encrypted value (\refp{line:query random oracle}). She sends the hash output and her index $i$ to the coordinator.
The coordinator creates a list~$L_H$~(\refc{line:gather Hc}, \refc{line:gen_list}) of all the values she received from the participants, these are the commitments of the encrypted messages. She sends the list to all of the participants.

Each participant verifies that her commitment appears once in $L_H$ (\refp{line:verify appears once}), otherwise she aborts. She then signs the list and \textit{seed} with her secret key $Sk_i$ (\refp{line:sign_list}) and sends the signature along with the encrypted message, $C_{m_i}$, and her index $i$ to the coordinator. 
The coordinator verifies that the signature is valid~(\refc{line:verify p sig}) and that $C_{m_i}$ matches the previously received commitment~(\refc{line:verify p cipher}) in order to guarantee the validity of the output, otherwise she aborts. 

Once the coordinator collected all encrypted messages and signatures~(\refc{line:collect sigs}, \refc{line:collect C}), the commit phase is done.

		\paragraph{Reveal}
Once the coordinator has received the signatures from all participants, she can proceed to reveal the commitments by finding the decryption key $Sk_\textit{TL}$ corresponding to $Pk_\textit{TL}$. She does that by searching for $a \in Z^{\ast}_P$ such that $b=g^a \text{ mod } p$, namely solving an instance of the DL problem. The coordinator uses an efficient algorithm of her choice to solve the discrete logarithm problem \cite{pollard1978monte, pohlig1978improved, joux2013new, corrigan2018discrete} and the difficulty table is tuned for such an efficient algorithm. But to make things concrete we consider for example a na\text{\"i}ve linear search implementation (Algorithm \ref{alg:find_elgamaml_sk}). 

Once the coordinator finds $Sk_\textit{TL}$, the puzzle solution, she decrypts the commitments and outputs them along with the key $Sk_\textit{TL}$, the signature list $L_s$ and the \textit{seed} (Algorithm~\ref{alg:coordinator reveal}).

		\paragraph{Verify}
Any participant can validate the coordinator's output with the function \fun{VerifyOutput}($\cdot$) (Algorithm \ref{alg:verify_reveal}). First, the participant verifies that $Sk_\textit{TL}$ matches $Pk_\textit{TL}$. Next, she uses the encryption key $Pk_\textit{TL}$ to encrypt each message and generates the list of hashes of the encrypted messages. And finally, she verifies that each signature matches this list and \textit{seed}, using the PKI.

		\paragraph{Implementation variations}
Note that in \alg, the order of the committed messages' list is set in the commit protocol (\refc{line:gen_list}). However, there are other options. The order can be determined at an earlier stage, according to the order of nonces in the merkle tree (\refc{line:gen merkle tree}); it can also be determined later, after the revel phase, by permuting the list according to a random seed constructed from the revealed messages.

Another interesting question is whether a puzzle alternative can work in EC groups where different difficulty choices imply different curve choices. We leave that for future work.

\begin{figure*}
	\footnotesize
	\begin{tikzpicture}[auto,>=latex']
	\begin{scope}
	\node[fit={(0,0) (0.25,0.25)}, inner sep=0pt,xshift=6.25cm, yshift=1.25cm] (Pi) {
		{$\mathbf{P_i}$}
	};
	\node[fit={(0,0) (0.25,0.25)}, inner sep=0pt,xshift=6.25cm, yshift=-4.2cm] (Pie) {
	};
	
	\node[fit={(0,0) (1.6,0.25)}, inner sep=0pt, xshift=8.85cm, yshift=1.25cm] (Pd) {
		\textbf{Coordinator}
	};
	
	\node[fit={(0,0) (1.6,0.25)}, inner sep=0pt,xshift=8.85cm, yshift=-4.2cm] (Pde) {};
	
	\algrenewcommand{\alglinenumber}[1]{\footnotesize p{#1}:}
	\renewcommand{\Statex}{\item[\hphantom{\footnotesize p\arabic{ALG@line}:}]}
	
	\node[fit={(0,0) (6,1)}, inner sep=0pt,yshift=0.55cm] (I1) {
		\begin{algorithmic}[1]
		\State $\textit{nonce}_i \randselect \{0,1\}^{\lambdah}$
		\label{line:gen nonce} 
		\algstore{I}
		\end{algorithmic}
	};

	\node[fit={(0,0) (6,2.5)}, inner sep=0pt,yshift=-1.7cm] (I2b) {
		\begin{algorithmic}[1]
		\algrestore{I}
		\Statex \textcolor{gray}{Encrypt and Commit}
		\State {\label{line:verify_mp} assert(verifyMP(${seed}, \textit{MP}_i, \textit{nonce}_i$)})
		\State $Pk_\textit{TL} \gets $ \fun{GenPuz}($\lambdam$, seed)
		\label{line:gen puzzle}
		\State $r_i \randselect \{0,1\}^{\lambdah}$
		\label{line:gen_r}
		\State $C_{m_i} \gets \fun{enc}_{Pk_\textit{TL}}(m_i \| r_i)$
		\label{line:enc m}
		\State $H_{C_{m_i}} \gets \fun{hash}(C_{m_i})$
		\label{line:query random oracle}
		\algstore{I}
		\end{algorithmic}
	};
	\node[fit={(0,0) (6,1.5)}, inner sep=0pt, yshift=-2.9cm] (I3) {
	\begin{algorithmic}[1]
	\algrestore{I}
	\State assert({$H_{C_{m_i}} \text{ appears once in } L_H$})
	\label{line:verify appears once}
	\State $S_{{L_H},i}=\fun{sign}_{Sk_i}(L_H \parallel \textit{seed})$
	\label{line:sign_list}
	\end{algorithmic}
};
	\algrenewcommand{\alglinenumber}[1]{\footnotesize c{#1}:}
	\renewcommand{\Statex}{\item[\hphantom{\footnotesize c\arabic{ALG@line}:}]}

	\node[fit={(0,0) (7,2)}, inner sep=0pt,yshift=-0.5cm, xshift=10.75cm] (D1) {
		\begin{algorithmic}[1]
		\Statex \textcolor{gray}{Generate Seed}
		\Statex [Wait for $N$ nonces]
		\State Aggregate all nonces in $T_\textit{seed}$
		\label{line:gen merkle tree} 
		\State $\textit{MP}_i \gets$ Merkle proof that $\textit{nonce}_i$ is in $T_\textit{seed}$ 
		\label{line:nonce merkle proof}
		\State $\textit{seed} \gets T_\textit{seed}$'s root
		\label{line:seed is ready}
		\algstore{D}
		\end{algorithmic}
	};

	\node[fit={(0,0) (7,1.5)}, inner sep=0pt, yshift=-1.85cm, xshift=10.75cm] (D2) {
		\begin{algorithmic}[1]
		\algrestore{D}
		\Statex \textcolor{gray}{Form a Commitment List}
		\State $S_H \gets S_H \cup (H_{C_{m_i}}, i)$
		\label{line:gather Hc}
		\Statex [Wait until done for all parties]
		\State $L_H \gets \fun{PrepareList}(S_H)$  
		\label{line:gen_list}
		\algstore{D}
		\end{algorithmic}
	};

	\node[fit={(0,0) (7,2.85)}, inner sep=0pt, yshift=-4.35cm, xshift=10.75cm] (D3) {
		\begin{algorithmic}[1]
		\algrestore{D}
		\Statex \textcolor{gray}{Collect Ciphertexts and Signatures}
		\State assert({\fun{verifySig}($Pk_i$, $L_H \parallel \textit{seed}$, $S_{{L_H},i}$)})
		\label{line:verify p sig}
		\State $j \gets $ get index of participant $i$ in $L_H$
		\State assert({$L_H[j] = (\fun{hash}(C_{m_i}),i)$})
		\label{line:verify p cipher}
			\State $L_s[j] \gets S_{{L_H},i}$
			\label{line:collect sigs}
			\State $L_C[j] \gets (C_{m_i}, i)$
			\label{line:collect C}
		\end{algorithmic}
	};
	
	\node[fit={(0,0) (0.25,0.25)}, inner sep=0pt,xshift=6.25cm, yshift=0.75cm] (I1p1) {};
	
	\node[fit={(0,0) (0,0)}, inner sep=0pt,xshift=0cm, yshift=0.7cm] (I1Ls) {};
	\node[fit={(0,0) (0,0)}, inner sep=0pt,xshift=6.25cm, yshift=0.7cm] (I1Le) {};
		
	\node[fit={(0,0) (0.25,0.25)}, inner sep=0pt,xshift=9.5cm, yshift=.5cm] (D1p1) {};
	\node[fit={(0,0) (0.25,0.25)}, inner sep=0pt,xshift=9.5cm, yshift=0.25cm] (D1p2) {};
	
	\node[fit={(0,0) (0,0)}, inner sep=0pt,xshift=9.75cm, yshift=-0.45cm] (D1Ls) {};
	\node[fit={(0,0) (0,0)}, inner sep=0pt,xshift=17cm, yshift=-0.45cm] (D1Le) {};
	
	\node[fit={(0,0) (0.25,0.25)}, inner sep=0pt,xshift=6.25cm, yshift=-0.5cm] (I2p1) {};
	\node[fit={(0,0) (0.25,0.25)}, inner sep=0pt,xshift=6.25cm, yshift=-0.75cm] (I2p2) {};
	
	\node[fit={(0,0) (0,0)}, inner sep=0pt,xshift=0cm, yshift=-1.7cm] (I2Ls) {};
	\node[fit={(0,0) (0,0)}, inner sep=0pt,xshift=6.25cm, yshift=-1.7cm] (I2Le) {};
	
	\node[fit={(0,0) (0.25,0.25)}, inner sep=0pt,xshift=9.5cm, yshift=-1.25cm] (D2p1) {};
	\node[fit={(0,0) (0.25,0.25)}, inner sep=0pt,xshift=9.5cm, yshift=-1.5cm] (D2p2) {};
	
	\node[fit={(0,0) (0,0)}, inner sep=0pt,xshift=9.75cm, yshift=-1.95cm] (D2Ls) {};
	\node[fit={(0,0) (0,0)}, inner sep=0pt,xshift=17cm, yshift=-1.95cm] (D2Le) {};
	
	\node[fit={(0,0) (0.25,0.25)}, inner sep=0pt,xshift=6.25cm, yshift=-2.25cm] (I3p1) {};
	\node[fit={(0,0) (0.25,0.25)}, inner sep=0pt,xshift=6.25cm, yshift=-2.4cm] (I3p2) {};
	
	\node[fit={(0,0) (0.25,0.25)}, inner sep=0pt,xshift=9.5cm, yshift=-3.15cm] (D3p1) {};
	
	\path[-] (Pi) edge node {} (Pie);	
	\path[-] (Pd) edge node {} (Pde);
	\path[->] (I1p1) edge node[above, midway, sloped] {$\textit{nonce}_i$} (D1p1);

	\path[->] (D1p2) edge node[above, midway, sloped] {(seed,$\textit{MP}_i$)} (I2p1);
	\path[->] (I2p2) edge node[above, midway, sloped] {($H_{C_{m_i}}, i$)} (D2p1);
	\path[->] (D2p2) edge node[above, midway, sloped] {$L_H$} (I3p1);
	\path[->] (I3p2) edge node[above, midway, sloped] {$(S_{{L_H},i}, C_{m_i}, i)$} (D3p1);

	\path[-] (I1Ls) edge[dashed] node {} (I1Le);	
	\path[-] (I2Ls) edge[dashed] node {} (I2Le);
	\path[-] (D1Ls) edge[dashed] node {} (D1Le);	
	\path[-] (D2Ls) edge[dashed] node {} (D2Le);

	\draw [decorate,decoration={brace,amplitude=8pt},yshift=0pt] 
	(6,0.6) -- (6,-1.6) node {};
	
	\draw [decorate,decoration={brace,amplitude=8pt},yshift=0pt] 
	(6,0.-1.8) -- (6,-2.8) node {};

	\draw [decorate,decoration={brace,amplitude=8pt},yshift=0pt]
	(10.03,-0.35) -- (10.03,1.15) node {};
	
	\draw [decorate,decoration={brace,amplitude=8pt},yshift=0pt]
	(10.03,-1.85) -- (10.03,-0.55) node {};
	
	\draw [decorate,decoration={brace,amplitude=8pt},yshift=0pt]
	(10.03,-4) -- (10.03,-2.05) node {};
	\end{scope}
	\end{tikzpicture}	
	\caption{Commit Protocol}
	\label{commit protocol}
\end{figure*}

\begin{algorithm}[t]
	\begin{algorithmic}[1]
		\Function{GenPuz}{$\lambdam, seed$}
		\State $(p, g) \gets \fun{prime\_and\_generator\_from\_table}(\lambdam)$
		\label{line:choose p g}
		\State $j_\textit{bitnum} \gets 0$
		\label{line:genpuz_b_search_start}
		\State $b_\textit{candidate} \gets 0$
		\While {$b_\textit{candidate} \notin [2,p-1]$}
		\State $b_\textit{candidate} \gets \fun{PRNG}(\textit{seed}, j_\textit{bitnum}, \lambdam)$
		\State $j_\textit{bitnum} \gets j_\textit{bitnum} + \lambdam$
		\EndWhile
		\label{line:genpuz_b_search_end}
		\State \textbf{return} $(p, g, b_\textit{candidate})$
		\EndFunction
	\end{algorithmic}		
	\caption{Generate Puzzle}
	\label{alg:generate_pk}
\end{algorithm}

\begin{algorithm}[t]
	\begin{algorithmic}[1]
		\Function{Reveal}{\textit{seed}, $L_s$, $L_c$}
		\State $Pk_\textit{TL} \gets$ \fun{GenPuz}($\lambdam$, \textit{seed})
		\State $Sk_\textit{TL} \gets  \fun{find\_ElGamal\_Sk}(Pk_\textit{TL})$
		\label{line:find sk}
		\ForAll{$(C_m, i) \in  L_c$}
		\label{line:reveal cm to m start}
		\State $m \gets \fun{dec}_{Sk_\textit{TL}}(C_m)$
		\State $L_m \gets L_m \parallel (m, i)$
		\EndFor
		\label{line:reveal cm to m end}
		\State \textbf{return} (\textit{seed}, $Sk_\textit{TL}, L_s, L_m$)
		\EndFunction
	\end{algorithmic}		
	\caption{Coordinator Reveal}
	\label{alg:coordinator reveal}
\end{algorithm}

\begin{algorithm}[t]
	\begin{algorithmic}[1]
		\Function{VerifyOutput} {$\lambdam, \textit{seed}, Sk_\textit{TL}, L_s, L_m$}
		\State $Pk_\textit{TL} \gets \text{GenPuz}(\lambdam, \textit{seed}) $
		\label{line:verify pk from seed}
		\If {$Sk_\textit{TL}$ is not the solution to puzzle $Pk_\textit{TL}$}                     
		\label{line:verify_sk}
			\State \textbf{return} FALSE
		\EndIf
		\ForAll {$j \in \{1,...,N\}$}
		\label{line:verify lm to Lh start}
			\State $(m, i) \gets L_m[j]$
			\State $C_{m} \gets$ enc\textsubscript{$Pk_\textit{TL}$}($m$)
			\State $L_C[j] \gets (C_m,i)$
			\label{line:verify create lc}
			\State $H_{C_{m}} \gets \fun{hash}(C_{m})$
			\label{line:verify hm from cm}
			\State $L_H[j] \gets (H_{C_{m}},i)$
			\label{line:verify create lh}
		\EndFor
		\label{line:verify lm to Lh end}
		\ForAll {$j \in \{1,...,N\}$}
			\State $(C_m, i) \gets L_C[j]$
			\State $Pk_i \gets \fun{get\_party\_pk}(i)$
			\If {not {\fun{verifySig}($Pk_i$, $L_H\parallel\textit{seed}$, $L_s[j]$)}}
			\label{line:verify_sig}
				\State \textbf{return} FALSE
			\EndIf
		\EndFor
		\State \textbf{return} TRUE
		\EndFunction				
	\end{algorithmic}		
	\caption{VerifyOutput}
	\label{alg:verify_reveal}
\end{algorithm} 

\begin{algorithm}[t]
	\begin{algorithmic}[1]
		\Function {SolveDL}{p, g, b}
		\ForAll {i in [1, p-1]}
		\If {$g^i \text{ mod }p = b$}
		\State \textbf{return} i
		\EndIf
		\EndFor
		\EndFunction
	\end{algorithmic}		
	\caption{find ElGamal $Sk$ na\text{\"i}ve implementation}
	\label{alg:find_elgamaml_sk}
\end{algorithm}

	\section{Correctness} 
    \label{sec:correctness} 

 We prove that \alg solves \problem by proving that it achieves Soundness~(\S\ref{sec:correctness soundness}), FR-resistance~(\S\ref{sec:correctness fr}), Time-Lock~(\S\ref{sec:correctness time-lock}) and Ensured Output~(\S\ref{sec:correctness ensured output}) .
  
		\subsection{Soundness} 
		\label{sec:correctness soundness}

We prove that an adversary \Ad cannot output a valid $L_m$ with a message that wasn't committed in the commit phase.

\begin{lemma}
	For any PPT adversary \Ad playing \gamee{Sound}{\alg}, the advantage $\text{Sound}\textsf{adv}[\lambdah, \Ad]$ is negligible in $\lambdah$.
	\label{lemma:algorithm soundness}
\end{lemma}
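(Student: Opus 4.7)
The plan is to show that any adversary $\Ad$ who wins $\gamee{Sound}{\alg}$ with non-negligible probability must break one of the underlying primitives: the signature scheme, the hash function (collision resistance), or decryption correctness of ElGamal. I would proceed by a careful case analysis tied to what $\fun{VerifyOutput}$ actually checks.

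Suppose $\Ad$ outputs a valid $(L_m, \textit{proof}) = (\textit{seed}, Sk_\textit{TL}, L_s, L_m)$ containing a tuple $(m', j)$ for some challenger participant $P_j$ where $m'$ differs from the message $m_j$ that $P_j$ actually committed during the commit protocol. I would first unpack the verification: VerifyOutput reconstructs $L_H$ by re-encrypting every revealed message under $Pk_\textit{TL} = \fun{GenPuz}(\lambdam, \textit{seed})$, hashing each ciphertext, and then checks that $L_s[j']$ (where $j'$ is $P_j$'s position in $L_H$) is a valid signature by $P_j$ on $L_H \parallel \textit{seed}$. Denote by $L_H^\star$ the unique list that $P_j$ actually signed in the protocol (if any); note that before signing, $P_j$ verified the Merkle proof binding her to \textit{seed} (line \refp{line:verify_mp}) and verified that her own $H_{C_{m_j}}$ appears in $L_H^\star$ at her position (line \refp{line:verify appears once}).

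Case 1 is $L_H \neq L_H^\star$ (including the subcase where $P_j$ never signed any list, e.g., because she aborted). Then the valid signature in $L_s[j']$ on the new message $L_H \parallel \textit{seed}$ is a fresh signature under $P_j$'s key that was never produced by the challenger, which is exactly a forgery event bounded by $\varepsilon_{\textit{sig-forgery}}$. Case 2 is $L_H = L_H^\star$. Then at position $j'$ the signed entry is $(H_{C_{m_j}}, j)$ with $H_{C_{m_j}} = \fun{hash}(C_{m_j})$ and $C_{m_j} = \fun{enc}_{Pk_\textit{TL}}(m_j \parallel r_j)$, while VerifyOutput reconstructs the same entry as $(\fun{hash}(\fun{enc}_{Pk_\textit{TL}}(m')), j)$. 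If the two ciphertexts differ, this is a hash collision bounded by $\varepsilon_{\textit{hash\_collision}}$; if they are equal, decryption correctness of ElGamal forces $m' = m_j$ (after stripping the random pad $r_j$), contradicting $m' \neq m_j$.

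The main obstacle I anticipate is the subtle point in Case 1 that the adversary, as coordinator, fully controls the construction of $L_H$ and might attempt to splice the honest $P_j$'s commitment into a list where $P_j$'s signature is obtained on one version but verification is run on another. Handling this cleanly requires relying on the fact that $P_j$ signs $L_H \parallel \textit{seed}$ as a single message, and that the seed is pinned down by the Merkle proof to the nonces of the commit phase, so any deviation in $L_H$ yields a genuinely new signed message. Finalizing the proof is then a union bound over the at most $k$ (polynomial in $\lambdah$) challenger identities and the three failure events, giving $\text{Sound}\textsf{adv}[\lambdah, \Ad] \leq k\bigl(\varepsilon_{\textit{sig-forgery}} + \varepsilon_{\textit{hash\_collision}}\bigr)$, which is negligible in $\lambdah$.
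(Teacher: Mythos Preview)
Your argument is correct and mirrors the paper's: both reduce a soundness break to a signature forgery on $L_H \parallel \textit{seed}$, a hash collision on ciphertexts, or a violation of ElGamal injectivity, and both conclude with a negligible bound. One refinement worth making explicit: your case split should be on whether the full signed string $L_H \parallel \textit{seed}$ equals what $P_j$ actually signed, not on $L_H$ alone---the adversary could output a different $\textit{seed}$ (hence a different $Pk_\textit{TL}$) while the reconstructed $L_H$ coincides with $L_H^\star$, and the paper treats this explicitly as its option~1 (different seed $\Rightarrow$ forgery), whereas you address it only informally in your obstacle paragraph.
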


\begin{proof}
	We calculate the probability that \Ad outputs a valid output with a list $L_{m_o}$ containing a tuple $m_o^l \in \{\{m_i^j \in L_{m_o}|P_j \in P_\textit{challenger}\}\} \backslash \{\{m_1^1,....,m_k^k\}\}$.
	Since the output is valid, the function \fun{VerifyOutput}($\textit{seed}, Sk_\textit{TL}, L_s, L_{m_o}$) (algorithm \ref{alg:verify_reveal}) returns true, and specifically in line \ref{line:verify_sig} the verification doesn't fail. 
	In this line the verifier checks that challenger's signatures over the list of commitments and seed are valid.
	We denote by $L_{H_o}$ the list of commitments that is calculated from the list $L_{m_o}$ by running lines \ref{line:verify lm to Lh start}-\ref{line:verify lm to Lh end} in function \fun{VerifyOutput}($\textit{seed}, Sk_\textit{TL}, L_s, L_{m_o}$).
	
	To pass verification, one of the following options must hold:
	\begin{enumerate}
		\item $1\leq l \leq k \wedge C_{m_o}=C_{m_l}$, The ciphertext of $m_o$ equals $P_l$'s committed ciphertext.
		\label{sound different cipher}
		\item $1\leq l \leq k \wedge C_{m_o} \neq C_{m_l} \wedge \fun{hash}(C_{m_o})=\fun{hash}(C_{m_l})$,  The ciphertext of $m_o$ is different from $P_l$'s, but their hashes are the same.
		\label{sound different ro}
		\item $L_{H_o} \neq L_H \wedge (\forall 1\leq j \leq k : \text{sign}_{Sk_j}(L_{H_o} \parallel \textit{seed}) \in L_s)$, the commitments lists are different but the challenger's signatures appears in the output's list of signatures $L_s$.
		\label{sound different sig}
	\end{enumerate}

	Let us start from option \ref{sound different cipher}. ElGamal encryption is an injective function, so if the ciphertexts are equal and the encryption keys are equal then the messages are equal, though by definition $m_o \neq m_j$.
	If the encryption keys are different, it means that each key was generated from a different seed value, $\textit{seed}_o \neq \textit{seed}$ (Algorithm \ref{alg:verify_reveal} line \ref{line:verify pk from seed}). In \fun{VerifyOutput}($\cdot$) line \ref{line:verify_sig} the signature over $L_H \parallel \textit{seed}$ is verified so in order to pass this check using a different seed \Ad has to forge the signature over $L_{H_o} \parallel \textit{seed}_o$, which she can do with probability $\varepsilon_{\textit{sig\_forgery}}$.
	The probability of option \ref{sound different ro} is the probability to find a hash collision, which is $\varepsilon_{\textit{hash\_collision}}$.
	Finally, for option \ref{sound different sig} the adversary has to create signatures on a new list of commitments $L_{H_o}$, that is, accomplish signature forgery given a polynomial computation limitation; the probability for that is at most $\varepsilon_{\textit{sig\_forgery}}$.

	In summary, the probability that \Ad can output $m_o^l$ in a valid output is
	$Pr[W^{\textit{\alg}}_{\textit{Sound}}] \leq \varepsilon_{\textit{hash\_collision}} + 2\cdot \varepsilon_{\textit{sig\_forgery}}$,
	which is negligible in $\lambdah$. Thus, \alg provides Soundness.
\end{proof}

\subsection{Frontrunning}
\label{sec:correctness fr}
Next, we prove that \alg is FR-resistant.
The FR attack is formalized using two games.
We first show that \alg is secure against Frontrunning Order Choosing (FOC).

\begin{lemma}
	For any PPT adversary \Ad playing \gamee{FOC}{\alg}, the advantage $\text{FOC}\textsf{adv}[\lambdam, \lambdah, \Ad]$ is negligible in $\lambdah$.
	\label{lemma:FOC}
\end{lemma}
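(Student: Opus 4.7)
The plan is to argue that, at the moment the adversary is forced to lock in the order of~$L_H$, her view is statistically independent of~$b$, so her winning probability can exceed~$1/2$ only by negligible quantities coming from breaking signatures, collisions, or guessing the challenger's randomness.

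First I would partition the adversary's win event into three sub-events: (i)~she wins ``honestly'' by picking the right order with respect to~$m_b^i$ in an~$L_H$ that is actually signed by the challenger on~$m_b$; (ii)~she wins by producing a valid output whose order differs from the~$L_H$ the challenger signed, i.e.\ by forging~$\fun{sign}_{Sk_i}$ on some~$L_{H_o}\neq L_H$ or~$\textit{seed}_o\neq \textit{seed}$; (iii)~she wins by having~$m_A$'s hash collide with the challenger's hash, or otherwise exploit a hash collision to reorder. Cases~(ii) and~(iii) are handled exactly as in the proof of Lemma~\ref{lemma:algorithm soundness}: each is bounded by~$\varepsilon_{\textit{sig-forgery}}$ or~$\varepsilon_{\textit{hash\_collision}}$, both negligible in~$\lambdah$. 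This leaves case~(i) as the real content.

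For case~(i), I would use a short sequence of games starting from \gamee{FOC}{\alg}. In Game~1, replace the challenger's message~$H_{C_{m_b}}$ by a freshly sampled uniform string of length~$\lambdah$. In the random-oracle model, this change is perfectly indistinguishable from the original unless the adversary's transcript contains a random-oracle query on the exact ciphertext~$C_{m_b}=\fun{enc}_{Pk_\textit{TL}}(m_b\|r_i)$ before the challenger sends her ciphertext (which happens after~$L_H$ is formed and committed). I would bound the probability of such a query by noting that~$r_i\randselect\{0,1\}^{\lambdah}$ and ElGamal encryption adds further independent randomness~$k\in Z^{\ast}_p$, so for any polynomial~$q(\lambdah)$ of RO queries the probability of hitting~$C_{m_b}$ is at most~$q(\lambdah)\cdot 2^{-\lambdah}/|Z^{\ast}_p|$, negligible in~$\lambdah$. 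In Game~1 the value~$H_{C_{m_b}}$ is a uniformly random string independent of~$b$, so every message the adversary outputs up to and including~$L_H$ is statistically independent of~$b$; in particular the placement of~$m_A$ relative to~$m_b^i$ is independent of~$b$ and her conditional success probability is exactly~$1/2$.

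The key conceptual step, and the place I expect the argument to require the most care, is establishing that the adversary is genuinely ``committed'' to the order before she can learn anything useful about~$b$. Even though she controls the coordinator, the honest challenger signs only over the specific~$L_H$ she verified (p\ref{line:verify appears once}--p\ref{line:sign_list}) and only then releases~$C_{m_i}$; so any later information~-- the ciphertext~$C_{m_b}$, or even the eventual puzzle solution~$Sk_\textit{TL}$~-- arrives after the order is frozen in an~$L_H$ the challenger has signed. Using~$C_{m_b}$ to reorder would take us back to case~(ii) above. Combining the three bounds yields
\[
\text{FOC}\textsf{adv}[\lambdam,\lambdah,\Ad]
\;\leq\; \varepsilon_{\textit{sig-forgery}} + \varepsilon_{\textit{hash\_collision}} + q(\lambdah)\cdot 2^{-\lambdah}/|Z^{\ast}_p|,
\]
which is negligible in~$\lambdah$, proving the lemma.
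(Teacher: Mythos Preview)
Your proposal is correct and follows essentially the same two-part skeleton as the paper: first bound the probability that the adversary can commit to a winning order in~$L_H$ before seeing~$C_{m_b}$, then bound the probability that she can deviate from the signed~$L_H$ afterwards via signature forgery. The paper's decomposition is literally into the events~$E_{L_H^\textit{win}}$ and~$E_{L_H^\textit{lose}}$, yielding $Pr[W]\leq \tfrac{1}{2}+\varepsilon_{\textit{pre-image}}+\tfrac{1}{2}\varepsilon_{\textit{sig\_forgery}}$; your three-case split is the same partition with an extra hash-collision case peeled off explicitly.

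The one genuine methodological difference is in how case~(i) is argued. The paper invokes pre-image resistance directly: since~$C_{m_b}$ contains the fresh~$r_i\in\{0,1\}^{\lambdah}$, any information about~$m_b$ extractable from~$H_{C_{m_b}}$ would yield a pre-image finder, so~$Pr[E_{L_H^\textit{win}}]\leq\tfrac{1}{2}+\varepsilon_{\textit{pre-image}}$. You instead do a clean random-oracle game hop, replacing~$H_{C_{m_b}}$ by a uniform string and bounding the probability that the adversary queries the oracle on~$C_{m_b}$ before it is sent. Your route is the more standard modern formulation and makes the independence from~$b$ explicit; the paper's reduction to pre-image resistance is terser but somewhat informal (the step ``if she has information with probability larger than~$\varepsilon_{\textit{pre-image}}$ then she can also find the pre-image'' is asserted rather than constructed). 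Both arrive at the same negligible bound.
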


\begin{proof}
	We denote by~$L_\textit{final}$ the tuple of transactions the attacker outputs and by~$\mathcal{L}_\textit{win} = \{(\text{valid }L|\{\{m_0^i \text{ appears in $L$ before } m_A\}$ or~$\{m_A \text{ appears in $L$ before } m_1^i\}\}$ and~$\{m_0^i \text{ and } m_1^i \text{ do not appear together}\})\}$ the set of winning lists.
	We calculate the probability that \Ad outputs $L_\textit{final} \in L_\textit{win}$.
	We can separate the calculation of the probability to win to two different paths. The first is that~\Ad created~$L_H$ (Figure \ref{commit protocol} line \ref{line:gen_list}) that is already organized in a winning order; we denote this event by~$E_{L_H^\textit{win}}$. The second option is that~\Ad created~$L_H$ that is not a winner, we denote this event by~$E_{L_H^\textit{lose}}$, but still won the game.
	\begin{equation}
	\label{eq:FR win}
	Pr[W]=Pr[E_{L_H^\textit{win}}]+Pr[E_{L_H^\textit{lose}}]\cdot Pr[L_\textit{final} \in \mathcal{L}_\textit{win}| E_{L_H^\textit{lose}}]
	\end{equation}
	
	We begin by calculating~$Pr[E_{L_H^\textit{win}}]$. When~\Ad chooses the order of~$L_H$, she only knows the value of the hash~$\fun{hash}(C_{m_b})$ and doesn't know the values~$C_{m_b}$ and~$m_b$.
	Since~$C_{m_b}$ is the encryption of~$m_b \| r$ and~$r$ is chosen uniformly at random from~$\{0,1\}^{\lambdah}$, the probability to find the pre-image of the hash function is~$\varepsilon_{\textit{pre-image}}$.
	If~$Pr[E_{L_H^\textit{win}}] > \frac{1}{2}$ is means that~\Ad has information about the message~$m_b$ at this stage. If she has information with probability larger than~$\varepsilon_{\textit{pre-image}}$ then she can also find the pre-image of the hash function with probability larger than~$\varepsilon_{\textit{pre-image}}$, violating the pre-image resistance assumption. So
	\begin{equation}
	\label{eq:FR LH win}
	\frac{1}{2} \leq Pr[E_{L_H^\textit{win}}] \leq \frac{1}{2} + \varepsilon_{\textit{pre-image}}
	\end{equation}
	and therefore, 
	\begin{equation}
	\label{eq:FR LH lose}
	Pr[E_{L_H^\textit{lose}}] = 1 - Pr[E_{L_H^\textit{win}}] \leq \frac{1}{2}.
	\end{equation}
	
	In the event $E_{L_H^\textit{lose}}$ \Ad can still output $L_\textit{final} \in \mathcal{L}_\textit{win}$. In order to do so, the output must be valid according to the function \fun{VerifyOutput}($\textit{seed}, Sk_\textit{TL}, L_s, L_m$). This function returns false if the challenger's signature over $(L_H \parallel \textit{seed})$ is not valid (algorithm \ref{alg:verify_reveal} line \ref{line:verify_sig}). 
	If the calculated list $L_H$ is different from the output then \Ad can try to forge the challenger's signature with probability $\varepsilon_{\textit{sig\_forgery}}$.
	Meaning, 
	\begin{equation}
	\label{eq:FR valid}
		Pr[L_\textit{final} \in \mathcal{L}_\textit{win}| E_{L_H^\textit{lose}}] \leq \varepsilon_{\textit{sig\_forgery}}.
	\end{equation}

	By plugging inequalities \ref{eq:FR LH win}, \ref{eq:FR LH lose} and \ref{eq:FR valid} into equation \ref{eq:FR win}, we bound the total probability to win game $\gamee{FOC}{\alg}$,
	\begin{eqnarray}
	\label{eq:FR total}
	Pr[W^{\textit{\alg}}_{\textit{FOC}}] & \leq &
	\frac{1}{2} + \varepsilon_{\textit{pre\_image}} + \frac{1}{2}\cdot\varepsilon_{\textit{sig\_forgery}}.\nonumber
	\end{eqnarray}
	Thus, $Pr[W^{\textit{\alg}}_{\textit{FOC}}]$ is negligibly larger than $\frac{1}{2}$ and  $\text{FOC}\textsf{adv}[\lambdam, \lambdah, \Ad]$ is negligible in $\lambdah$.
\end{proof}
	
\alg is also secure against Frontrunning Message Choosing (FMC), the proof of FMC is similar to the proof for FOC attack.

\begin{lemma}
		For any PPT adversary \Ad playing \gamee{FMC}{\alg}, the value $\text{FMC}\textsf{adv}[\lambdam, \lambdah, \Ad]$ is negligible in $\lambdah$.
	\label{lemma:FMC}
\end{lemma}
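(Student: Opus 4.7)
The plan is to mirror the two-path structure of the FOC proof. Let $\mathcal{L}_{\textit{win}}$ denote the set of valid output lists $L_m$ such that $m_A$ appears in $L_m$ when $b=0$ and does not appear when $b=1$. I split the event of the adversary winning by whether her committed list $L_H$ (Figure \ref{commit protocol} line \ref{line:gen_list}) is already consistent with winning -- meaning a commitment to $m_A$ is included iff $b=0$ -- calling these events $E_{L_H^\textit{win}}$ and $E_{L_H^\textit{lose}}$, and write
\begin{equation*}
Pr[W^{\textit{\alg}}_{\textit{FMC}}] = Pr[E_{L_H^\textit{win}}] + Pr[E_{L_H^\textit{lose}}]\cdot Pr[L_\textit{final} \in \mathcal{L}_\textit{win} \mid E_{L_H^\textit{lose}}].
\end{equation*}

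For $Pr[E_{L_H^\textit{win}}]$, the key observation is that before the signature exchange the only piece of information about the challenger's chosen message $m_b$ available to \Ad is the hash $\fun{hash}(C_{m_b})$, where $C_{m_b} = \fun{enc}_{Pk_\textit{TL}}(m_b \| r_i)$ with $r_i \randselect \{0,1\}^{\lambdah}$. Distinguishing whether the plaintext contains $m_0$ or $m_1$ therefore reduces to inverting the hash on a value carrying fresh $\lambdah$ bits of entropy, which succeeds with probability at most $\varepsilon_{\textit{pre-image}}$. Consequently \Ad's adaptive decision of whether to include $m_A$ among her $k$ commitments amounts to guessing $b$ with at most this advantage, giving $Pr[E_{L_H^\textit{win}}] \leq 1/2 + \varepsilon_{\textit{pre-image}}$ and hence $Pr[E_{L_H^\textit{lose}}] \leq 1/2$.

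For $Pr[L_\textit{final} \in \mathcal{L}_\textit{win} \mid E_{L_H^\textit{lose}}]$, to flip the membership of $m_A$ in the output after committing the wrong list, \Ad must produce an output whose derived hash list $L_{H_o}$ differs from the $L_H$ signed by the challenger; by the signature check at line \ref{line:verify_sig} of Algorithm \ref{alg:verify_reveal} this requires forging the challenger's signature on a new list, which succeeds with probability at most $\varepsilon_{\textit{sig\_forgery}}$. Plugging the bounds together yields
\begin{equation*}
Pr[W^{\textit{\alg}}_{\textit{FMC}}] \leq \tfrac{1}{2} + \varepsilon_{\textit{pre-image}} + \tfrac{1}{2}\varepsilon_{\textit{sig\_forgery}},
\end{equation*}
so the advantage $|Pr[W^{\textit{\alg}}_{\textit{FMC}}] - 1/2|$ is negligible in $\lambdah$.

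The main obstacle is really just careful bookkeeping around the adaptivity of \Ad, who controls the coordinator and all but one participant and can therefore freely schedule her own hash commitments relative to the challenger's and choose whether $m_A$ is among them. The argument must show that no matter the scheduling, the only non-trivial channel available for learning $b$ during the commit phase is the challenger's hash $\fun{hash}(C_{m_b})$, and the randomizer $r_i$ inside the encryption ensures pre-image resistance governs the bound; any later deviation from the signed list collapses to a signature forgery.
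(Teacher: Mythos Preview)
Your proposal is correct and follows exactly the approach the paper intends: the paper states only that ``the proof of FMC is similar to the proof for FOC attack,'' and you have faithfully transported the two-path decomposition of Lemma~\ref{lemma:FOC} (winning $L_H$ versus losing $L_H$, pre-image bound before signing, signature-forgery bound after) to the FMC winning condition, arriving at the same final bound $\tfrac{1}{2}+\varepsilon_{\textit{pre-image}}+\tfrac{1}{2}\varepsilon_{\textit{sig\_forgery}}$.
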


In summary, following directly from lemmas \ref{lemma:FOC} and  \ref{lemma:FMC}:

\begin{corollary}
	\alg provides FR-resistance.
	\label{corollary:algorithm fr resistance}
\end{corollary}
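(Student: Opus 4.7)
The plan is to show that the corollary is an immediate consequence of the two preceding lemmas together with Definition 2 (FR-Resistance). By that definition, a protocol $S$ provides FR-resistance if and only if, for every PPT adversary $\Ad$, both $\text{FMC}\textsf{adv}[\lambdam,\lambdah,\Ad]$ and $\text{FOC}\textsf{adv}[\lambdam,\lambdah,\Ad]$ are negligible in $\lambdah$. So the entire content of the proof is to assemble these two ingredients.

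First, I would fix an arbitrary PPT adversary $\Ad$ and consider the two games \gamee{FMC}{\alg} and \gamee{FOC}{\alg} separately. Lemma \ref{lemma:FOC} directly gives that $\text{FOC}\textsf{adv}[\lambdam,\lambdah,\Ad]$ is negligible in $\lambdah$ for any PPT $\Ad$, and Lemma \ref{lemma:FMC} gives the analogous statement for $\text{FMC}\textsf{adv}[\lambdam,\lambdah,\Ad]$. Since both advantages hold uniformly over all PPT adversaries, the two conditions in Definition \ref{def:pmptc fr resistance} are simultaneously satisfied.

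Invoking Definition \ref{def:pmptc fr resistance} then yields the conclusion that \alg provides FR-resistance, finishing the proof. There is no real obstacle here: the corollary is stated precisely to package Lemmas \ref{lemma:FOC} and \ref{lemma:FMC} into the single property required by the \problem specification, and the only care needed is to observe that the adversary $\Ad$ in Definition \ref{def:pmptc fr resistance} is quantified over all PPT adversaries (matching the hypotheses of both lemmas) and that negligibility in $\lambdah$ is preserved under taking the maximum of two negligible functions.
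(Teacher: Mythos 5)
Your proof is correct and matches the paper's reasoning exactly: the corollary is stated as following directly from Lemmas \ref{lemma:FOC} and \ref{lemma:FMC} via Definition \ref{def:pmptc fr resistance}, which is precisely your argument. No further comment is needed.
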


    		\subsection{Time-Lock}
  		\label{sec:correctness time-lock}

To show that \alg provides Time-Lock, we first prove that each puzzle generated by \alg is as hard as a DL puzzle (\S\ref{sec:correctness genpuz}).
Then we prove that using multiple different seeds for the puzzle creation does not give more than a negligible advantage in solving the puzzle (\S\ref{sec:correctness multi seed}).
And finally, we conclude that \alg provides Time-Lock (\S\ref{sec:time lock conclusion}).

  		\subsubsection{Puzzle Hardness} \label{sec:correctness genpuz}
  
  We first prove that a single puzzle generated by $\fun{GenPuz}(\lambdam, \textit{seed})$ is as hard as a random DL puzzle, except for a negligible probability. 

  To do this formally we first introduce some notation. $\mathcal{P}(\lambdam)$ is the set of all safe primes of size~$\lambdam$ bits; $\mathcal{G}(p)$ is the set of all generators of $Z^{\ast}_p$ and $\mathcal{B}(p)$ is the set of all members of $Z^{\ast}_p$: ($[a,b]$ is the range of integers from $a$ to $b$, inclusive).
  We define the following game \gamee{DLGenPuz}{\alg} of an adversary trying to solve our generated puzzle:
  
  \mytextbox{
  \begin{itemize}
  	\item The challenger chooses $seed \randselect \{0,1\}^{\lambdah}$.
  	\item The challenger computes $(p, g, b) \gets \fun{GenPuz}(\lambdam, seed)$ and gives it to the adversary.
  	\item The adversary outputs $a \in Z^{\ast}_p$.
  \end{itemize}
  
  Let $W_\textit{GenPuz}$ be the event that $b=g^a \text{ mod } p$ in  \gamee{DLGenPuz}{\alg}. Let $W_\textit{DL}$ be the event that the adversary succeeds to solve a DL puzzle with the same $p$ and $g$ and a random $b$.
  We show that a PPT adversary \Ad has negligible advantage in winning \gamee{DLGenPuz}{\alg} compared to a random DL puzzle. Therefore we define the adversary's advantage in the game as
  \begin{equation}
  \text{DLGenPuz}\textsf{adv}(\mathcal{A}, \lambdam)\triangleq \left|Pr[W_\textit{GenPuz}]- Pr[W_\textit{DL}]\right|.\nonumber
  \end{equation}
}

  \begin{lemma}
  	Let $\lambdam$ be a security parameter. For any PPT adversary \Ad, the advantage $\text{DLGenPuz}\textsf{adv}(\mathcal{A}, \lambdam)$ is negligible in $\lambdah$.
  	\label{lemma:hard_puzzle}
  \end{lemma}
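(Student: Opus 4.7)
The plan is a standard hybrid/reduction argument that isolates the only deviation between the puzzle produced by \fun{GenPuz} and a uniformly sampled DL puzzle, namely that $b$ is derived from \fun{PRNG}(\textit{seed},\cdot,\cdot) rather than being drawn uniformly from $Z^{\ast}_p$. First I would fix the prime $p$ and generator $g$: since $\fun{prime\_and\_generator\_from\_table}(\lambdam)$ is deterministic, these are identical in \gamee{DLGenPuz}{\alg} and in the reference DL game, so the entire gap lies in the distribution of $b$.

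Next I would introduce a single hybrid game $\gamemark_{\textit{Hyb}}$ that is identical to \gamee{DLGenPuz}{\alg} except that the inner loop of \fun{GenPuz} (lines \ref{line:genpuz_b_search_start}--\ref{line:genpuz_b_search_end}) samples each $\lambdam$-bit candidate $b_{\textit{candidate}}$ uniformly at random instead of reading it from \fun{PRNG}. Because the rejection test $b_{\textit{candidate}} \in [2,p-1]$ applied to uniform $\lambdam$-bit strings yields the uniform distribution on $[2,p-1]=Z^{\ast}_p\setminus\{1\}$, the puzzle $(p,g,b)$ produced in $\gamemark_{\textit{Hyb}}$ is statistically indistinguishable from the one in the reference DL game up to the event $b=1$, which occurs with probability $1/(p-1)$ and is therefore negligible in $\lambdam$. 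Hence $|Pr[W_{\textit{Hyb}}] - Pr[W_{\textit{DL}}]| \le 1/(p-1)$.

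The core step is bounding $|Pr[W_{\textit{GenPuz}}] - Pr[W_{\textit{Hyb}}]|$ by the PRNG advantage. I would give a reduction \pname that, on receiving a bit string $s$ that is either the output of \fun{PRNG} on a random seed or a truly uniform string of the same length, uses $s$ in place of the \fun{PRNG} output inside \fun{GenPuz}, runs $\mathcal{A}$ on the resulting $(p,g,b)$, and returns $1$ iff $\mathcal{A}$'s answer $a$ satisfies $b=g^a \bmod p$. Then \pname's distinguishing advantage is exactly $|Pr[W_{\textit{GenPuz}}] - Pr[W_{\textit{Hyb}}]|$. To make this rigorous I need the number of PRNG blocks actually consumed to be a priori polynomial: since $p$ is a $\lambdam$-bit safe prime, the fraction of $\lambdam$-bit candidates that land in $[2,p-1]$ is at least a constant (close to $1/2$), so the expected number of loop iterations is $O(1)$ and, by a Chernoff/Markov cutoff, exceeds any polynomial $n(\lambdam)$ with negligible probability. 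Invoking the Blum--Micali bound stated in equation \eqref{eq:blum_micali}, the advantage of distinguishing $n$ concatenated PRNG blocks from uniform is at most $n\cdot \varepsilon_{PRG}$, which is negligible.

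Combining the two steps via the triangle inequality gives
\[
\text{DLGenPuz}\textsf{adv}(\mathcal{A},\lambdam) \;\le\; \tfrac{1}{p-1} + n\cdot \varepsilon_{PRG} + \nu(\lambdam),
\]
where $\nu$ is the negligible tail probability that the sampling loop exceeds $n$ iterations; all three terms are negligible in $\lambdam$ (and a fortiori in $\lambdah$, since $\lambdah$ is the higher-entropy parameter). The main obstacle I expect is the careful handling of the variable-length rejection-sampling loop: I must fix a polynomial iteration cap before invoking Blum--Micali so that the reduction is well defined and consumes a deterministic polynomial amount of PRNG output, and then argue separately that the event of hitting the cap contributes only negligibly to the adversary's advantage.
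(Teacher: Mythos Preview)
Your proposal is correct and follows essentially the same route as the paper: reduce a gap between $Pr[W_{\textit{GenPuz}}]$ and $Pr[W_{\textit{DL}}]$ to a PRG/PRNG distinguisher that feeds the challenge string through the rejection-sampling loop, hands $(p,g,b)$ to $\mathcal{A}$, and outputs~$1$ iff $\mathcal{A}$ returns a valid discrete log. The paper packages the argument slightly differently---it first isolates Claim~\ref{lemma_prg} (that \fun{GenPuz} itself is a PRG, absorbing your iteration-cap and Blum--Micali reasoning) and then does a one-line reduction on the final output $b$---whereas you keep the underlying \fun{PRNG} stream as the distinguishing target and spell out the hybrid, the loop-length tail bound, and the $1/(p-1)$ slack for $b=1$ explicitly; the substance is the same.
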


  We first show the following claim:
  

  \begin{claim}
  	$\fun{GenPuz}(\lambdam, seed)$ is a pseudo random generator (PRG).
  	\label{lemma_prg}
  \end{claim}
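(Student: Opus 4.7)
The plan is to reduce indistinguishability of \fun{GenPuz}'s output from uniform to the security of the underlying \fun{PRNG}, with the Blum--Micali composition handling the rejection loop. Since $(p,g)$ is a deterministic function of $\lambdam$ (fetched from the difficulty table) and is supplied as part of the input/public parameters, the only randomized component of the output is $b$; a PRG distinguisher for \fun{GenPuz} is therefore equivalent to a distinguisher for the sampled $b$ against the uniform distribution on $[2,p-1]$.

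First I would define the ideal experiment in which the rejection loop draws its candidates from a truly uniform bit stream rather than from \fun{PRNG}$(\textit{seed},\cdot,\cdot)$. In this ideal experiment each $b_\textit{candidate}$ is uniform in $\{0,1\}^{\lambdam}$, so conditioned on acceptance $b$ is exactly uniform on $[2,p-1]=\mathcal{B}(p)$, and hence the ideal output is distributed identically to a fresh sample from $Z_p^*$. It remains to bound the statistical/computational gap between the real experiment and this ideal.

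Second I would bound the number of loop iterations. Because $p$ is a $\lambdam$-bit safe prime, a uniformly random $\lambdam$-bit string falls in $[2,p-1]$ with probability at least (roughly) $1/2-2^{-\lambdam+1}$, so for any polynomial $q(\lambdah)$ the probability that the loop runs more than $q(\lambdah)$ times is negligible. Fix such a polynomial bound $n=n(\lambdah)$ and truncate the experiment at $n$ iterations; the truncation adds only a negligible statistical error. Inside the truncated experiment the total number of pseudo-random bits consumed from \fun{PRNG} is at most $n\cdot\lambdam$, i.e., polynomial.

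Third, I would invoke the Blum--Micali $n$-wise sequential composition stated earlier (equation~\eqref{eq:blum_micali}) to argue that this length-expanded stream is itself a secure PRG with advantage at most $n\cdot\varepsilon_{\textit{PRG}}$, which is negligible in $\lambdah$. A standard reduction then converts any PPT distinguisher $\mathcal{D}$ between the real \fun{GenPuz} output and a uniform sample in $Z_p^*$ into a PPT distinguisher for this expanded PRG: on input a bit stream, simulate the rejection loop exactly as in \fun{GenPuz} (using at most $n$ blocks of $\lambdam$ bits) and feed $(p,g,b)$ to $\mathcal{D}$. If the stream is truly random we are in the ideal experiment (output uniform on $\mathcal{B}(p)$); if it is pseudo-random we are in the real experiment; so $\mathcal{D}$'s advantage is at most $n\cdot\varepsilon_{\textit{PRG}}+\varepsilon_\textit{trunc}$, both negligible in $\lambdah$.

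The only delicate point, and the step I expect to be the main obstacle, is handling the rejection loop cleanly: one must argue that conditioning on acceptance does not blow up the distinguishing advantage, and that the (data-dependent) stopping time does not allow a distinguisher to side-channel information about \fun{PRNG}. Both are handled by first truncating to a worst-case polynomial number of iterations and only then invoking PRG security on a fixed-length expansion, which is the standard way to combine rejection sampling with PRG-based randomness.
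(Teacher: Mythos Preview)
Your proposal is correct and follows essentially the same approach as the paper: observe that $(p,g)$ is deterministic so only $b$ matters, bound the rejection loop to polynomially many iterations (acceptance probability $\approx 1/2$ per round), and then invoke the PRNG's security on a polynomial-length expansion. The paper's argument is considerably more terse---it simply asserts that a polynomial stream suffices and that \fun{GenPuz} ``inherits'' the advantage $\varepsilon_{\textit{PRNG}}$---whereas you spell out the ideal/real hybrid, the truncation step, and the explicit reduction; but these are exactly the details one would fill in to make the paper's sketch rigorous, not a different route.
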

  
  $\fun{GenPuz}(\lambdam, seed)$ has two logical elements. One element takes the random seed value and expands it to a longer pseudo random stream using a PRNG. The second element takes the expanded pseudo random stream and uses it to produce a pseudo random $b$. 
  
  Since we assume that the PRNG is secure for a polynomial length output, it remains to prove that a polynomial length pseudo random bit stream is sufficient.
  
  The function $\fun{GenPuz}(\lambdam, seed)$ returns deterministic values for $p$ and $g$ per $\lambdam$. The value of $b$ is chosen using iteration sampling from the PRNG output, trying the next $\lambdam$ bits in each iteration. The value of $b$ must be in the range $[2,p-1]$ and $p$ is $\lambdam$ bits long with the most significant bit set to one, so the probability to get a number in the correct range in each iteration is more than $1/2$. Thus, the probability that $\fun{GenPuz}(\cdot)$ would need $k$ iterations drops exponentially with $k$ and a polynomial length output of PRNG will be sufficient for the generation of $b$.
  
  The PRG $\fun{GenPuz}(\lambdam, seed)$ inherits its advantage from the underlying PRNG and has advantage $\varepsilon_{PRNG}$.
  
  Having shown that $\fun{GenPuz}(\lambdam, seed)$ provides a pseudo random $b$, it follows that when we use this parameter as a part of a DL puzzle, the puzzle is almost as hard as one from truly a random source.
  
  We defer the details of Lemma \ref{lemma:hard_puzzle}'s proof to Appendix \ref{appsec:puzzle hardness proof}.
  
	\subsubsection{Multiple Seed Advantage} \label{sec:correctness multi seed}
A coordinator can create many different \textit{seed} values by changing a nonce (if she is also a participant) or by changing the order of nonces in the Merkle tree (Figure \ref{commit protocol} line \refc{line:gen merkle tree}).
If she has a set of preferred puzzles that she can solve efficiently, she can try to receive them through generation of different seed values. 
We prove that a PPT adversarial coordinator using \alg cannot gain an advantage compared to the DL problem from generating a polynomial number of seeds, except with a negligible probability.

We formalize this requirement with the game \gamee{DLMulti-Seed}{\alg}:

\mytextbox{
\begin{itemize}
	\item The challenger chooses a polynomial set of random seeds $\{s_1,...,s_k|s_i \randselect \{0,1\}^{\lambdah}\}$ and sends them to the adversary.
	\item The adversary outputs value $a$.
\end{itemize}

We say that the adversary wins the \gamee{DLMulti-Seed}{\alg} game if she outputs~$a$ such that~$\exists i : b_i=g_i^a \text{ mod } p_i, (p_i, g_i, b_i) \gets \fun{GenPuz}(\lambdam, s_i)$.
We denote this event by $W_{DLMulti-Seed}$.
The adversary's advantage in the game is
\begin{equation}
\text{DLMulti-Seed}\textsf{adv}[\lambdam, \mathcal{A}]\triangleq \left|Pr[W_\textit{DLMulti-Seed}]-Pr[W_\textit{DL}]\right|.\nonumber
\end{equation}
}

\begin{lemma}
	For any PPT adversary \Ad playing the \gamee{DLMulti-Seed}{\alg} game the quantity $\text{DLMulti-Seed}\textsf{adv}[\lambdam, \mathcal{A}]$ is negligible in $\lambdam$.
	\label{lemma:alg hard as DL}
\end{lemma}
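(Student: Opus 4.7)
The plan is to reduce the Multi-Seed advantage to the Multi-DL assumption (Assumption~\ref{def:multi dl assumption}) by way of a hybrid argument, using Lemma~\ref{lemma:hard_puzzle} to swap each \fun{GenPuz}-generated puzzle for a fresh random DL puzzle at negligible cost per swap. Intuitively, once all $k$ puzzles presented to $\Ad$ are genuine random DL instances, the game collapses to exactly the one-out-of-$k$ DL game on which the Multi-DL assumption imposes negligible advantage over a single DL instance.

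Concretely, I would define hybrids $H_0, H_1, \ldots, H_k$, where in $H_j$ the challenger replaces the first $j$ of the $k$ puzzles~$(p_i,g_i,b_i) \gets \fun{GenPuz}(\lambdam, s_i)$ with puzzles $(p,g,b_i')$ in which~$b_i'$ is drawn uniformly from $\mathcal{B}(p)$ (equivalently, $b_i' = g^{a_i'}$ for a uniformly random~$a_i' \in Z_p^*$). Thus $H_0$ is exactly \gamee{DLMulti-Seed}{\alg}, and $H_k$ is the one-out-of-$k$ random DL game underlying the Multi-DL assumption. In each hybrid, $\Ad$ wins if her output $a$ solves any of the $k$ puzzles.

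For the step from $H_j$ to $H_{j+1}$, I would construct a distinguisher~$\mathcal{D}$ for the single-puzzle game of Lemma~\ref{lemma:hard_puzzle}: $\mathcal{D}$ receives one challenge puzzle, embeds it in slot $j+1$, samples the first~$j$ slots as uniform random DL puzzles and the last~$k-j-1$ slots by invoking \fun{GenPuz} on fresh seeds, hands all $k$ puzzles to $\Ad$, and checks in polynomial time whether $\Ad$'s output $a$ solves any of the $k$ puzzles. The view of $\Ad$ in $H_j$ versus $H_{j+1}$ differs only in whether the embedded slot is \fun{GenPuz}-generated or truly random, so $|\Pr[W_{H_j}] - \Pr[W_{H_{j+1}}]| \leq \text{DLGenPuz}\textsf{adv}(\mathcal{D}, \lambdam)$, which is negligible by Lemma~\ref{lemma:hard_puzzle}. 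Summing the telescoping differences gives
\begin{equation}
\bigl|\Pr[W_{\textit{DLMulti-Seed}}] - \Pr[W_{H_k}]\bigr| \leq k \cdot \text{DLGenPuz}\textsf{adv}(\mathcal{D}, \lambdam),\nonumber
\end{equation}
which remains negligible since $k$ is polynomial in $\lambdam$. Finally, Assumption~\ref{def:multi dl assumption} yields $|\Pr[W_{H_k}] - \Pr[W_{\textit{DL}}]| \leq \textit{negl}(\lambdam)$, and combining the two bounds via the triangle inequality proves the claim.

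The main obstacle is making the hybrid reduction clean: I need to argue that the distinguisher $\mathcal{D}$ truly simulates $H_j$ or $H_{j+1}$ exactly, which requires the non-embedded slots to be sampled using their native distributions (uniform DL puzzles for slots $1,\ldots,j$ and \fun{GenPuz} outputs on fresh independent seeds for the remaining slots), and that the winning condition ``$a$ solves some puzzle'' is efficiently checkable so $\mathcal{D}$ stays PPT. A secondary subtlety is that Lemma~\ref{lemma:hard_puzzle} is a closeness statement between \emph{solving probabilities} rather than an indistinguishability statement on the puzzle distributions themselves; fortunately, the winning event in each hybrid is phrased in exactly the same ``solving'' form, so the bound transfers directly slot by slot without needing a stronger indistinguishability hypothesis.
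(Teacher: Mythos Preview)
Your approach differs substantially from the paper's. The paper argues directly from the adversary's polynomial memory bound: it bounds the probability that any of the $k$ generated puzzles coincides with one of the (at most polynomially many) puzzles for which~$\Ad$ could have cached a solution, obtaining a term~$\varepsilon_{\textit{multi-seed}}$, and then adds the single-puzzle PRNG advantage and invokes Assumption~\ref{def:multi dl assumption}. Your hybrid-then-Multi-DL route is the more standard cryptographic decomposition and avoids the somewhat ad~hoc memory argument; it also makes the role of Assumption~\ref{def:multi dl assumption} cleaner, isolating it to the final $H_k$-versus-single-DL step.

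That said, your dismissal of the ``secondary subtlety'' is where the argument actually breaks. Lemma~\ref{lemma:hard_puzzle} bounds $|\Pr[\Ad' \text{ solves the challenge puzzle}]|$ across the two worlds, for an adversary $\Ad'$ that \emph{outputs a candidate exponent}. Your distinguisher $\mathcal{D}$, as you describe it, outputs a bit indicating whether $\Ad$ solved \emph{any} of the $k$ slots; that is not a DLGenPuz adversary. If instead $\mathcal{D}$ forwards $\Ad$'s exponent $a$, then Lemma~\ref{lemma:hard_puzzle} controls only $|\Pr[a \text{ solves slot } j{+}1 \text{ in } H_j] - \Pr[a \text{ solves slot } j{+}1 \text{ in } H_{j+1}]|$, which is not the quantity $|\Pr[W_{H_j}] - \Pr[W_{H_{j+1}}]|$ you need, because the winning event ``$a$ solves some slot'' depends on all $k$ puzzles and hence on the distribution of slot $j{+}1$ even when $a$ happens to solve a different slot. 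The fix is to invoke Claim~\ref{lemma_prg} (that \fun{GenPuz} is a PRG) rather than Lemma~\ref{lemma:hard_puzzle}: computational indistinguishability of the slot-$(j{+}1)$ puzzle distribution lets you bound the difference in \emph{any} efficiently checkable event, including $W_{H_j}$. Note this also forces you to be explicit that in the hybrids $\Ad$ is handed puzzles rather than seeds (the DLMulti-Seed game hands seeds), and to argue why that passage is without loss---a point your write-up currently elides.
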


\begin{proof}
	The number of possible puzzles per~$\lambdam$ is the size of the group~$\mathcal{B}(p)$ that~$b$ is chosen from. The minimal number is~$|\mathcal{B}(p)| = p_\textit{min}-2 = 2^{\lambdam-1}-2$. 
	The probability to generate a specific puzzle is~$\frac{1}{2^{\lambdam-1}-2}$. 
	Under the assumption that~\Ad can store a polynomial number in~$\lambdam$ of computations, the probability to generate a puzzle that~\Ad has in her memory is smaller than~$k \cdot \frac{poly(\lambdam)}{2^{\lambdam-1}-2}$, which we denote~$\varepsilon_{\textit{multi-seed}}$.
	This probability is negligible in~$\lambdam$.
	
	We proved that the advantage that~\Ad has in solving a puzzle generated by \fun{GenPuz}($\cdot$) function is~$\varepsilon_{PRNG}$.
	The total advantage~$\text{DLMulti-Seed}\textsf{adv}[\lambdam, \mathcal{A}]$ is the probability to generate a pre-computed stored puzzle and if this strategy fails, to solve the puzzle, taking into account the PRG advantage.
	Since we assume that~\Ad does not have advantage in parallel computation of many puzzles over a computation of a single puzzle (Assumption \ref{def:multi dl assumption}), her advantage is
	$\text{DLMulti-Seed}\textsf{adv}[\lambdam, \mathcal{A}] \leq \varepsilon_{\textit{multi-seed}} + \varepsilon_{\textit{PRNG}}$,
	which is negligible in $\lambdam$.
\end{proof}

		\subsubsection{Timed reveal}\label{sec:time lock conclusion}
\begin{lemma}
	\alg provides Time-Lock.
	\label{lemma:algorithm time lock}
\end{lemma}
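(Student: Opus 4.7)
The plan is to combine the two puzzle-hardness lemmas already established in Sections \ref{sec:correctness genpuz} and \ref{sec:correctness multi seed} with the calibration of the difficulty table, and to argue that the commit protocol structure forces the coordinator to actually solve a puzzle of the intended hardness before she can reveal any honest commitment.

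First, I would fix an honest participant $P_h \in P_H$ and trace what the adversary can learn about the seed \emph{before} being forced to commit. Since $P_h$ samples $\textit{nonce}_h \randselect \{0,1\}^{\lambdah}$ independently and sends it only after the coordinator has committed to the structure of $T_\textit{seed}$, the coordinator's remaining freedom in producing the seed is exactly the ability to shuffle the leaves of the Merkle tree (and, if also impersonating all other participants, to tune their nonces). This yields at most a polynomial-in-$\lambdah$ set of candidate seeds $\{s_1,\dots,s_k\}$. By Lemma \ref{lemma:alg hard as DL} (the \gamee{DLMulti-Seed}{\alg} bound), the coordinator's advantage in finding a seed whose puzzle she can solve faster than a uniformly random DL instance is bounded by $\varepsilon_{\textit{multi-seed}}+\varepsilon_{PRNG}$, which is negligible in $\lambdam$.

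Next, I would invoke Lemma \ref{lemma:hard_puzzle}: once the seed is fixed, the puzzle output by $\fun{GenPuz}(\lambdam,\textit{seed})$ is computationally indistinguishable from a uniformly random DL instance on $(p,g)$, so the expected solution time under any PPT strategy differs from that of a true random DL puzzle by at most $\text{DLGenPuz}\textsf{adv}(\Ad,\lambdam)$. By construction of the difficulty table, the expected time to solve a random DL puzzle with parameter $\lambdam$ using the best known algorithm is exactly $\tau$. Combining with the multi-seed bound, the expected wall-clock time for the coordinator to produce $Sk_\textit{TL}$ differs from $\tau$ by a negligible quantity.

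Finally, I would tie this to the reveal of $P_h$'s message. Verification (Algorithm \ref{alg:verify_reveal}, line \ref{line:verify_sk}) requires the output to include $Sk_\textit{TL}$ matching $Pk_\textit{TL}=\fun{GenPuz}(\lambdam,\textit{seed})$; by Lemma \ref{lemma:algorithm soundness} (Soundness), the coordinator cannot output $m_h$ inside any other valid bundle. Hence $t_\textit{reveal}^h$ is lower-bounded by the time needed to solve a \fun{GenPuz} instance, which in expectation is $\tau$ up to negligible terms. Measuring $t_\textit{reveal}^h$ from the start of the commit phase only increases this quantity, so Definition \ref{def:pmptc time lock} is satisfied. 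The main obstacle I anticipate is being careful that ``mean reveal time at least $\tau$'' matches the calibration of the table, i.e.\ showing that the adversary cannot amortize work across multiple honest commitments; this is handled implicitly because a single solution to the one shared puzzle reveals all commitments simultaneously, so the per-commitment mean equals the single-puzzle expected solution time, which is $\tau$ by assumption.
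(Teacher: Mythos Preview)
Your proposal follows essentially the same route as the paper's proof: argue that the honest participant's random nonce (verified via the Merkle proof in line~\refp{line:verify_mp}) forces the seed to be pseudorandom, invoke Lemma~\ref{lemma:alg hard as DL} to bound the multi-seed advantage, and then appeal to the calibration of~$\lambdam$ in the difficulty table to obtain mean solving time~$\tau$. The paper's own proof is terser and does not explicitly invoke Lemma~\ref{lemma:hard_puzzle} or Lemma~\ref{lemma:algorithm soundness} as you do, but the skeleton is the same.

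One factual slip to correct: you write that $P_h$ ``sends [her nonce] only after the coordinator has committed to the structure of $T_\textit{seed}$.'' In the commit protocol (Figure~\ref{commit protocol}) the order is reversed: each participant sends $\textit{nonce}_i$ first (line~\refp{line:gen nonce}), and only then does the coordinator build $T_\textit{seed}$ (line~\refc{line:gen merkle tree}). So the coordinator does see $\textit{nonce}_h$ before fixing the tree and can adaptively choose the other leaves and their order. This does not break your argument---the bound still holds because the coordinator is PPT and can therefore compute only polynomially many candidate seeds, each of which still depends on the fresh randomness $\textit{nonce}_h$, so Lemma~\ref{lemma:alg hard as DL} applies exactly as you intend. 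Just drop the incorrect timing claim and phrase the polynomial bound as a consequence of the coordinator's running-time limit rather than of any pre-commitment to the tree structure.
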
		
	
\begin{proof}
Lemma \ref{lemma:alg hard as DL} shows that a PPT adversary has only a negligible advantage solving the puzzle in \alg compared to solving a single DL puzzle, given a random seed value.

The seed value in \alg is pseudo random since each participant gives a random nonce and the coordinator gathers all nonces to a Merkle tree structure to generate a seed for \fun{GenPuz}($\lambdam, \textit{seed}$).
Given that at least one participant is following the protocol, her \textit{nonce} is truly random (Figure \ref{commit protocol} line p\ref{line:gen nonce}) and she verifies that this \textit{nonce} is a part of the seed (Figure \ref{commit protocol} line p\ref{line:verify_mp}). Thus and due to the Merkle tree properties, the generated seed is pseudo random.
 
The requirement that $\tau$ is the mean time it takes to solve a DL puzzle defines the value of~$\lambdam$. 
A coordinator can begin to solve the puzzle once she has the \textit{seed} value~(line \refc{line:seed is ready}), and starting from this moment it will take her a mean value of~$\tau$ to solve the puzzle.
Hence, \alg respects definition \ref{def:pmptc time lock} and provides Time-Lock.
\end{proof}

Note that in the the proof for Time-Lock we haven't used the number of participants~$N$, thus
\begin{corollary}
	\alg's mean reveal time $\tau$ is independent of the participant count $N$.
\end{corollary}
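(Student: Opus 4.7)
The plan is to show, essentially by inspection, that the quantities that determined the mean reveal time in the proof of Lemma~\ref{lemma:algorithm time lock} are all parameterized by the difficulty parameter~$\lambdam$ (equivalently, by the target delay~$\tau$) and not by the participant count~$N$. First I would revisit the Time-Lock argument and note that the mean reveal delay comes from solving a single DL-style puzzle produced by \fun{GenPuz}($\lambdam$,\textit{seed}), whose hardness is matched to~$\tau$ through the difficulty table indexed by~$\lambdam$, a choice the protocol designer makes independently of~$N$.

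Next I would walk through each ingredient used in that proof and verify no hidden~$N$-dependence. The negligible advantage bound of Lemma~\ref{lemma:alg hard as DL} is phrased in~$\lambdam$ and the polynomial seed budget~$k$ (which reflects the adversary's precomputation capacity, not the number of honest participants). The pseudo randomness of \textit{seed} only requires that at least one honest nonce be incorporated into the Merkle tree, not that there be~$N$ nonces, so the Merkle aggregation step scales with~$N$ only in its computational cost and not in any quantity entering the puzzle hardness. Finally, once \textit{seed} is available the coordinator can immediately begin her DL search, and under the Multi-DL assumption the expected solution time is determined entirely by~$\lambdam$.

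Concluding, since no step in Lemma~\ref{lemma:algorithm time lock} introduced an~$N$-dependent factor into the mean reveal delay, the mean reveal time~$\tau$ is independent of~$N$, giving the corollary directly from the observation flagged in the preceding remark. The main, and indeed only, subtlety is bookkeeping: confirming that the per-participant work layered around the DL solution, namely Merkle-tree aggregation during commit and ElGamal decryption of each ciphertext after~$Sk_\textit{TL}$ is recovered, consists of polynomial-time operations that are negligible relative to the moderately-hard DL solve and therefore do not perturb the mean.
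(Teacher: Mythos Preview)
Your proposal is correct and follows essentially the same approach as the paper: the paper's entire argument for the corollary is the one-line observation preceding it, namely that the proof of Lemma~\ref{lemma:algorithm time lock} nowhere used the participant count~$N$. Your write-up simply expands this same inspection in more detail than the paper does.
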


	\paragraph{Practical reveal time}
The DL problem is an active research area \cite{adrian2015imperfect, corrigan2018discrete, kleinjung2017computation, joux2013new, commeine2006algorithm}, so we are unable to calculate the reveal time distribution. There are many known algorithms, each algorithm has a different CDF of the time it takes to solve it. In practice, Index calculus \cite{joux2013new} and Number Field Sieve \cite{commeine2006algorithm} have a few precomputation steps, involving some randomness, until finally the discrete logarithm is calculated. So the time distribution is actually closer to the mean value.

		\subsection{Ensured Output}\label{sec:correctness ensured output}
We show that the output of a coordinator that is following the protocol is valid and then that the mean time it takes her to calculate the output is~$\tau$.
		
\begin{lemma}
	The output of a coordinator following the protocol \alg is valid.
\label{lemma:ensured output valid}
\end{lemma}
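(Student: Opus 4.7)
The plan is to verify that each of the four checks performed by \fun{VerifyOutput} (Algorithm \ref{alg:verify_reveal}) succeeds on the tuple $(\textit{seed}, Sk_\textit{TL}, L_s, L_m)$ produced by an honest coordinator. I would proceed step by step through the function, following the flow of data from the commit phase through \fun{Reveal} and showing that every assertion survives.

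First, I would show that the verifier's $Pk_\textit{TL}$ (line \ref{line:verify pk from seed}) equals the key used during commit. Both the honest participants (Figure \ref{commit protocol} line \refp{line:gen puzzle}) and the verifier invoke the deterministic function \fun{GenPuz} on the same $\lambdam$ and the same \textit{seed}, so they obtain identical $(p, g, b)$. Next, for the DL check at line \ref{line:verify_sk}, I would use that \fun{Reveal} calls \fun{find\_ElGamal\_Sk}$(Pk_\textit{TL})$, which by specification returns $Sk_\textit{TL}$ with $b \equiv g^{Sk_\textit{TL}} \pmod{p}$; since $g$ generates $Z_p^{\ast}$ and $b \in Z_p^{\ast}$, such a logarithm exists and an honest coordinator, who is assumed to execute the algorithm to completion, indeed finds it.

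Third, I would argue that the list $L_H$ reconstructed in lines \ref{line:verify lm to Lh start}--\ref{line:verify lm to Lh end} of \fun{VerifyOutput} matches the commitment list that was signed in the commit phase. By correctness of ElGamal decryption, $\fun{dec}_{Sk_\textit{TL}}(C_{m_i}) = m_i \parallel r_i$, which is exactly the pair stored as the $i$-th entry of $L_m$ by \fun{Reveal} (lines \refc{line:reveal cm to m start}--\refc{line:reveal cm to m end}). Re-encrypting this plaintext under $Pk_\textit{TL}$ (interpreting the embedded $r_i$ as the encryption randomness) recovers $C_{m_i}$, and hashing recovers $H_{C_{m_i}}$. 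Because the coordinator, in the commit phase, only inserts $(C_{m_i}, i)$ into $L_C$ after asserting that $L_H[j] = (\fun{hash}(C_{m_i}), i)$ (line \refc{line:verify p cipher}), the reconstructed $L_H$ in \fun{VerifyOutput} agrees entry-for-entry with the $L_H$ that was distributed and signed.

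Finally, for the signature check at line \ref{line:verify_sig}, I would use that the coordinator only appends $S_{L_H,i}$ to $L_s$ after verifying $\fun{verifySig}(Pk_i, L_H \parallel \textit{seed}, S_{L_H,i})$ in line \refc{line:verify p sig}. Since the reconstructed $L_H$ and the committed $\textit{seed}$ are the same strings that the coordinator verified against, the identical check inside \fun{VerifyOutput} also succeeds, so the function returns TRUE. The main subtlety to handle carefully is step three: one must be explicit that the $r_i$ included in the plaintext serves as the ElGamal randomness so that re-encryption is deterministic and recovers the original ciphertext bit-for-bit; everything else is straightforward tracing of the honest execution.
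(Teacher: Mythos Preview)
Your proposal is correct and follows essentially the same approach as the paper: both trace through the checks of \fun{VerifyOutput} in order and argue each one passes by matching it to an action the honest coordinator already performed (or verified) during commit and reveal. Your treatment is in fact slightly more careful than the paper's, which simply asserts that encryption in \fun{VerifyOutput} is the ``exact opposite'' of decryption in \fun{Reveal}; you make explicit the subtlety that the embedded $r_i$ must be reused as the ElGamal randomness so that re-encryption deterministically reproduces $C_{m_i}$.
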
	

\begin{proof}
	We will go over the lines of \fun{VerifyOutput}($\cdot$) (Algorithm~\ref{alg:verify_reveal}) and check where the validation of the output can fail.
	In line \ref{line:verify pk from seed} the function calculates the value~$Pk$ using $\fun{GenPuz}(\lambdam, \textit{seed})$, this is done exactly as the coordinator does it in the commit protocol (Figure \ref{commit protocol} line \refc{line:gen puzzle}). In line \ref{line:verify_sk} the DL puzzle solution~$Sk_\textit{TL}$ is validated. Since the coordinator solves the puzzle correctly~(Algorithm \ref{alg:coordinator reveal} line \ref{line:find sk}), the solution is valid.
	Then the function goes over the list~$L_m$, it encrypts every message~$m$ in the list and creates a list~$L_C$ of tuples~$(C_m,i)$ ordered in the same order as~$L_m$.
	In the reveal phase, the coordinator does the exact opposite calculation from the list~$L_C$ to the list~$L_m$ (Algorithm \ref{alg:coordinator reveal} lines \ref{line:reveal cm to m start}-\ref{line:reveal cm to m end}). Therefore the list~$L_C$ is the same in the verification and in the reveal phase.
	
	The function calculates the hash~$H_{C_m}$ of the ciphertext~$C_m$ and generates a list~$L_H$ of tuples~$(H_{C_m}, i)$ ordered in the same order as~$L_m$ (Algorithm \ref{alg:verify_reveal} lines \ref{line:verify hm from cm}-\ref{line:verify create lh}).
	In the commit phase, the coordinator verifies that~$P_i$ had sent~$C_m$ that is a pre-image of~$H_{C_m}$ (Figure \ref{commit protocol} line \refc{line:verify p cipher}), so the list~$L_H$ is the same in the verification and in the commit phase.
	
	Once it has the list~$L_H$, it verifies that each signature in~$L_s[j]$ is valid as a signature of the participant that committed the message in slot~$j$ of the list over the content~$L_H \parallel \textit{seed}$.
	The coordinator also verifies that this signature is valid in the commit phase (Figure \ref{commit protocol} \cref{line:verify p sig}).
	Therefore, this check must pass.
	
	We  proved that if the coordinator follows the protocol then the output is valid.
\end{proof}

\begin{lemma}
	The mean time it takes for a coordinator following the protocol \alg to calculate the output is $\tau$.
\label{lemma:ensured output time}
\end{lemma}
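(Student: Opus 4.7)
The plan is to inspect the three steps executed by the honest coordinator in Algorithm~\ref{alg:coordinator reveal} and argue that the DL puzzle solving dominates the runtime, yielding the required mean~$\tau$.

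First, I would observe that the coordinator's \fun{Reveal} function performs: (i) a call to $\fun{GenPuz}(\lambdam, \textit{seed})$ to recover $Pk_\textit{TL}$; (ii) a single call to $\fun{find\_ElGamal\_Sk}(Pk_\textit{TL})$ to recover $Sk_\textit{TL}$; and (iii) one ElGamal decryption per ciphertext in $L_c$. By the analysis in Claim~\ref{lemma_prg}, $\fun{GenPuz}$ runs in expected polynomial time in $\lambdam$ because each iteration of the $b$-sampling loop succeeds with probability above $1/2$ and the prime/generator lookup is a table access. The per-ciphertext ElGamal decryption is also polynomial in~$\lambdam$, and there are only $N$ of them, where $N$ is polynomial in $\lambdah$. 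Hence steps~(i) and~(iii) contribute a total cost polynomial in the security parameters and independent of the puzzle hardness.

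Next, I would invoke the definition of the difficulty table: $\lambdam$ is tuned precisely so that the DL puzzle produced by $\fun{GenPuz}$ has expected solve time $\tau$ under the best known DL algorithm. Since the coordinator is honest, she receives a seed that was produced by the Merkle-tree aggregation of participant nonces, and by the argument used in Lemma~\ref{lemma:algorithm time lock} the resulting puzzle is distributed indistinguishably from a fresh random DL puzzle of size $\lambdam$ (up to the negligible PRG advantage). Therefore the expected cost of step~(ii) is $\tau$ up to a negligible multiplicative/additive correction.

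Adding the three costs gives a total expected runtime of $\tau + \textit{poly}(\lambdam, \lambdah)$, and since $\tau$ is by construction the target delay and the polynomial overhead is absorbed into the estimate of $\tau$ (the difficulty table can be calibrated so that these fixed costs are included), the mean reveal time is~$\tau$. The one subtle point, and the main obstacle I anticipate, is arguing that the seed produced by the honest-coordinator path is distributed so as to yield the same expected DL hardness as a uniformly random puzzle; I would handle this by reusing Lemma~\ref{lemma:alg hard as DL} applied with $k=1$ to conclude that the coordinator cannot (and does not need to) exploit seed manipulation, and that the honest path inherits the mean $\tau$ bound. Combining Lemmas~\ref{lemma:ensured output valid} and this runtime bound then yields Ensured Output.
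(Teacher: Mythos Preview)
Your proposal is correct and follows essentially the same approach as the paper: invoke Lemma~\ref{lemma:alg hard as DL} to conclude that the \fun{GenPuz} puzzle is as hard as a random DL instance up to negligible advantage, then use the calibration of~$\lambdam$ in the difficulty table to obtain mean solve time~$\tau$. Your additional accounting for the polynomial cost of \fun{GenPuz} and the per-ciphertext decryptions is more explicit than the paper's very terse proof, which simply asserts that the puzzle-solving time dominates, but the core argument is identical.
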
	

\begin{proof}
	A coordinator following the protocol generates the puzzle using \fun{GenPuz}($\lambdam$, \textit{seed}) (Figure \ref{commit protocol} \cref{alg:generate_pk}).
	Due to lemma \ref{lemma:alg hard as DL}, the puzzle is as hard as a DL puzzle, except for a negligible advantage. Since $\lambdam$ value is set to determine that solving the puzzle takes mean value $\tau$ then this is the time the it takes for the coordinator to solve the puzzle and calculate the output.
\end{proof}

\begin{corollary}
	\alg provides Ensured Output.
\label{corollary:algorithm ensured output}
\end{corollary}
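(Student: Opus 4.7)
The plan is to observe that this corollary is a direct combination of the two lemmas immediately preceding it, matched against the Ensured Output definition. Recall that Definition of Ensured Output has exactly two ingredients: once the commit phase has ended, a coordinator following the protocol (i) produces a \emph{valid} output, and (ii) does so after \emph{mean time $\tau$}. Lemma \ref{lemma:ensured output valid} supplies (i) and Lemma \ref{lemma:ensured output time} supplies (ii), so there is nothing further to argue.

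Concretely, I would write a one-line proof: assume the coordinator follows \alg. By Lemma \ref{lemma:ensured output valid}, \fun{VerifyOutput}($\cdot$) returns TRUE on the coordinator's output, so the output is valid as defined in Section \ref{sec:model}. By Lemma \ref{lemma:ensured output time}, the expected time from the end of the commit phase until that output is produced equals $\tau$ (up to a negligible advantage already absorbed into the hardness argument via Lemma \ref{lemma:alg hard as DL}). Both conditions of the Ensured Output definition are therefore satisfied, completing the proof.

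There is essentially no obstacle here; the work has already been done. The only subtlety worth flagging explicitly is that Lemma \ref{lemma:ensured output time}'s statement talks about the total time to produce the output via \fun{GenPuz} and the DL solution, while the definition starts the clock at the end of the commit phase. These match because in \alg the coordinator can begin solving the puzzle as soon as \textit{seed} is fixed (line \refc{line:seed is ready}), which happens strictly before the end of the commit phase, so the mean reveal time measured from the end of the commit phase is at most~$\tau$, and the matching lower bound follows from the puzzle hardness already invoked in Lemma \ref{lemma:ensured output time}. Hence the proof of the corollary reduces to citing the two lemmas and noting this alignment.
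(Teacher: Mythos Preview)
Your proposal is correct and matches the paper's own proof, which is the one-line statement that the corollary follows directly from Lemma~\ref{lemma:ensured output valid} and Lemma~\ref{lemma:ensured output time}. Your additional remark about the clock starting at the end of the commit phase versus when \textit{seed} is fixed is a harmless clarification not present in the paper, but it does not change the argument.
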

The proof follows directly from lemma \ref{lemma:ensured output time} and lemma \ref{lemma:ensured output valid}.

	\subsection{\alg Correctness}
In summary, following from lemma \ref{lemma:algorithm soundness}, corollary \ref{corollary:algorithm fr resistance}, lemma \ref{lemma:algorithm time lock} and corollary \ref{corollary:algorithm ensured output}:
\begin{theorem}
	\alg algorithm solves \problem.
\end{theorem}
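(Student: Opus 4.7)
The plan is to assemble the final theorem directly from the four properties that have already been established in the preceding subsections, since Definition~\ref{def:pmptc_protocol} characterizes a \problem solution as exactly a tuple (\emph{commit}, \emph{reveal}, \emph{verify}) satisfying Soundness, FR-Resistance, Time-Lock, and Ensured Output. So the proof is a short combinatorial wrap-up rather than any new argument.

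Concretely, I would first restate that \alg is specified as a commit protocol (Figure~\ref{commit protocol}), a reveal function (Algorithm~\ref{alg:coordinator reveal}), and a verify function (Algorithm~\ref{alg:verify_reveal}), so it has the structural form required by Definition~\ref{def:pmptc_protocol}. Then I would invoke the four results in turn: Lemma~\ref{lemma:algorithm soundness} yields Soundness; Corollary~\ref{corollary:algorithm fr resistance}, which itself follows from Lemmas~\ref{lemma:FOC} and~\ref{lemma:FMC}, yields FR-Resistance; Lemma~\ref{lemma:algorithm time lock} yields Time-Lock; and Corollary~\ref{corollary:algorithm ensured output}, following from Lemmas~\ref{lemma:ensured output valid} and~\ref{lemma:ensured output time}, yields Ensured Output.

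Since Definition~\ref{def:pmptc_protocol} requires nothing beyond the conjunction of these four properties, the conclusion that \alg solves \problem is immediate. There is no real obstacle here; the work was all done in Sections~\ref{sec:correctness soundness}--\ref{sec:correctness ensured output}, and the only care needed is to make sure the quantifier structure of each property (over PPT adversaries, security parameters $\lambdam$ and $\lambdah$, and participant sets with at least one honest party) matches exactly what Definition~\ref{def:pmptc_protocol} demands, which a quick side-by-side check confirms.
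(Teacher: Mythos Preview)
Your proposal is correct and matches the paper's own approach exactly: the paper simply states that the theorem follows from Lemma~\ref{lemma:algorithm soundness}, Corollary~\ref{corollary:algorithm fr resistance}, Lemma~\ref{lemma:algorithm time lock}, and Corollary~\ref{corollary:algorithm ensured output}, without any additional argument. If anything, your write-up is slightly more explicit in pointing to Definition~\ref{def:pmptc_protocol} and the structural tuple form, which is fine.
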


	\section{Precomputation Resistant DL} \label{sec:pcrdl} 

The time it takes to reveal the commitments with \alg is the time it takes to compute a DL puzzle solution.
The mean time to reveal the commitments depends on which algorithm is used for solving DL.
Progress in DL analysis~\cite{adrian2015imperfect} has shown that precomputation can reduce solution time. 
This does not violate the correctness of~\alg, since the difficulty can be increased to reach the desired~$\tau$. 
However, an accurate choice of~$\lambdam$ requires knowing whether the adversary has performed precomputation. 

We note, however, that in all algorithms we are aware of, including state of the art solutions like Index Calculus~\cite{adleman1979subexponential} and Number Field Sieve~\cite{adrian2015imperfect}, precomputation is only effective for a specific group and a specific generator. 

If this is the case in general, we can prevent precomputation by selecting both the group~$G$ and the generator~$g$ randomly when generating a puzzle. 
We make the assumption formal (Section~\ref{sec:pcrdl model}), describe the \pcrdlalg implementation (Section~\ref{sec:pcrdl implementation}) and prove that it is precomputation-resistant (Section~\ref{sec:pcrdl correctness}).

		\subsection{\pcrdl Model}\label{sec:pcrdl model}

A \emph{\pcrdl generator} is a function GenPuz. This function takes as input a security parameter $\lambdam$ and a random value \textit{seed} and produces as output a DL puzzle $(p,g,b)$.
		
We assume that precomputation of discrete logarithm is possible for a specific cyclic group $Z^{\ast}_p$ and a generator $g$.
We call this information a \emph{hint}.
More formally,
\begin{definition}[Precomputation Storage]
A PPT adversary with memory $M$ can store hints that can accelerate the computation of up to a polynomial number in $\lambdam$ of puzzles. 
\end{definition}
This assumption is conservative compared to the memory consumption required by Number Field Sieve, which is super polynomial in $\lambdam$ per hint \cite{adrian2015imperfect}.

We use the empirical assumption \cite{shoup2009computational} that the number of safe primes in interval $x$ is $2C_2\frac{x}{\log^2(x)}$, where $C_2 \approx 0.66$ is the \emph{twin prime constant} .

We require \emph{Precomputation-Resistance}. A coordinator using precomputation techniques to store hints in her memory does not have advantage over a coordinator that doesn't have memory to store hints.
We define this requirement with the following precomputation-resistance game \gamee{PC}{GenPuz}.

\mytextbox{
\begin{itemize}
	\item The challenger chooses a polynomial set of seeds $\{s_1,...,s_k|s_i \randselect \{0,1\}^{\lambdah}\}$ and sends them to the adversary.
	\item  The adversary outputs $a \in Z^{\ast}_P$.
\end{itemize}
We say that the adversary solves the puzzle if she outputs~$a$ such that~$\exists i : b_i=g_i^a \text{ mod } p_i, (p_i, g_i, b_i) \gets \fun{GenPuz}(\lambdam, s_i)$.
We denote by~$W_M$ the probability of an adversary with memory~$M$ and unbounded time for precomputations to solve the puzzle.
We denote by~$W_\textit{Mreset}^\textit{opt}$ the probability of an adversary using the best algorithm with memory~$M$ that is reset at the beginning of the game to solve the puzzle.
We define the advantage in this game by $\text{PC}\mathsf{adv}[\lambdam, \Ad]=|Pr[W_\textit{M}]-Pr[W_\textit{Mreset}^\textit{opt}]|$.
}

\begin{definition}[DL-PC Resistance]
	Let $\lambdam$ be a security parameter. The DL puzzle generated by $\textit{puz} \gets \text{GenPuz}(\lambdam, \textit{seed)}$ is a \pcrdl puzzle if for any PPT adversary \Ad with memory $M$ polynomial in $\lambdam$, the advantage  $\text{PC}\mathsf{adv}[\lambdam, \Ad]$ is negligible in $\lambdam$.
\label{def:pcrdl pc resistant}
\end{definition}

		\subsection{\pcrdlalg Implementation}\label{sec:pcrdl implementation}

We implement the function $\fun{PCRGenPuz}(\lambdam, \textit{seed})$ (algorithm \ref{alg:generate_pcrdl_pk}). This function takes as input the security parameter $\lambdam$ and a randomness \textit{seed} and outputs the three pseudo random parameters $(p, g, b)$ that represents the discrete logarithm puzzle.

As in  $\fun{GenPuz}(\lambdam, \textit{seed})$, the function initially expands the seed to a longer pseudo random bit stream with a pseudo random number generator (PRNG). But unlike \fun{GenPuz}($\cdot$), it chooses all three parameters $(p,g,b)$ using the pseudo random bit stream.

First (lines \ref{line:pcrgenpuz_p_search_start}-\ref{line:pcrgenpuz_p_search_end}), it chooses the prime number $p$. It constructs a $\lambdam$ bit prime candidate, $p_\textit{candidate}$ by setting the most significant bit to 1 and the other $\lambdam-1$ bits to the first $\lambdam-1$ bits from the pseudo random bit stream.
It checks if $p_\textit{candidate}$ is a safe prime with the probabilistic Miller-Rabin primality test \cite{rabin1980probabilistic} over two numbers, $p_\textit{candidate}$ and $\frac{p_\textit{candidate}-1}{2}$.
If the candidate is not a safe prime, it constructs a new candidate from the next $\lambdam-1$ pseudo random bits. It iterates until a it finds a safe prime.

Next, it searches for a generator of the group $Z^{\ast}_P$ (lines \ref{line:pcrgenpuz_g_search_start}-\ref{line:pcrgenpuz_g_search_end}). 
It constructs a candidate $g_\textit{candidate}$ from the next bits in the pseudo random bit stream and checks whether it is a generator using the standard algorithm for finding a generator in a cyclic group \cite{katz1996handbook}. 
It does this iteratively until it finds a generator.
Finally, it chooses the public key $b$ in a similar way from the group $Z^{\ast}_P$ (lines \ref{line:pcrgenpuz_b_search_start}-\ref{line:pcrgenpuz_b_search_end}).
The function returns the three chosen parameters $Pk_\textit{TL} \gets (p,g,b)$.

\begin{algorithm}[t]
	\begin{algorithmic}[1]
		\Function{PCRGenPuz}{$\lambdam, seed$}
		\State $j_\textit{bitnum} \gets 0$
		\State $p_\textit{candidate} \gets 0$
		\label{line:pcrgenpuz_p_search_start}
		\While {(not \fun{primality\_test}($p_\textit{candidate}$)) or (not \fun{primality\_test}($\frac{p_\textit{candidate}-1}{2})$)}
		\State $p_\textit{candidate} \gets 2^{\lambdam-1}+\fun{PRNG}(\textit{seed}, j_\textit{bitnum}, \lambdam-1)$
		\State $j_\textit{bitnum} \gets j_\textit{bitnum} + \lambdam-1$
		\EndWhile
		\label{line:pcrgenpuz_p_search_end}
		\State $g_\textit{candidate} \gets 0$
		\label{line:pcrgenpuz_g_search_start}
		\While {not \fun{is\_group\_generator}($p_\textit{candidate}, g_\textit{candidate}$)}
		\State $g_\textit{candidate} \gets \fun{PRNG}(\textit{seed}, j_\textit{bitnum}, \lambdam)$
		\State $j_\textit{bitnum} \gets j_\textit{bitnum} + \lambdam$
		\EndWhile
		\label{line:pcrgenpuz_g_search_end}
		\State $b_\textit{candidate} \gets 0$
		\label{line:pcrgenpuz_b_search_start}
		\While {$b_\textit{candidate} \notin [2,p-1]$}
		\State $b_\textit{candidate} \gets \fun{PRNG}(\textit{seed}, j_\textit{bitnumt}, \lambdam)$
		\State $j_\textit{bitnum} \gets j_\textit{bitnum} + \lambdam$
		\EndWhile
		\label{line:pcrgenpuz_b_search_end}
		\State \textbf{return} $(p_\textit{candidate}, g_\textit{candidate}, b_\textit{candidate})$
		\EndFunction
	\end{algorithmic}		
	\caption{Generate \pcrdl Puzzle}
	\label{alg:generate_pcrdl_pk}
\end{algorithm} 
 
 		\subsection{\pcrdlalg Security} \label{sec:pcrdl correctness}

 Similarly to $\fun{GenPuz}(\lambdam, \textit{seed})$, we first prove that $\fun{PCRGenPuz}(\lambdam, \textit{seed})$ is a PRG.
 It is easy to see that iteration sampling provides a uniform choice for~$(p, g, b)$ in their respective sets (cf. Appendix~\ref{appsec:iteration sampling}).
 It remains to show that a polynomial pseudo random stream is sufficient in order to select~$p$ and~$g$.

 \begin{lemma}
	The probability that the length of pseudo-random stream needed to generate a safe prime is more than polynomial in $\lambdam$ is negligible.
 	\label{lemma:prime_bits_poly_bounded}
 \end{lemma}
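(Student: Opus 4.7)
The plan is to show that the number of iterations of the $p$-selection loop in \fun{PCRGenPuz} is polynomial in $\lambdam$ with all but negligible probability, and since each iteration consumes only $\lambdam-1$ bits, this immediately bounds the pseudo-random stream length.

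First I would identify the sample space exactly. The candidate~$p_\textit{candidate}$ is constructed by fixing the top bit to~$1$ and drawing the remaining~$\lambdam-1$ bits from the PRNG. By the iteration-sampling uniformity argument invoked from Appendix~\ref{appsec:iteration sampling} (together with the PRG advantage~$\varepsilon_\textit{PRNG}$), up to a negligible distinguishing advantage we may treat each candidate as a uniformly random element of the interval~$I = [2^{\lambdam-1}, 2^{\lambdam})$, and successive candidates as independent.

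Next I would apply the empirical safe-prime density assumption. The number of safe primes in~$I$ is at least
\begin{equation*}
2C_2 \cdot \frac{2^{\lambdam}}{\lambdam^2 \ln^2 2} \;-\; 2C_2 \cdot \frac{2^{\lambdam-1}}{(\lambdam-1)^2 \ln^2 2} \;=\; \Omega\!\left(\frac{2^{\lambdam-1}}{\lambdam^2}\right),
\end{equation*}
so the probability that a single candidate is a safe prime is at least~$q = \Omega(1/\lambdam^2)$. Since Miller--Rabin correctly identifies safe primes with overwhelming probability, the per-iteration success probability is~$q - \text{negl}(\lambdam)$. The number of iterations~$T$ until the first success is therefore stochastically dominated by a geometric random variable with parameter~$q$, giving
\begin{equation*}
\Pr[T > k] \;\leq\; (1-q)^k \;\leq\; e^{-qk}.
\end{equation*}
Choosing~$k = \lambdam^{3}$ makes~$e^{-qk} = e^{-\Omega(\lambdam)}$, which is negligible in~$\lambdam$. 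Thus with overwhelming probability the loop terminates within~$\lambdam^{3}$ iterations, consuming at most~$\lambdam^{3}\cdot(\lambdam-1) = O(\lambdam^{4})$ bits from the pseudo-random stream.

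The main obstacle I anticipate is not the combinatorics but the careful bookkeeping between \emph{true} randomness (for which the density estimate and geometric-tail bound straightforwardly apply) and the \emph{pseudo}-randomness actually used by \fun{PCRGenPuz}. I would handle this via a standard hybrid: if the polynomial stream-length bound failed with non-negligible probability under PRNG output but held under uniform bits, one could build a distinguisher violating~$\varepsilon_\textit{PRNG}$. The remaining minor point is to ensure Miller--Rabin's error probability, applied to both~$p_\textit{candidate}$ and~$(p_\textit{candidate}-1)/2$ for the polynomially many candidates examined, remains negligible; repeating the test a constant number of rounds per candidate suffices and is absorbed into the~$\text{negl}(\lambdam)$ term.
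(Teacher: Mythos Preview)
Your proposal is correct and follows essentially the same approach as the paper: both compute the per-iteration success probability $q=\Omega(1/\lambdam^2)$ from the empirical safe-prime density, bound the tail of the resulting geometric distribution to show that a polynomial number of iterations (hence $O(\lambdam^4)$--$O(\lambdam^5)$ pseudo-random bits) suffices with overwhelming probability. Your treatment is in fact slightly more careful than the paper's---you explicitly address the PRNG-versus-true-randomness hybrid and the Miller--Rabin error, which the paper's Appendix~\ref{appsec:proof prime_bits_poly_bounded} leaves implicit.
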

 
 We use the empirical assumption defining the number of safe primes in an interval $x$ to prove lemma \ref{lemma:prime_bits_poly_bounded}. For the mathematical details see the full proof in Appendix~\ref{appsec:proof prime_bits_poly_bounded}.

Since $p$ is a safe prime, the number of generators of group $Z^{\ast}_p$ is roughly $\frac{p}{2}$. 
Thus, it is trivial that finding a valid $g$ also requires a polynomial number of iterations. Therefore, \pcrdlalg is a PRG:
 \begin{lemma}
 	$\fun{PCRGenPuz}(\lambdam, \textit{seed})$ is a pseudo random generator (PRG) choosing three parameters, each one is pseudo random and chosen from its set of options,~$p \in \mathcal{P}(\lambdam), g \in \mathcal{G}(p)$ and~$b \in Z^{\ast}_p$.
 	\label{lemma:pcrdl_prg}
 \end{lemma}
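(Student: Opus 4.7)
The plan is to mirror the argument used for Claim \ref{lemma_prg} (that $\fun{GenPuz}$ is a PRG), but now for the three-stage sampling of $p$, $g$, and $b$. Conceptually, $\fun{PCRGenPuz}$ has two logical elements: (i) an expansion of \textit{seed} into a long pseudo random bit stream via the PRNG, and (ii) a deterministic iteration-sampling procedure that carves this stream into a uniform element of $\mathcal{P}(\lambdam)$, then of $\mathcal{G}(p)$, and then of $Z^*_p$. Since the PRNG is already assumed secure on any polynomial-length output with advantage $\varepsilon_{\textit{PRNG}}$, I only need to bound the total number of pseudo random bits consumed and verify that iteration sampling yields the uniform distributions claimed.

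First I would invoke the iteration-sampling fact alluded to in the paper (deferred to Appendix~\ref{appsec:iteration sampling}): given a truly uniform bit stream and a set $\mathcal{S}\subseteq\{0,1\}^\ell$, repeatedly reading $\ell$-bit candidates and outputting the first one that lies in $\mathcal{S}$ yields an element distributed uniformly over $\mathcal{S}$. Applied three times in succession with the ranges $[2^{\lambdam-1}, 2^{\lambdam}-1]\cap\mathcal{P}(\lambdam)$, $\mathcal{G}(p)$, and $[2,p-1]$, this gives that, under a truly random source, $p\in\mathcal{P}(\lambdam)$, $g\in\mathcal{G}(p)$, and $b\in Z^*_p$ are each uniformly distributed in their respective sets (conditioned on the preceding stages).

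Next I would bound the total bit consumption. By Lemma~\ref{lemma:prime_bits_poly_bounded}, the number of iterations required to locate a safe prime $p$ is polynomial in $\lambdam$ except with negligible probability; since $p$ is a safe prime, the set of generators has density roughly $1/2$ in $Z^*_p$, so the generator search terminates in a polynomial number of iterations with overwhelming probability; and as in $\fun{GenPuz}$, the $b$ search succeeds with probability at least $1/2$ per iteration, so it also terminates after a polynomial number of iterations except with negligible probability. Summing these contributions, the total number of pseudo random bits consumed from the PRNG is polynomial in $\lambdam$ with all but negligible probability.

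Finally, I would assemble a standard hybrid argument: in Hybrid~0, $\fun{PCRGenPuz}$ runs on true randomness and produces an output uniform over $\mathcal{P}(\lambdam)\times\mathcal{G}(p)\times Z^*_p$; in Hybrid~1, it runs on the PRNG output. A distinguisher for these two hybrids, together with the polynomial bit-consumption bound, immediately yields a PRNG distinguisher on an input of polynomial length, so the gap is at most $\varepsilon_{\textit{PRNG}}$ (plus the negligible probability that iteration sampling exceeds its polynomial bit budget). Hence $\fun{PCRGenPuz}$ is a PRG whose output is computationally indistinguishable from uniform over $\mathcal{P}(\lambdam)\times\mathcal{G}(p)\times Z^*_p$, with advantage negligible in $\lambdah$. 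The main obstacle here is really the bit-budget analysis for the prime-search stage, which is precisely what Lemma~\ref{lemma:prime_bits_poly_bounded} supplies; once that and the iteration-sampling fact are in hand, the remainder is the hybrid step.
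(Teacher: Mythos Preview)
Your proposal is correct and follows essentially the same approach as the paper: invoke iteration sampling (Appendix~\ref{appsec:iteration sampling}) for uniformity over $\mathcal{P}(\lambdam)$, $\mathcal{G}(p)$, and $Z^*_p$; use Lemma~\ref{lemma:prime_bits_poly_bounded} and the density-$1/2$ observations to bound the bit budget to polynomial; then run a two-game/hybrid argument reducing any distinguisher to a PRNG distinguisher with advantage at most $\varepsilon_{\textit{PRNG}}$. The paper's proof is terser but structurally identical.
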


\begin{proof}
	The PRNG with advantage~$\varepsilon_\textit{PRNG}$ provides enough pseudo random bits so that $p$ is chosen uniformly at random from~$\mathcal{P}(\lambdam)$ (Lemma \ref{lemma:prime_bits_poly_bounded}) and $g$ and $b$ are chosen uniformly at random from~$\mathcal{G}(p)$ and~$Z^{\ast}_p$.
	We calculate the advantage of \fun{PCRGenPuz}($\cdot$) over a random choice.
	We define two games: 
	In game 0, the challenger computes $p \randselect\mathcal{P}(\lambdam), g \randselect \mathcal{G}(p), b \randselect Z^{\ast}_p$.
	In game 1, the challenger computes $s \randselect \{0,1\}^{\lambdah} ,(p,g,b) \gets \fun{PCRGenPuz}(s, \lambdam)$.
	The challenger in both games sends $(p,g,b)$ to the adversary and the adversary outputs one bit. We denote the event that the adversary outputs 1 in game t by~$W_t$.
	If the adversary has an algorithm that outputs~$1$ with probability difference larger than $\varepsilon_{PRNG}$, then it would contradict that the PRNG has advantage $\varepsilon_{PRNG}$. Thus,
	\begin{eqnarray}
	\text{PCRGenPuz}\textsf{adv}[\lambdam, \Ad] \triangleq |Pr[W_1]-Pr[W_0]| < \varepsilon_{PRNG}.\nonumber
	\end{eqnarray}
	The advantage is negligible and thus \fun{PCRGenPuz}($\cdot$) is a PRG.
\end{proof}

We are now ready to prove the security of \fun{PCRGenPuz}($\cdot$).

\begin{theorem}
	The \pcrdlalg is DL-PC resistant.
\label{theorem:pcrdl pc resistant}
\end{theorem}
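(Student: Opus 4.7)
The plan is to prove the theorem through a sequence of game hops that first replace the \fun{PCRGenPuz} outputs with truly uniform triples and then argue that any stored precomputation hints are, except with negligible probability, useless.

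First I would define $\Gamma_0$ to be the precomputation game \gamee{PC}{\pcrdlalg} as written, and $\Gamma_1$ the same game in which each triple $(p_i, g_i, b_i) \gets \fun{PCRGenPuz}(\lambdam, s_i)$ is replaced by a triple drawn uniformly from $\mathcal{P}(\lambdam) \times \mathcal{G}(p_i) \times Z^{\ast}_{p_i}$. By Lemma \ref{lemma:pcrdl_prg}, a single invocation of \fun{PCRGenPuz} is distinguishable from a uniform draw with advantage at most $\varepsilon_\textit{PRNG}$, so a standard hybrid argument across the $k = \text{poly}(\lambdam)$ seeds yields $|Pr[W^{\Gamma_0}] - Pr[W^{\Gamma_1}]| \leq k \cdot \varepsilon_\textit{PRNG}$, negligible in $\lambdam$.

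Second, I would bound the probability that any of the uniform $(p_i, g_i)$ pairs in $\Gamma_1$ collides with a stored hint. By the Precomputation Storage assumption the adversary holds at most $M_h = \text{poly}(\lambdam)$ hints, each tied to a specific $(p,g)$ pair. Using the twin prime constant estimate, $|\mathcal{P}(\lambdam)| = \Omega(2^{\lambdam}/\lambdam^2)$, and for every safe prime $p$ we have $|\mathcal{G}(p)| = \varphi(p-1) = \varphi(2q) = q-1 = \Omega(2^{\lambdam})$. The chance that a uniformly chosen pair lands in the set of at most $M_h$ hint indices is therefore $O(M_h \cdot \lambdam^2 / 2^{2\lambdam})$, and a union bound over the $k$ puzzles gives an overall match probability $\varepsilon_\textit{match} = O(k \cdot M_h \cdot \lambdam^2 / 2^{2\lambdam})$, which is negligible in $\lambdam$.

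Third, I would introduce a final game $\Gamma_2$ identical to $\Gamma_1$ except that it aborts whenever some generated $(p_i, g_i)$ matches a stored hint. Conditioned on no match, the assumption that precomputation is effective only for the specific $(p,g)$ of a hint implies that the adversary's remaining state gives her no computational edge over the same algorithm run with memory reset at the start of the game; hence her conditional success probability is at most $Pr[W^\textit{opt}_{\textit{Mreset}}]$. Combining the three hops yields
\begin{equation}
\text{PC}\mathsf{adv}[\lambdam, \Ad] = |Pr[W_M] - Pr[W^\textit{opt}_\textit{Mreset}]| \leq k \cdot \varepsilon_\textit{PRNG} + \varepsilon_\textit{match}, \nonumber
\end{equation}
which is negligible in $\lambdam$, establishing DL-PC resistance.

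The main obstacle will be the third step: cleanly converting the informal specific-group-specific-generator assumption on known DL algorithms into a formal statement usable inside the reduction, namely that the conditional distribution of the adversary's view given ``no hint applies'' is exactly the view of an adversary whose memory has been reset. Everything else is a PRG hybrid plus a counting bound against the huge space of safe primes and their generators.
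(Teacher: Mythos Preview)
Your proposal is correct and follows essentially the same approach as the paper: invoke Lemma~\ref{lemma:pcrdl_prg} to pass from \fun{PCRGenPuz} outputs to uniform $(p,g,b)$ triples, then bound the probability that any of the $k$ uniform $(p,g)$ pairs hits one of the $\text{poly}(\lambdam)$ stored hints via the same count $|\mathcal{P}(\lambdam)|\cdot|\mathcal{G}(p)| = \Omega(2^{2\lambdam}/\lambdam^2)$, and conclude that absent a hit the adversary gains nothing over the reset-memory optimum. Your packaging into explicit games $\Gamma_0,\Gamma_1,\Gamma_2$ with a hybrid over the $k$ seeds is slightly more formal than the paper's presentation (and makes the $k\cdot\varepsilon_\textit{PRNG}$ contribution explicit), but the substance is identical, including the acknowledged soft spot at the third step where the per-$(p,g)$ precomputation assumption is invoked.
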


\begin{proof}
	Due to Lemma \ref{lemma:pcrdl_prg}, \fun{PCRGenPuz}($\cdot$) is a PRG with advantage $\varepsilon_\textit{PRNG}$ and due to Lemma \ref{lemma:alg hard as DL}, \fun{GenPuz}($\cdot$) generates a puzzle, which has at most a negligible advantage over a random DL puzzle.
	Under the assumption that precomputation of DL is possible for a specific $p$ and $g$, \fun{PCRGenPuz}($\cdot$) puzzle is at least as hard as \fun{GenPuz}($\cdot$) puzzle, since it chooses pseudo random $p$ and $g$ per puzzle.
	
	By Assumption, \Ad can perform a precomputation per group and generator and the number of hints that \Ad can store is $poly(\lambdam)$. 
	
	The number of possible groups is the number of safe primes in interval $[2^{\lambdam-1},2^{\lambdam}-1]$, namely~$|\mathcal{P}(\lambdam)| \approx {C_2}\frac{2^{\lambdam}}{{\lambdam}^2}$, \eqref{eq:k1}. 
	Each group has approximately $|\mathcal{G}(2^{\lambdam})| = \frac{2^{\lambdam}}{2}$ generators.
	We denote the fact that a puzzle \textit{puz} has a \textit{hint} in $M$ by $\textit{puz} \in M$.
	The probability of the adversary to generate such a puzzle is
	\begin{eqnarray}
	Pr[\fun{PCRGenPuz}(\lambdam, \textit{seed}) \in M]& \approx & \frac{|\textup{hints in $M$}|}{|\mathcal{P}(\lambdam)|\times|\mathcal{G}(2^{\lambdam})|}\nonumber \\ 
	= \frac{\text{poly}(\lambdam)}{{C_2}\left(\frac{2^{\lambdam}}{{\lambdam}^2}\right)\left(\frac{2^\lambdam}{2}\right)} 
	& = & \frac{2\lambdam^2 \text{poly}(\lambdam)}{C_2 2^{2\lambdam}},\nonumber
	\end{eqnarray}
	which is negligible in $\lambdam$. If \Ad generates a polynomial number of puzzles $k$, then the probability that any of them is precomputed is bounded by $Pr[\fun{PCRGenPuz}(\lambdam, \textit{seed}) \in M]$ multiplied by a polynomial, which is negligible. We denote this probability by $\varepsilon_{pc}$.
	
	An adversary playing \gamee{PC}{\pcrdlalg} with a memory that is reset when the game begins has probability $Pr[W_\textit{Mreset}^\textit{opt}]$ to solve one of the $k$ puzzles.
	An adversary playing with a memory available for precomputations can use the same strategies that an adversary with memory $Mreset$ can use and in addition she can also use her memory to accelerate computations with probability $\varepsilon_{pc}$. In summary, her probability to solve the puzzle is $Pr[W_M]\leq Pr[W_\textit{Mreset}^\textit{opt}]+\varepsilon_{pc}$.
	
	Thus, the advantage is bounded, $\text{PC}\mathsf{adv}[\lambdam, \Ad] \leq \varepsilon_{pc}$, where $\varepsilon_{pc}$ is negligible in $\lambdam$, so \pcrdlalg is a PCR-DL.
\end{proof}


	\section{\bname} \label{sec:bc}

A specific scenario where frontrunning-resistance is crucial is \emph{blockchain smart contracts}. 
In blockchains such as Ethereum~\cite{buterin2013ethereum}, Bitcoin~\cite{nakamoto2008bitcoin}, Libra~\cite{baudet2019state} and Tezos~\cite{goodman2014tezos}, users submit transactions to a shared public log. 
The transactions can be simple monetary transfers or steps in elaborate so-called smart contracts. 
The order of transactions determines their outcome. 
One example is a smart contract implementing an exchange \cite{idex11,uniswap,bancor,radarrelay,etherdelta}, where Alice publishes a \$$10$ buy order and Carol publishes a \$$7$ sell order for some token.
If Bob is able to frontrun and set the order of transactions, he can buy the token from carol, sell it to Alice and earn \$$3$.
Frontrunning bots with optimized network delays are widely deployed~\cite{daian2019flash}, exploiting knowledge of future trades to frontrun. 

Most prominent blockchains~\cite{nakamoto2008bitcoin, buterin2013ethereum, litecoin, monero, bitcoincash} achieve censorship-resistance with  PoW~\cite{dwork1992pricing,jakobsson1999proofs}. 
But regardless of their censorship-resistance, PoW blockchains are particularly susceptible to frontrunning: user transactions are published over a peer-to-peer network, allowing anyone to see all transactions and potentially inject her own to gain an advantage. 
Each transaction carries a fee as incentive for the miner who places it in a block. 
Frontrunning is therefore as simple as offering a higher fee for an injected transaction. 
Moreover, miners are able to arbitrarily add or remove transactions when they compose a new block. 

Our goal is therefore to design a PoW alternative that prevents frontrunning until the work is complete. 
We focus on a block generation model~(\S\ref{sec:bc:model}) since other design choices~\cite{nakamoto2008bitcoin,buterin2013ethereum,pass2017fruitchains,bagaria2019prism,sompolinsky2018phantom}, such as graph structure, are orthogonal.
Beyond frontrunning-resistance, the protocol should also maintain PoW properties, mainly allow for difficulty tuning and prevent an attacker from cheaply generating a series of blocks or multiple blocks with the same parents. 

We present the \emph{Capsule-Chain} protocol~(\S\ref{sec:bc:protocol}) that achieves these properties. 
It uses the Time-Capsule approach to achieve frontrunning-resistance. 
It uses \pcrdlalg to ensure miners have no advantage from precomputation, maintaining the openness of standard PoW. 
In \bname, the PoW is bound to the location of the block in the block graph and the content of the block, so the same work cannot be used to produce multiple blocks.
Hence, \bname uses the hash pointers of the parents and hashes of the transactions to produce the puzzle.
This is different from \alg that uses arbitrary nonces.
We prove~(\S\ref{sec:bc:correctness}) that in \bname the amount of computation needed to create blocks is linear in the number of blocks and that \bname provides Soundness, FR-resistance and Time-Lock as in \alg and Precomputation-resistance as in \pcrdlalg.

We overview important elements outside the single-block generation model~(\S\ref{sec:bc:discussion}).
Unlike most blockchains, in \bname users send transactions directly to a specific miner and do not spread them across the network; the miner does not check transactions for validity in advance, since they are encrypted during the block creation process.
For the same reason, \bname does not provide Ensured Output and we discuss how the miner can cope.
Finally we discuss user's dilemma if her transaction is not placed in a block, as she has no guarantees it was not revealed. 


		\subsection{\bname Model} \label{sec:bc:model}

Analogically to the \problem model, the participants are now called \emph{users} and the coordinator is called a \emph{miner}. But unlike \problem, none of the parties are trusted to follow the protocol.
In addition to the user's messages, called \emph{transactions} in this context, the miner places in each block a reference to the previous block/s, denoted \emph{metadata}. 

We are looking for a \fun{createBlock} protocol and a \fun{verifyBlock} function.
\fun{createBlock} takes as input the metadata and outputs a new block containing a list of the transactions and the metadata and a proof of its validity.
Anyone can verify that a block is valid using \fun{verifyBlock}. 
The function takes as input a block and returns TRUE if and only if the metadata, the transactions and their order match the metadata, the transactions and their order as committed by the users that participated in \fun{createBlock}.

Blockchain protocols serve as their own PKIs~\cite{nakamoto2008bitcoin,buterin2013ethereum}, where each user's ID is represented on the blockchain by a public key. 
Thus, we assume a PKI despite the openness of the system.

The requirement from a solution are (1)~\emph{Soundness}, (2)~\emph{FR-Resistance} and (3)~\emph{Time-Lock} as defined for \problem; (4)~\emph{PC Resistance}, as in \pcrdl; and a new requirement (5)~\emph{Linearity of Computation}, which we define as follows. 

The amount of computation required to create $k$ blocks should be linear in $k$.
We formalize the requirement with the \gamee{Linear}{\bname} game, parametrized by a positive integer $k$ polynomial in $\lambdam$.

\mytextbox{
\begin{itemize}
	\item The adversary expends computation power equivalent to solving $k$ DL puzzles, in expectation.
	\item The adversary returns Capsule-Chain's blocks.
\end{itemize}
The adversary wins the \gamee{Linear}{\bname} game if she returns $k+1$ different valid blocks. We denote the event that the adversary wins by $W_{\textit{Linear}}^{\textit{\bname}}$. We define the advantage in this game by $\text{Linear}\textsf{adv}[\lambdam, k, \Ad] \triangleq Pr[W_{\textit{Linear}}^{\textit{\bname}}]$
}

\begin{definition}
	A PoW system provides Linearity of Computation if for every $k$, polynomial in $\lambdam$, and for any PPT adversary~\Ad playing \gamee{Linear}{\bname}, the advantage $\text{Linear}\textbf{adv}[\lambdam, k, \Ad]$ is negligible in $\lambdam$.
\end{definition} 

This requirement implies that the content of the block is binding for both the transactions and the metadata since the miner cannot create two different blocks without solving two different puzzles.


		\subsection{\bname Protocol} \label{sec:bc:protocol} 
		
The commit protocol of \bname (Appendix \ref{appsec:bc commit protocol}) is similar to that of \alg (Figure~\ref{commit protocol}) with the following differences:
\begin{inparaenum}
\item The first element of the Merkle tree (Figure \ref{commit protocol} line \refc{line:gen merkle tree}) is the hash of \textit{metadata}.
\item There is an additional output parameter, a list of the order of user's nonces in the Merkle tree.
\item The commit protocol uses PCRGenPuz($\cdot$) instead of GenPuz($\cdot$) (Figure~\ref{commit protocol} line~\refc{line:gen puzzle}) to generate a precomputation-resistant puzzle.
\item In the first step, each user calculates $nonce \gets \fun{hash}(m_i)$ instead of a random nonce (Figure \ref{commit protocol} line \refc{line:gen nonce}).
\end{inparaenum} 

In \alg, the problem becomes irrelevant if all parties are malicious, since the result doesn't affect any external party. In contrast, in \bname, a malicious party or colluding parties that create a block can affect other parties by violating the blockchain properties.
In Time-Capsule's commit protocol every participant sends a random nonce and the coordinator generates a pseudo random seed using a Merkle tree (Figure \ref{commit protocol} line \ref{line:gen merkle tree}), the seed is pseudo random since at least one nonce is truly random. 
This is essential for the proof that solving the puzzle generated from \textit{seed} has negligible advantage compared to a random puzzle.
In \bname, no party is trusted, so none of the nonces is necessarily random. 
The source of trusted randomness here is $\fun{hash}(\textit{metadata})$.

A miner creating a new block first runs the routine \fun{createBlock}(\textit{metadata}) (Algorithm~\ref{alg:miner_build_block}) with the metadata that she chooses.
This routine first runs the commit protocol, as described above.
Then, the miner runs the reveal function (Algorithm \ref{alg:coordinator reveal}), which is similar to the Time-Capsule reveal function. 		
In addition to the content required for the output of \alg, the block also contains \textit{metadata} and a list $L_\textit{seed}$ of users' indices ordered by the same order as their nonces in the merkle tree (Algorithm \ref{alg:miner_build_block} line \ref{line:return block}).
	
Anyone can verify the validity of the block using the function {VerifyBlock} (Algorithm \ref{alg:verify_block}). The function re-generates \textit{seed} by aggregating the hashes of the metadata and the hashes of the messages in a merkle tree according to $L_\textit{seed}$, i.e., in the same order that the tree was generated in the commit protocol. Using this \textit{seed}, it verifies the output using  \fun{VerifyOutput} as in \alg (algorithm \ref{alg:verify_reveal}).

\begin{algorithm}[t]
	\begin{algorithmic}[1]
	\Function{CreateBlock}{\textit{metadata}}
	\State $(\textit{seed}, T_\textit{seed}, Pk_\textit{TL}, L_c, L_s, L_\textit{seed}) \gets$ run commit protocol (Appendix \ref{appsec:bc commit protocol}).
	\State $(\textit{seed}, Sk_\textit{TL}, L_s, L_m) \gets \text{Reveal}(Pk_\textit{TL}, L_s, L_c)$  /*solve puzzle*/
	\State return $(\textit{metadata}, Sk_\textit{TL}, L_s, L_\textit{seed}, L_m)$
	\label{line:return block}
	\EndFunction
\end{algorithmic}
	\caption{Miner Build Block}
	\label{alg:miner_build_block}
\end{algorithm}

\begin{algorithm}[t]
	\begin{algorithmic}[1]
		\Function{VerifyBlock} {$\textit{metadata}, Sk_\textit{TL}, L_s, L_\textit{seed}, L_m$}
		\State Initiate a Merkle tree $T_\textit{seed}$
		\State Insert $\fun{hash}(\textit{metadata})$ to $T_\textit{seed}$
		\ForAll {$(m, i) \in L_m$}
			\State $j \gets$ get index of user $i$ in $L_\textit{seed}$
			\State Insert $\fun{hash}(m)$ to leaf number $j$ in $T_\textit{seed}$ 
		\EndFor
		\State $\textit{seed} \gets T_\textit{seed}$'s root
		
		\If {not \fun{VerifyOutput}($\textit{seed}, Sk_\textit{TL}, L_s, L_m$) (algorithm \ref{alg:verify_reveal})} 
			\textbf{return} FALSE
		\EndIf
		\State \textbf{return} TRUE
		\EndFunction				
	\end{algorithmic}

	\caption{	\tiny	Verify Block}
	\label{alg:verify_block}
\end{algorithm} 


		\subsection{\bname Correctness} \label{sec:bc:correctness} 


Both the Soundness and FR Resistance proofs for \alg (\S\ref{sec:correctness soundness} and \S\ref{sec:correctness fr}) apply for \bname. 
The proofs refer to the commit protocol stages after the seed creation. 
Since the differences in the commit protocol are only in the seed creation, they do not affect these proofs.
We proceed to overview the rest of the properties.

		\subsubsection{Time-Lock}

\bname provides Time-Lock.

The proof is similar to the proof for \alg (\S\ref{sec:correctness time-lock}) except for the argument for seed pseudorandomness, since other than the seed creation, the commit protocol is the same for both.
 
The source of trusted randomness in \bname is a hash of the state of the current block graph (\textit{metdata}).
A PPT adversary can try to run a brute force search on a block in order to get a desired $\fun{hash}(\textit{metadata})$, but her chances to succeed are negligible in $\lambdah$, due to the pre-image resistance of the hash.

		\subsubsection{PC Resistance} 

We proved (Theorem \ref{theorem:pcrdl pc resistant}) that the puzzle generated by PCRGenPuz($\cdot$), which \bname uses, is PC-resistant.
The difference in \bname is that the seed that is used as input for PCRGenPuz($\cdot$) is pseudo random and not random as in the proof of Theorem \ref{theorem:pcrdl pc resistant}. 
A pseudo random \textit{seed} input to a PRG function gives a negligible advantage over a random \textit{seed} input to this function, due to the PRG guarantees (Appendix \ref{appsec:puzzle hardness proof}).
Other than the \textit{seed}, the proof is the same as the proof for \pcrdlalg.
Thus, \bname provides PC resistance.

		\subsubsection{Linearity of Computation}

We prove that the puzzle computation effort is linear in the number of puzzles.
		
\begin{lemma}
	For every PPT adversary \Ad playing \gamee{Linear}{\bname}, and for every value $k$, polynomial in $\lambdam$, the advantage $\text{Linear}\textbf{adv}[k, \Ad]$ is negligible in $\lambdam$.
\label{lemma:bc linearity of computation}
\end{lemma}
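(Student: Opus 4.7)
The plan is to reduce producing $k{+}1$ distinct valid \bname blocks to solving $k{+}1$ distinct \fun{PCRGenPuz}-generated DL puzzles, and then to show that an adversary bounded by $k$ puzzle-equivalents of compute can succeed only with negligible probability. First I would establish the block-to-puzzle correspondence: by \fun{VerifyBlock} (Algorithm~\ref{alg:verify_block}), every valid block ships an $Sk_\textit{TL}$ that is a DL witness for the puzzle $(p,g,b) \gets \fun{PCRGenPuz}(\lambdam, \textit{seed})$, where \textit{seed} is the Merkle root of $\fun{hash}(\textit{metadata})$ together with the transaction hashes in the order dictated by $L_\textit{seed}$. So the adversary's $k{+}1$ blocks implicitly furnish $k{+}1$ DL solutions.

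Next I would argue these $k{+}1$ puzzles are distinct with overwhelming probability. Two valid blocks differ in at least one of \textit{metadata}, the transaction multiset, the transaction order, or $L_\textit{seed}$; in every case the Merkle tree's input sequence differs, so collision resistance of the hash and of the Merkle tree force distinct seeds (except with probability $O(k^2)(\varepsilon_{\textit{hash\_collision}} + \varepsilon_{\textit{MT\_collision}})$). By the PRG property of \fun{PCRGenPuz} (Lemma~\ref{lemma:pcrdl_prg}), distinct seeds induce distinct, pseudo-independent $(p,g,b)$ tuples up to a further negligible slack of $\varepsilon_{\textit{PRNG}}$, so the $k{+}1$ puzzles are independent instances of DL up to a negligible statistical gap.

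Given this correspondence, I would invoke puzzle hardness. By Theorem~\ref{theorem:pcrdl pc resistant} each puzzle is no easier than a fresh uniform DL puzzle, even with precomputation memory. Combining this with the Multi-DL assumption (Assumption~\ref{def:multi dl assumption}) through a standard hybrid argument over the $k{+}1$ slots bounds the probability that the adversary solves all $k{+}1$ puzzles on a $k$-puzzle budget by a negligible quantity. Concretely, if \Ad wins \gamee{Linear}{\bname} with non-negligible probability~$\delta$, I would sample a random slot $j \in [k{+}1]$ and embed a given fresh DL challenge at slot $j$: the PRG property of \fun{PCRGenPuz} makes this embedding indistinguishable from \Ad's real puzzles, so with probability at least $\delta/(k{+}1)$ the output contains a solution to the challenge, and by a pigeonhole/averaging argument the compute spent on that slot is strictly below a single DL puzzle's expected cost, contradicting the DL hardness underlying Assumption~\ref{def:multi dl assumption}.

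The hardest step will be making that final pigeonhole rigorous: the adversary may adaptively schedule compute across slots and exploit parallelism, so the phrase \emph{``computation equivalent to $k$ DL puzzles in expectation''} does not immediately bound per-slot work. I expect to handle this by defining, within the reduction, a per-slot compute random variable whose expectation sums to $k$; then averaging over the uniformly chosen embedded slot $j$ and applying Markov to the remaining budget yields, with probability at least a constant fraction of $\delta/(k{+}1)$, an execution on the embedded challenge using strictly less than one DL puzzle's expected compute. That, together with the PRG indistinguishability of the embedding, contradicts the (Multi-)DL hardness assumption and concludes the proof.
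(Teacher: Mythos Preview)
Your first two steps---the block-to-puzzle correspondence and the distinctness argument---mirror the paper's proof, though the paper handles the case ``distinct seeds, same puzzle'' by a direct counting bound on the size of the puzzle space (giving an explicit $\varepsilon_{\textit{PCRGenPuz\_collision}}$), not via the PRG advantage; PRG security does not by itself bound output collisions on distinct inputs, so your ``slack of $\varepsilon_{\textit{PRNG}}$'' for that case is not the right quantity.

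The substantive gap is your final embedding step. In \gamee{Linear}{\bname} there is no challenger handing out puzzles: the adversary chooses the block contents, computes the Merkle root \textit{seed} herself, and evaluates the public deterministic function \fun{PCRGenPuz} on it. A reduction therefore has no hook to ``embed a given fresh DL challenge at slot $j$''---to do so it would need either a seed whose \fun{PCRGenPuz} image equals the external $(p,g,b)$ (a PRG-preimage problem), or a programmable-oracle model for \fun{PCRGenPuz}, which the paper does not assume. PRG indistinguishability applies when the seed is uniform and hidden from the distinguisher; here the adversary knows the seed and can recompute \fun{PCRGenPuz}(\textit{seed}) to check consistency, so swapping in an external challenge is detectable. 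Your per-slot pigeonhole machinery never gets off the ground for the same reason.

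The paper avoids all of this by a much more direct route: once the $k{+}1$ puzzles are shown to be pairwise distinct except with negligible probability, and PC-resistance (Theorem~\ref{theorem:pcrdl pc resistant}) rules out shared precomputation across distinct $(p,g)$, the adversary's budget of ``$k$ DL-puzzle equivalents of computation, in expectation'' simply cannot yield $k{+}1$ distinct solutions---this is essentially the meaning of the budget in the game definition, combined with the Multi-DL assumption that solving any one of several independent instances gives no advantage over solving a single instance. The final bound is just $\varepsilon_{pc} + \binom{k+1}{2}\Pr[\textit{puz}=\textit{puz}'\mid B\neq B']$.
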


\begin{proof}
	If \Ad publishes $k+1$ blocks, it means she has puzzle solutions for $k+1$ blocks.
	Due to the PC-resistance property of the puzzle,~\Ad has a negligible advantage~$\varepsilon_{pc}$ to use a common precomputation of different puzzles.
	There is only one other way to solve $k$ puzzles and create $k+1$ blocks, to have at least two identical puzzles in the set of $k+1$ puzzles.

	Consider two blocks, $B$ and $B'$.
	Block $B$ contains $l$ messages and \textit{metadata}; we denote the $i$'th message of block $B$ by $m_i$. 
	Similarly, block $B'$ contains $l'$ messages and \textit{metadata'}; $m_i'$ is message number $i$ in $B'$.
	We denote by \textit{puz} and \textit{puz'} the puzzles generated from block contents $B$ and $B'$, respectively. 
	
	There can be two sources for block differences, different metadata or different committed messages.
	Let us find the probability for the adversary to generate two different blocks $B$ and $B'$ with the same puzzle $\textit{puz}=\textit{puz'}$.
	
	For this to happen, one the following must occur:
	\begin{enumerate}
		\item The lists of messages are of the same length $l = l'$ and at least one message is different ($\exists i, 1\leq i \leq l$ such that $m_i \neq m_i'$) or the metadata is different ($\textit{metadata} \neq \textit{metadata}'$), but their hashes are the same ($\fun{hash}(\textit{metadata})=\fun{hash}(\textit{metadata'})$ and $\forall i$, $\fun{hash}(m_i)=\fun{hash}(m'_i)$);
		\label{item:diff content}
		\item at least one node of the Merkle tree is different (including the case $l \neq l'$) but the Merkle root is the same for both blocks; or
		\label{item:diff merkle nodes}
		\item the blocks' seed values are different but \fun{PCRGenPuz} produces the same puzzle.
		\label{item:pcrgenpuz collision}
	\end{enumerate}
	
	The probability for (\ref{item:diff content}) is $\varepsilon_{\textit{hash\_collision}}$ and the probability for (\ref{item:diff merkle nodes}) is $\varepsilon_{\textit{MT\_collision}}$.
	
	It remains to calculate the probability of~(\ref{item:pcrgenpuz collision}), which we denote~$\varepsilon_{\textit{PCRGenPuz\_collision}}$. 
	The number of possible puzzles per~$\lambdam$ is~${C_22^\lambdam}\cdot\left(\frac{2^{\lambdam}}{{\lambdam}^2}\right)\cdot\left(\frac{2^\lambdam}{2}\right)$ (Equation \ref{eq:k1}).
	So the probability to generate a puzzle \textit{puz} from \textit{seed} is roughly the probability to choose a puzzle uniformly at random from the set of puzzles,  $(\frac{C_22^{3\lambdam}}{2\lambdam^2})^{-1}$, thus the probability $\varepsilon_{\textit{PCRGenPuz\_collision}}$ is negligible in $\lambdam$.
	
	The total probability that block contents $B$ and $B'$ would produce the same puzzle is less than~$\varepsilon_{\textit{hash\_collision}} + \varepsilon_{\textit{MT\_collision}} + \varepsilon_{\textit{PCRGenPuz\_collision}}$.
	Both $\varepsilon_\textit{MT\_collision}$ and  $\varepsilon_\textit{hash\_collision}$ are negligible in $\lambdah$ and $\varepsilon_{\textit{PCRGenPuz\_collision}}$ is negligible in $\lambdam$.
	Thus the probability to obtain two blocks $B$ and $B'$ such that $\textit{puz'}=\textit{puz}$ is negligible in $\lambdam$.
	The advantage in the game is the probability to use common precomputations or to get an identical puzzle for two different blocks, multiplied by the number of possible block pairs in $k+1$ puzzles,
	$\text{Linear}\textbf{adv}[k, \Ad] \leq \varepsilon_{pc} + \frac{k^2}{2} \cdot Pr[\textit{puz}=\textit{puz}'|B\neq B']$.
	Since $k$ is polynomial in $\lambdam$, the probability is negligible in $\lambdam$. 
	Therefore, \bname provides Linearity of Computation.
	\end{proof}
	
	\subsection{Practical Considerations} \label{sec:bc:discussion}

We review several challenges for \bname that remain as open questions for future work. 

		\subsubsection{Mining Advatange} \label{sec:bc:discussion mining}
Solving the DL problem is a search in a finite set. 
The mining process in \bname is therefore not memoryless as with hash-based mining~\cite{nakamoto2008bitcoin}. 
Thus, larger miners gain an additional advantage, improving their chance to find a block beyond the proportion of their computational power. 
The implication is reduced motivation for small miners to participate, an issue that should be addressed with mechanism design. 

		\subsubsection{Ensured Output} \label{sec:bc:discussion ensured}
\bname does not provide Ensured Output as \alg since the hash that the user sends in the beginning of the commit protocol must be the same as the hash of the message in the output block in order for the block to be valid. 
The miner cannot validate this until she reveals the committed messages, and a violation by a user implies the loss of a full block's computation work by the miners. 
One way to overcome this is to have each user send a zero-knowledge proof that her commitment matches her nonce before forming the commitment list (Figure~\ref{commit protocol}).
Another is to incentivize users to follow the protocol by external measures, such as penalizing misbehavior, secured by a collateral, possibly on the blockchain itself. 
Note that this is easily enforceable as the miner has irrefutable proof for misbehavior in form of the violating user's signature ($S_{{L_H},i}$). 

		\subsubsection{Transaction Fee}\label{sec:bc:discussion fees}
As with existing PoW solutions, the miner is incentivized to create non-empty blocks due to transaction fees.
The transaction fee rate would naturally balance to cover the overhead of the interactive commit protocol.

		\subsubsection{Aborted Block}\label{sec:bc:discussion aborted}

In \bname, many miners compete to create the next block. 
Once a miner mined a block, she publishes it and then other miners switch to mine the subsequent block.

Consider a user that completed the commit phase with a miner~$A$, but~$A$ did not complete the block creation while another miner~$B$ did.
In principle, miner $A$ is supposed to abandon the transaction and start mining the next block. 
But the user doesn't have any guarantee that the miner indeed hasn't solved the puzzle before moving on. 
If the miner had found the puzzle solution, it means that all transactions in that block are not secret anymore.

The user now has a dilemma. 
If her intentions remain secret, she might want to post the same transaction again in the next block; but if her intentions were revealed, she might want to change strategy and place a different transaction or none at all.
The user might suspect that the miner has intentionally rejected this block despite expending the resources and solving the puzzle. 
Her decision depends on the application and is outside the scope of this work. 

However, we note that the chances for such a reveal are small. 
Typically \cite{nakamoto2008bitcoin, buterin2013ethereum, sompolinsky2018phantom}, once a block is published all miners should change the metadata of the block they are generating.
Since many miners compete to create each block, a small miner is not likely to win. 
A miner that owns $10\%$ of the computation power in the system would take in expectation roughly another 9 times the block interval to reveal a block. 
Thus, the miner in unlikely to quickly reveal the transactions, and would only reject a completed block in radical situations, i.e., the block's content will cause her a larger loss than the block's creation profit.


	\section{Conclusion} \label{sec:conclusion} 

We define \problem, where multiple parties commit to values that are revealed after a tunable expected time. 
It is useful in any situation where frontrunning is unacceptable. 
We present~\alg that solves \problem and prove its correctness without relying on trust, incentives or synchrony as previous approaches to similar problems. 
We also show how~\alg can be extended to prevent precomputation using assumptions based on existing solutions of the~DL problem in~$\mathbb{Z}_n^*$. 
In \bname we adapt \alg to overcome the frontrunning issue plaguing blockchain systems, resolving a persistent challenge in decentralized permissionless system design.

	\section*{Acknowledgements} \label{sec:acknowledgements}

We thank Vitaly Shmatikov and Yehuda Lindell for their feedback and suggestions.

\clearpage
\bibliographystyle{plain}
\bibliography{ref_list}

\begin{thebibliography}{100}

\bibitem{aave}
Aave.
\newblock \url{https://app.aave.com/}, retreived April 2020, 2020.

\bibitem{abadi2005moderately}
Martin Abadi, Mike Burrows, Mark Manasse, and Ted Wobber.
\newblock Moderately hard, memory-bound functions.
\newblock {\em ACM Transactions on Internet Technology (TOIT)}, 5(2):299--327,
  2005.

\bibitem{adleman1979subexponential}
Leonard Adleman.
\newblock A subexponential algorithm for the discrete logarithm problem with
  applications to cryptography.
\newblock In {\em 20th IEEE Annual Symposium on Foundations of Computer Science
  (SFCS 1979)}, pages 55--60, 1979.

\bibitem{adrian2015imperfect}
David Adrian, Karthikeyan Bhargavan, Zakir Durumeric, Pierrick Gaudry, Matthew
  Green, J~Alex Halderman, Nadia Heninger, Drew Springall, Emmanuel Thom{\'e},
  Luke Valenta, et~al.
\newblock Imperfect forward secrecy: How diffie-hellman fails in practice.
\newblock In {\em Proceedings of the 22nd ACM SIGSAC Conference on Computer and
  Communications Security}, pages 5--17, 2015.

\bibitem{andrychowicz2014fair}
Marcin Andrychowicz, Stefan Dziembowski, Daniel Malinowski, and {\L}ukasz
  Mazurek.
\newblock Fair two-party computations via bitcoin deposits.
\newblock In {\em International Conference on Financial Cryptography and Data
  Security}, pages 105--121. Springer, 2014.

\bibitem{andrychowicz2014secure}
Marcin Andrychowicz, Stefan Dziembowski, Daniel Malinowski, and Lukasz Mazurek.
\newblock Secure multiparty computations on bitcoin.
\newblock In {\em 2014 IEEE Symposium on Security and Privacy}, pages 443--458,
  2014.

\bibitem{Asayag}
Avi Asayag, Gad Cohen, Ido Grayevsky, Maya Leshkowitz, Ori Rottenstreich, Ronen
  Tamari, and David Yakira.
\newblock Helix: A scalable and fair consensus algorithm.
\newblock Technical Report, Orbs Research,
  \url{https://www.orbs.com/white-papers/helix-consensus-whitepaper/},
  retrieved April 2020, 2018.

\bibitem{aspnes2005exposing}
James Aspnes, Collin Jackson, and Arvind Krishnamurthy.
\newblock Exposing computationally-challenged byzantine impostors.
\newblock Technical report, YALEU/DCS/TR-1332, Yale University Department of
  Computer, 2005.

\bibitem{aura2000resistant}
Tuomas Aura, Pekka Nikander, and Jussipekka Leiwo.
\newblock Dos-resistant authentication with client puzzles.
\newblock In {\em International workshop on security protocols}, pages
  170--177. Springer, 2000.

\bibitem{back2002hashcash}
Adam Back.
\newblock Hashcash - a denial of service counter-measure.
\newblock \url{http://www.cypherspace.org/hashcash/hashcash.pdf}, retrieved
  April 2020, 2002.

\bibitem{bagaria2019prism}
Vivek Bagaria, Sreeram Kannan, David Tse, Giulia Fanti, and Pramod Viswanath.
\newblock Prism: Deconstructing the blockchain to approach physical limits.
\newblock In {\em Proceedings of the 2019 ACM SIGSAC Conference on Computer and
  Communications Security}, pages 585--602, 2019.

\bibitem{ball2017proofs}
Marshall Ball, Alon Rosen, Manuel Sabin, and Prashant~Nalini Vasudevan.
\newblock Proofs of useful work.
\newblock {\em IACR Cryptology ePrint Archive}, 2017:203, 2017.

\bibitem{bancor}
Bancor.
\newblock \url{https://www.bancor.network/}, retreived April 2020, 2020.

\bibitem{baudet2019state}
Mathieu Baudet, Avery Ching, Andrey Chursin, George Danezis, Fran{\c{c}}ois
  Garillot, Zekun Li, Dahlia Malkhi, Oded Naor, Dmitri Perelman, and Alberto
  Sonnino.
\newblock State machine replication in the {Libra} blockchain, 2019.

\bibitem{bellare1997verifiable}
Mihir Bellare and Shafi Goldwasser.
\newblock Verifiable partial key escrow.
\newblock In {\em ACM Conference on Computer and Communications Security},
  volume 1997, pages 78--91. Citeseer, 1997.

\bibitem{ben2008fairplaymp}
Assaf Ben-David, Noam Nisan, and Benny Pinkas.
\newblock Fairplaymp: a system for secure multi-party computation.
\newblock In {\em Proceedings of the 15th ACM conference on Computer and
  communications security}, pages 257--266, 2008.

\bibitem{ben2014succinct}
Eli Ben-Sasson, Alessandro Chiesa, Eran Tromer, and Madars Virza.
\newblock Succinct non-interactive zero knowledge for a von neumann
  architecture.
\newblock In {\em 23rd {USENIX} Security Symposium ({USENIX} Security 14)},
  pages 781--796, 2014.

\bibitem{bentov2019tesseract}
Iddo Bentov, Yan Ji, Fan Zhang, Lorenz Breidenbach, Philip Daian, and Ari
  Juels.
\newblock Tesseract: Real-time cryptocurrency exchange using trusted hardware.
\newblock In {\em Proceedings of the 2019 ACM SIGSAC Conference on Computer and
  Communications Security}, pages 1521--1538, 2019.

\bibitem{bentov2017instantaneous}
Iddo Bentov, Ranjit Kumaresan, and Andrew Miller.
\newblock Instantaneous decentralized poker.
\newblock In {\em International conference on the theory and application of
  cryptology and information security}, pages 410--440. Springer, 2017.

\bibitem{bentov2014proof}
Iddo Bentov, Charles Lee, Alex Mizrahi, and Meni Rosenfeld.
\newblock Proof of activity: Extending bitcoin's proof of work via proof of
  stake [extended abstract] y.
\newblock {\em ACM SIGMETRICS Performance Evaluation Review}, 42(3):34--37,
  2014.

\bibitem{bernstein2008chacha}
Daniel~J Bernstein.
\newblock Chacha, a variant of salsa20.
\newblock In {\em Workshop Record of SASC}, volume~8, pages 3--5, 2008.

\bibitem{blass2018strain}
Erik-Oliver Blass and Florian Kerschbaum.
\newblock Strain: A secure auction for blockchains.
\newblock In {\em European Symposium on Research in Computer Security}, pages
  87--110. Springer, 2018.

\bibitem{blum1991noninteractive}
Manuel Blum, Alfredo De~Santis, Silvio Micali, and Giuseppe Persiano.
\newblock Noninteractive zero-knowledge.
\newblock {\em SIAM Journal on Computing}, 20(6):1084--1118, 1991.

\bibitem{blum2019non}
Manuel Blum, Paul Feldman, and Silvio Micali.
\newblock Non-interactive zero-knowledge and its applications.
\newblock In {\em Providing Sound Foundations for Cryptography: On the Work of
  Shafi Goldwasser and Silvio Micali}, pages 329--349. 2019.

\bibitem{bogetoft2009secure}
Peter Bogetoft, Dan~Lund Christensen, Ivan Damg{\aa}rd, Martin Geisler, Thomas
  Jakobsen, Mikkel Kr{\o}igaard, Janus~Dam Nielsen, Jesper~Buus Nielsen, Kurt
  Nielsen, Jakob Pagter, et~al.
\newblock Secure multiparty computation goes live.
\newblock In {\em International Conference on Financial Cryptography and Data
  Security}, pages 325--343. Springer, 2009.

\bibitem{bogetoft2006practical}
Peter Bogetoft, Ivan Damg{\aa}rd, Thomas Jakobsen, Kurt Nielsen, Jakob Pagter,
  and Tomas Toft.
\newblock A practical implementation of secure auctions based on multiparty
  integer computation.
\newblock In {\em International Conference on Financial Cryptography and Data
  Security}, pages 142--147. Springer, 2006.

\bibitem{Boneh2018}
Dan Boneh, Joseph Bonneau, Benedikt B{\"u}nz, and Ben Fisch.
\newblock Verifiable delay functions.
\newblock In {\em Annual international cryptology conference}, pages 757--788.
  Springer, 2018.

\bibitem{boneh2018survey}
Dan Boneh, Benedikt B{\"u}nz, and Ben Fisch.
\newblock A survey of two verifiable delay functions.
\newblock {\em IACR Cryptology ePrint Archive}, 2018:712, 2018.

\bibitem{boneh2000timed}
Dan Boneh and Moni Naor.
\newblock Timed commitments.
\newblock In {\em Annual International Cryptology Conference}, pages 236--254.
  Springer, 2000.

\bibitem{boneh2017graduate}
Dan Boneh and Victor Shoup.
\newblock A graduate course in applied cryptography.
\newblock Online edition,
  \url{https://crypto.stanford.edu/~dabo/cryptobook/BonehShoup_0_4.pdf},
  retrieved April 2020, 2017.

\bibitem{bonneau2015sok}
Joseph Bonneau, Andrew Miller, Jeremy Clark, Arvind Narayanan, Joshua~A Kroll,
  and Edward~W Felten.
\newblock {SoK}: Research perspectives and challenges for {Bitcoin} and
  cryptocurrencies.
\newblock In {\em 2015 IEEE Symposium on Security and Privacy}, pages 104--121,
  2015.

\bibitem{Submarine}
Lorenz Breidenbach, Phil Daian, Ari Juels, and Florian Tram{\`e}r.
\newblock {To Sink Frontrunners, Send in the Submarines}.
\newblock \url{http://hackingdistributed.com/2017/08/28/submarine-sends/},
  retreived April 2020, 2017.

\bibitem{breidenbach2018enter}
Lorenz Breidenbach, Phil Daian, Florian Tram{\`e}r, and Ari Juels.
\newblock Enter the hydra: Towards principled bug bounties and
  exploit-resistant smart contracts.
\newblock In {\em 27th {USENIX} Security Symposium ({USENIX} Security 18)},
  pages 1335--1352, 2018.

\bibitem{buterin2013ethereum}
Vitalik Buterin.
\newblock A next generation smart contract and decentralized application
  platform.
\newblock
  \url{https://cryptorating.eu/whitepapers/Ethereum/Ethereum_white_paper.pdf},
  retrieved April 2020, 2014.

\bibitem{bitcoincash}
Bitcoin Cash.
\newblock \url{https://www.bitcoincash.org/index.html}, retreived April 2020,
  2020.

\bibitem{chaum1988multiparty}
David Chaum, Claude Cr{\'e}peau, and Ivan Damgard.
\newblock Multiparty unconditionally secure protocols.
\newblock In {\em Proceedings of the twentieth annual ACM symposium on Theory
  of computing}, pages 11--19, 1988.

\bibitem{commeine2006algorithm}
An~Commeine and Igor Semaev.
\newblock An algorithm to solve the discrete logarithm problem with the number
  field sieve.
\newblock In {\em International Workshop on Public Key Cryptography}, pages
  174--190. Springer, 2006.

\bibitem{corrigan2018discrete}
Henry Corrigan-Gibbs and Dmitry Kogan.
\newblock The discrete-logarithm problem with preprocessing.
\newblock In {\em Annual International Conference on the Theory and
  Applications of Cryptographic Techniques}, pages 415--447. Springer, 2018.

\bibitem{daian2019flash}
Philip Daian, Steven Goldfeder, Tyler Kell, Yunqi Li, Xueyuan Zhao, Iddo
  Bentov, Lorenz Breidenbach, and Ari Juels.
\newblock Flash boys 2.0: Frontrunning, transaction reordering, and consensus
  instability in decentralized exchanges.
\newblock {\em arXiv preprint arXiv:1904.05234}, 2019.

\bibitem{damgaard2012multiparty}
Ivan Damg{\aa}rd, Valerio Pastro, Nigel Smart, and Sarah Zakarias.
\newblock Multiparty computation from somewhat homomorphic encryption.
\newblock In {\em Annual Cryptology Conference}, pages 643--662. Springer,
  2012.

\bibitem{duffield2018dash}
Evan Duffield and Daniel Diaz.
\newblock Dash: A payments-focused cryptocurrency.
\newblock https://github.com/dashpay/dash/wiki/Whitepaper, retrieved April
  2020, 2018.

\bibitem{dwork2003memory}
Cynthia Dwork, Andrew Goldberg, and Moni Naor.
\newblock On memory-bound functions for fighting spam.
\newblock In {\em Annual International Cryptology Conference}, pages 426--444.
  Springer, 2003.

\bibitem{dwork1992pricing}
Cynthia Dwork and Moni Naor.
\newblock Pricing via processing or combatting junk mail.
\newblock In {\em Annual International Cryptology Conference}, pages 139--147.
  Springer, 1992.

\bibitem{dziembowski2019perun}
Stefan Dziembowski, Lisa Eckey, Sebastian Faust, and Daniel Malinowski.
\newblock Perun: Virtual payment hubs over cryptocurrencies.
\newblock In {\em 2019 IEEE Symposium on Security and Privacy (SP)}, pages
  106--123, 2019.

\bibitem{dziembowski2018general}
Stefan Dziembowski, Sebastian Faust, and Kristina Host{\'a}kov{\'a}.
\newblock General state channel networks.
\newblock In {\em Proceedings of the 2018 ACM SIGSAC Conference on Computer and
  Communications Security}, pages 949--966, 2018.

\bibitem{Elgamal1985}
T.~Elgamal.
\newblock {A public key cryptosystem and a signature scheme based on discrete
  logarithms}.
\newblock {\em IEEE Transactions on Information Theory}, 31(4):469--472, 1985.

\bibitem{Eskandari}
Shayan Eskandari, Seyedehmahsa Moosavi, and Jeremy Clark.
\newblock {SoK}: Transparent dishonesty: Front-running attacks on blockchain.
\newblock {\em arXiv preprint arXiv:1902.05164}, 2019.

\bibitem{etherdelta}
EtherDelta.
\newblock \url{https://etherdelta.com/}, retreived April 2020, 2020.

\bibitem{eyal2016bitcoin}
Ittay Eyal, Adem~Efe Gencer, Emin~G{\"u}n Sirer, and Robbert Van~Renesse.
\newblock {Bitcoin-NG}: {A} scalable blockchain protocol.
\newblock In {\em 13th {USENIX} symposium on networked systems design and
  implementation ({NSDI} 16)}, pages 45--59, 2016.

\bibitem{galal2018verifiable}
Hisham~S Galal and Amr~M Youssef.
\newblock Verifiable sealed-bid auction on the ethereum blockchain.
\newblock In {\em International Conference on Financial Cryptography and Data
  Security}, pages 265--278. Springer, 2018.

\bibitem{galal2019trustee}
Hisham~S Galal and Amr~M Youssef.
\newblock Trustee: full privacy preserving vickrey auction on top of ethereum.
\newblock {\em arXiv preprint arXiv:1905.06280}, 2019.

\bibitem{gilad2017algorand}
Yossi Gilad, Rotem Hemo, Silvio Micali, Georgios Vlachos, and Nickolai
  Zeldovich.
\newblock Algorand: Scaling byzantine agreements for cryptocurrencies.
\newblock In {\em Proceedings of the 26th Symposium on Operating Systems
  Principles}, pages 51--68, 2017.

\bibitem{goldreich1987how}
Oded Goldreich, Silvio Micali, and Avi Wigderson.
\newblock How to play any mental game or {A} completeness theorem for protocols
  with honest majority.
\newblock In Alfred~V. Aho, editor, {\em Proceedings of the 19th Annual {ACM}
  Symposium on Theory of Computing, 1987, New York, New York, {USA}}, pages
  218--229. {ACM}, 1987.

\bibitem{goldreich1994definitions}
Oded Goldreich and Yair Oren.
\newblock Definitions and properties of zero-knowledge proof systems.
\newblock {\em Journal of Cryptology}, 7(1):1--32, 1994.

\bibitem{goodman2014tezos}
LM~Goodman.
\newblock Tezos— a self-amending crypto-ledger.
\newblock \url{https://www.tezos.com/static/papers/white\_paper.pdf}, retrieved
  April 2020, 2014.

\bibitem{idex11}
Idex.
\newblock \url{https://idex.market/}, retreived April 2020, 2020.

\bibitem{jakobsson1999proofs}
Markus Jakobsson and Ari Juels.
\newblock Proofs of work and bread pudding protocols.
\newblock In {\em Secure Information Networks}, pages 258--272. Springer, 1999.

\bibitem{joux2013new}
Antoine Joux.
\newblock A new index calculus algorithm with complexity $l(1/4+ o (1))$ in
  small characteristic.
\newblock In {\em International Conference on Selected Areas in Cryptography},
  pages 355--379. Springer, 2013.

\bibitem{judmayer2019pay}
Aljosha Judmayer, Nicholas Stifter, Alexei Zamyatin, Itay Tsabary, Ittay Eyal,
  Peter Gazi, Sarah Meiklejohn, and Edgar Weippl.
\newblock Pay-to-win: Incentive attacks on proof-of-work cryptocurrencies.
\newblock Technical report, Cryptology ePrint Archive, Report 2019/775, 2019.

\bibitem{karantias2019proof}
Kostis Karantias, Aggelos Kiayias, and Dionysis Zindros.
\newblock Proof-of-burn.
\newblock In {\em International Conference on Financial Cryptography and Data
  Security}, 2019.

\bibitem{katz1996handbook}
Jonathan Katz, Alfred~J Menezes, Paul~C Van~Oorschot, and Scott~A Vanstone.
\newblock {\em Handbook of applied cryptography}.
\newblock CRC press, 1996.

\bibitem{tex}
Rami Khalil, Arthur Gervais, and Guillaume Felley.
\newblock Tex - a securely scalable trustless exchange.
\newblock {\em IACR Cryptology ePrint Archive}, 2019:265, 2019.

\bibitem{kiayias2017ouroboros}
Aggelos Kiayias, Alexander Russell, Bernardo David, and Roman Oliynykov.
\newblock Ouroboros: A provably secure proof-of-stake blockchain protocol.
\newblock In {\em Annual International Cryptology Conference}, pages 357--388.
  Springer, 2017.

\bibitem{kleinjung2017computation}
Thorsten Kleinjung, Claus Diem, Arjen~K Lenstra, Christine Priplata, and Colin
  Stahlke.
\newblock Computation of a 768-bit prime field discrete logarithm.
\newblock In {\em Annual International Conference on the Theory and
  Applications of Cryptographic Techniques}, pages 185--201. Springer, 2017.

\bibitem{kosba2016hawk}
Ahmed Kosba, Andrew Miller, Elaine Shi, Zikai Wen, and Charalampos Papamanthou.
\newblock Hawk: The blockchain model of cryptography and privacy-preserving
  smart contracts.
\newblock In {\em 2016 IEEE symposium on security and privacy (SP)}, pages
  839--858, 2016.

\bibitem{kumaresan2016amortizing}
Ranjit Kumaresan and Iddo Bentov.
\newblock Amortizing secure computation with penalties.
\newblock In {\em Proceedings of the 2016 ACM SIGSAC Conference on Computer and
  Communications Security}, pages 418--429, 2016.

\bibitem{kumaresan2015use}
Ranjit Kumaresan, Tal Moran, and Iddo Bentov.
\newblock How to use bitcoin to play decentralized poker.
\newblock In {\em Proceedings of the 22nd ACM SIGSAC Conference on Computer and
  Communications Security}, pages 195--206, 2015.

\bibitem{lind2019teechain}
Joshua Lind, Oded Naor, Ittay Eyal, Florian Kelbert, Emin~G{\"u}n Sirer, and
  Peter Pietzuch.
\newblock Teechain: a secure payment network with asynchronous blockchain
  access.
\newblock In {\em Proceedings of the 27th ACM Symposium on Operating Systems
  Principles}, pages 63--79, 2019.

\bibitem{litecoin}
Litecoin.
\newblock \url{https://litecoin.org/}, retreived April 2020, 2020.

\bibitem{malavolta2019homomorphic}
Giulio Malavolta and Sri Aravinda~Krishnan Thyagarajan.
\newblock Homomorphic time-lock puzzles and applications.
\newblock In {\em Annual International Cryptology Conference}, pages 620--649.
  Springer, 2019.

\bibitem{mccorry2018smart}
Patrick McCorry, Alexander Hicks, and Sarah Meiklejohn.
\newblock Smart contracts for bribing miners.
\newblock In {\em International Conference on Financial Cryptography and Data
  Security}, pages 3--18. Springer, 2018.

\bibitem{merkle1980protocols}
Ralph~C Merkle.
\newblock Protocols for public key cryptosystems.
\newblock In {\em 1980 IEEE Symposium on Security and Privacy}, pages 122--122,
  1980.

\bibitem{miers2013zerocoin}
Ian Miers, Christina Garman, Matthew Green, and Aviel~D Rubin.
\newblock Zerocoin: Anonymous distributed e-cash from bitcoin.
\newblock In {\em 2013 IEEE Symposium on Security and Privacy}, pages 397--411,
  2013.

\bibitem{monero}
Monero.
\newblock \url{https://web.getmonero.org/}, retreived April 2020, 2020.

\bibitem{nakamoto2008bitcoin}
Satoshi Nakamoto.
\newblock {Bitcoin: A Peer-to-Peer Electronic Cash System}.
\newblock \url{http://www.bitcoin.org/bitcoin.pdf}, retrieved April 2020, 2008.

\bibitem{parkes2008}
David~C Parkes, Michael~O Rabin, Stuart~M Shieber, and Christopher Thorpe.
\newblock Practical secrecy-preserving, verifiably correct and trustworthy
  auctions.
\newblock {\em Electronic Commerce Research and Applications}, 7(3):294--312,
  2008.

\bibitem{pass2017fruitchains}
Rafael Pass and Elaine Shi.
\newblock Fruitchains: A fair blockchain.
\newblock In {\em Proceedings of the ACM Symposium on Principles of Distributed
  Computing}, pages 315--324, 2017.

\bibitem{pietrzak2019simple}
Krzysztof~Z Pietrzak.
\newblock Simple verifiable delay functions.
\newblock In {\em 10th Innovations in Theoretical Computer Science Conference},
  volume 124, 2019.

\bibitem{pohlig1978improved}
Stephen Pohlig and Martin Hellman.
\newblock An improved algorithm for computing logarithms over gf (p) and its
  cryptographic significance (corresp.).
\newblock {\em IEEE Transactions on information Theory}, 24(1):106--110, 1978.

\bibitem{pollard1978monte}
John~M Pollard.
\newblock Monte carlo methods for index computation (\textit{modp}).
\newblock {\em Mathematics of computation}, 32(143):918--924, 1978.

\bibitem{poon2017plasma}
Joseph Poon and Vitalik Buterin.
\newblock Plasma: Scalable autonomous smart contracts.
\newblock \url{https://plasma.io/plasma.pdf}, retrieved April 2020, 2017.

\bibitem{poon2016bitcoin}
Joseph Poon and Thaddeus Dryja.
\newblock The bitcoin lightning network: Scalable off-chain instant payments.
\newblock \url{http://lightning.network/lightning-network-paper.pdf}, retrieved
  April 2020, 2016.

\bibitem{rabin1980probabilistic}
Michael~O Rabin.
\newblock Probabilistic algorithm for testing primality.
\newblock {\em Journal of number theory}, 12(1):128--138, 1980.

\bibitem{rabin2006}
Michael~O Rabin and Christopher Thorpe.
\newblock Time-lapse cryptography.
\newblock Technical Report TR-22-06, Harvard Computer Science Group, 2006.

\bibitem{radarrelay}
Radar Relay.
\newblock \url{https://app.radarrelay.com/}, retreived April 2020, 2020.

\bibitem{Rivest1996}
R.~L. Rivest, A.~Shamir, and D.~A. Wagner.
\newblock {Time-lock Puzzles and Timed-release Crypto}.
\newblock Technical report, Cambridge, MA, USA, 1996.

\bibitem{salt}
Salt.
\newblock \url{https://saltlending.com/}, retreived April 2020, 2020.

\bibitem{sasson2014zerocash}
Eli~Ben Sasson, Alessandro Chiesa, Christina Garman, Matthew Green, Ian Miers,
  Eran Tromer, and Madars Virza.
\newblock Zerocash: Decentralized anonymous payments from bitcoin.
\newblock In {\em 2014 IEEE Symposium on Security and Privacy}, pages 459--474,
  2014.

\bibitem{defi}
Nick Sawinyh.
\newblock {DeFi} and open finance.
\newblock \url{https://defiprime.com/}, retreived April 2020, 2019.

\bibitem{shoup2009computational}
Victor Shoup.
\newblock {\em A computational introduction to number theory and algebra}.
\newblock Cambridge university press, 2009.

\bibitem{sompolinsky2018phantom}
Yonatan Sompolinsky and Aviv Zohar.
\newblock Phantom: A scalable blockdag protocol.
\newblock {\em IACR Cryptology ePrint Archive}, 2018:104, 2018.

\bibitem{Stellar}
Stellar.
\newblock \url{https://www.stellar.org/}, retrieved April 2020, 2020.

\bibitem{topsok2006some}
Flemming Topsok.
\newblock Some bounds for the logarithmic function.
\newblock {\em Inequality theory and applications}, 4:137, 2006.

\bibitem{tsabary2019heb}
Itay Tsabary, Alexander Spiegelman, and Ittay Eyal.
\newblock {HEB}: Hybrid expenditure blockchain.
\newblock {\em arXiv}, pages arXiv--1911, 2019.

\bibitem{uniswap}
Uniswap.
\newblock \url{https://uniswap.org/}, retreived April 2020, 2020.

\bibitem{Waters2004}
Brent Waters, Ari Juels, J~Alex Halderman, and Edward~W Felten.
\newblock New client puzzle outsourcing techniques for {DoS} resistance.
\newblock In {\em Proceedings of the 11th ACM conference on Computer and
  communications security}, pages 246--256, 2004.

\bibitem{wesolowski2019efficient}
Benjamin Wesolowski.
\newblock Efficient verifiable delay functions.
\newblock In {\em Annual International Conference on the Theory and
  Applications of Cryptographic Techniques}, pages 379--407. Springer, 2019.

\bibitem{yao1986generate}
Andrew Chi-Chih Yao.
\newblock How to generate and exchange secrets.
\newblock In {\em 27th IEEE Annual Symposium on Foundations of Computer Science
  (sfcs 1986)}, pages 162--167, 1986.

\bibitem{zcash}
Zcash.
\newblock \url{https://z.cash/}, retreived April 2020, 2020.

\bibitem{zhang2017rem}
Fan Zhang, Ittay Eyal, Robert Escriva, Ari Juels, and Robbert Van~Renesse.
\newblock {REM}: Resource-efficient mining for blockchains.
\newblock In {\em 26th {USENIX} Security Symposium ({USENIX} Security 17)},
  pages 1427--1444, 2017.

\end{thebibliography}
\appendix
\section{Iteration Sampling} \label{appsec:iteration sampling}
In this section we prove that a value chosen from a set $Z$ by iteration sampling from a larger set $Y$ is chosen uniformly at random.
\begin{definition}
	Let $Y$ be a set of values. Let $Z$ be set of values such that $Z \in Y$. We say that $x$ is chosen from $Z$ by iteration sampling from $Y$ if we iteratively choose a random value $r \randselect Y$ and check if $r \in Z$. If it is then $v \gets r$.
\end{definition}

\begin{lemma}
	Let $Y$ be a set of values. Let $Z$ be set of values such that $Z \in Y$. A value $x$ chosen from $Z$ by iteration sampling from $Y$ is chosen uniformly at random.
\end{lemma}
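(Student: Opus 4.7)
The plan is to compute, for an arbitrary target $z \in Z$, the probability that iteration sampling terminates with $x = z$, and to verify that this probability does not depend on the choice of $z$. The process is a sequence of independent draws $r_1, r_2, \ldots$ each uniform on $Y$; it terminates at the first index $k$ with $r_k \in Z$, at which point $x \gets r_k$. Since each draw independently lies in $Z$ with probability $|Z|/|Y|$ (assuming $Z \subseteq Y$ with $|Z| \geq 1$, which is the setting in which \fun{GenPuz} and \fun{PCRGenPuz} invoke the routine), the halting index is geometrically distributed and the process terminates with probability~$1$.

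The key computation is then a standard conditioning-and-geometric-sum argument. For fixed $z \in Z$, I would write
\begin{equation}
\Pr[x = z] = \sum_{k=1}^{\infty} \Pr[r_1 \notin Z, \ldots, r_{k-1} \notin Z, r_k = z],\nonumber
\end{equation}
use independence to factor each summand as $\left(1 - |Z|/|Y|\right)^{k-1} \cdot (1/|Y|)$, and sum the geometric series to obtain $\Pr[x = z] = (1/|Y|) \cdot (|Y|/|Z|) = 1/|Z|$. Since this value is independent of the particular $z \in Z$, $x$ is distributed uniformly on $Z$.

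There is no real obstacle here; the only subtle points are clerical. First, the lemma as stated writes $Z \in Y$, but the argument (and the use of the lemma in Section~\ref{sec:pcrdl correctness}) requires $Z \subseteq Y$ with $Z$ nonempty, and I would explicitly note this. Second, one should make sure the lemma is understood in the conditional sense: when the underlying draws are only pseudo-random (as in \fun{GenPuz} and \fun{PCRGenPuz}), the same calculation shows that the output is pseudo-uniform on $Z$, with the same advantage as the underlying PRNG, matching the use made of the lemma in the proof of Lemma~\ref{lemma:pcrdl_prg}. No further structure on $Y$ or $Z$ (finiteness beyond $|Z| \geq 1$, measurability, etc.) is needed for the finite-alphabet setting at hand.
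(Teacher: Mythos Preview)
Your proposal is correct and follows essentially the same approach as the paper: both compute $\Pr[x=z]$ by summing over the halting index $k$, factoring each term via independence into a geometric tail $\left(1-|Z|/|Y|\right)^{k-1}$ times the single-draw probability, and summing the geometric series to obtain $1/|Z|$. The only cosmetic difference is that the paper inserts the intermediate step $\Pr[\text{success at }k]\cdot\Pr[x=z\mid\text{success at }k]=\frac{|Z|}{|Y|}\cdot\frac{1}{|Z|}$ where you write $\Pr[r_k=z]=\frac{1}{|Y|}$ directly; your remarks on the $Z\subseteq Y$ typo and the pseudo-random variant are apt additions.
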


\begin{proof}
	The probability to choose a value $x \in Z$ at the first iteration is $Pr[\textup{success}]=\frac{|Z|}{|Y|}$ and the probability to choose a value $x \notin Z$ is $Pr[\textup{failure}]=\frac{|Y|-|Z|}{|Y|}$.
	The probability to choose $x \in Z$ at iteration k is $Pr[\textup{success at iteration k}]= Pr[\textup{failure}]^{k-1}\times Pr[\textup{success}]$.
	The probability to choose a specific $x$ value is
	\begin{eqnarray}
	\label{eq:1}
	Pr[x]  & = & \sum_{k=1}^{\infty} \left(Pr[\textup{success at iteration k}]\times \right.\nonumber\\ 
	& & \left. Pr[x|\textup{success at iteration k}]\right)\nonumber\\
	&= &\overbrace{\sum_{k=1}^{\infty} \left(\frac{|Y|-|Z|}{|Y|}\right)^{k-1}\times\frac{|Z|}{|Y|}}^{1}\times\frac{1}{|Z|}\nonumber\\
	&=&\frac{1}{|Z|}.
	\end{eqnarray}
	
	This result indicates that $x$ is chosen uniformaly at random from the set $Z$.
\end{proof}

\section{Puzzle Hardness Proof}\label{appsec:puzzle hardness proof}

We prove Lemma \ref{lemma:hard_puzzle}, given the \fun{GenPuz}($\cdot$) is a PRG as shown in Claim \ref{lemma_prg}.
\begin{proof}
	We prove that if \Ad had a non-negligible advantage in the game, then a PRG adversary could have used \Ad's output in order to distinguish between a PRG output and a truly random value. 
	
	Recall that PRG security is defined using two games \cite{boneh2017graduate}. Let $G$ be a PRG.
	In Game 0, $\mathcal{B}$ gets as input a truly random $b$, in Game 1, $\mathcal{B}$ gets as input $b$ such that $s \randselect \{0,1\}^{\lambdah}, b \gets G(s)$.
	$\mathcal{B}$ outputs 1 in Game 0 with probability $Pr[W_{\textit{PRG,0}}]$ and in Game 1 with probability $Pr[W_{\textit{PRG,1}}]$.
	
	According to PRG security definition, for every PPT adversary $\mathcal{B}$, the advantage is
	\begin{equation}
	\label{eq:2}
	\textit{PRGadv}[B,G] = |Pr[W_{\textit{PRG,1}}]-Pr[W_{\textit{PRG,0}}]| < \varepsilon_{\textit{PRG}}
	\end{equation}
	
	Figure \ref{proof_prg_pcrdl} illustrates the following scenario.
	Let \Ad be a PPT adversary that attacks the DL problem.
	Let $\mathcal{B}$ be a PPT adversary that attacks PRG (in our case the PRG is \fun{GenPuz}($\cdot$)).
	We define $\mathcal{B}$ to play the role of the challenger for adversary \Ad. $\mathcal{B}$ receives $b$ from the PRG challenger and passes it on to adversary \Ad. \Ad returns her puzzle solution to $\mathcal{B}$. If \Ad successfully solves the discrete logarithm, then $\mathcal{B}$ returns 1, otherwise she returns 0. 
	
	The probability that $\mathcal{B}$ outputs 1 in Game 1 is the probabilty that \Ad solves a pseudo ransom puzzle,
	\begin{equation}
	\label{eq:3}	 
	Pr[W_{\textit{PRG,1}}]= Pr[W_{DLGenPuz}]
	\end{equation}
	The probability that $\mathcal{B}$ outputs 1 in game 0 is that probability that \Ad solves a random puzzle,  
	\begin{equation}
	\label{eq:4}
	Pr[W_{\textit{PRG,0}}]= Pr[W_{DL}]
	\end{equation}
	Plugging equations (\ref{eq:2}), (\ref{eq:3}) and (\ref{eq:4}) yields
	\begin{equation}
	\text{DLGenPuz}\textsf{adv}[\mathcal{A}, \lambdam] = \textit{PRGadv}[B,G] < \varepsilon_{PRG}
	\end{equation}
	and since the PRG advantage of \fun{GenPuz}($\cdot$) is $\varepsilon_{PRNG}$, then
	\begin{equation}
	\text{DLGenPuz}\textsf{adv}[\mathcal{A}, \lambdam] < \varepsilon_{PRNG}
	\end{equation}
	which is negligible in $\lambdah$.
\end{proof}

\begin{figure}
	\centering
	\setstretch{0.65}
	\begin{tikzpicture}[node distance=2.5cm,auto,>=latex']
	\begin{scope}
	
	\node[draw, fit={(0,0) (1.6,3)}, inner sep=0pt, label=center:{\begin{tabular}{c}PRG\\ Challenger\\\\   \\   \\   \end{tabular}}] (A) {};
	\node[fit={(0,0) (1.6,1.5)},   yshift=1cm, inner sep=0pt] (A1) {};
	\node[draw, fit={(0,0) (2.75,3)},  xshift=2.6cm, inner sep=0pt, label=center:{\begin{tabular}{l}$\mathcal{B}$\\\\   \\    \\  \\   \\   \\  \end{tabular}}] (B) {};
	\node[fit={(0,0) (2.75,2)}, inner sep=0pt, yshift=0.5cm, xshift=2.6cm] (B1) {};
	\node[fit={(0,0) (2.75,1.5)}, inner sep=0pt, yshift=1cm, xshift=2.6cm] (B2) {};
	\node[fit={(0,0) (2.75,1.5)}, inner sep=0pt, yshift=0.5cm, xshift=2.6cm] (B3) {};
	
	\node[draw, fit={(0,0) (1,2)},  xshift=3.475cm, yshift=0.5cm, inner sep=0pt, label=center:{\begin{tabular}{l}$\mathcal{A}$\\\\   \\    \\  \end{tabular}}](D) {};
	\node[fit={(0,0) (2.9,2)},  xshift=5.35cm, yshift=0.5cm, inner sep=0pt](E) {};
	\node[fit={(0,0) (0.25,1.5)},  xshift=8cm, yshift=0.5cm, inner sep=0pt](F) {};
	
	\path[->] (A1) edge node {$b$} (B2);
	\path[->] (B1) edge node {$b$} (D);
	\path[->] (D) edge node {$a$} (E);
	\path[->] (B3) edge node {\begin{tabular}{l}if $b=g^a \text{ mod } p$\\
		$\quad\quad \textit{return } 1$\\else\\ $\quad\quad \textit{return } 0$  \end{tabular}} (F);
	
	\end{scope}
	\end{tikzpicture}
	\caption{PRG adversary $\mathcal{B}$}
	\label{proof_prg_pcrdl}
\end{figure}
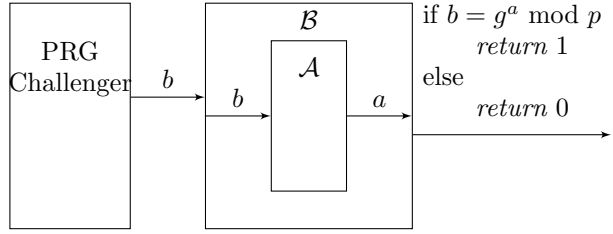

\section{Choosing A Safe Prime} \label{appsec:proof prime_bits_poly_bounded}

In this section we prove Lemma \ref{lemma:prime_bits_poly_bounded}.

\begin{proof}
	\fun{PCRGenPuz}($\cdot$) chooses a safe prime in the range  $[2^{\lambdam-1},2^{\lambdam}-1]$. 
	The number of safe primes in this range is 
	\begin{equation}
	\label{eq:k1}
	|\mathcal{P}(\lambdam)| = 2C_2\left(\frac{2^{\lambdam}}{\log^2(2^{\lambdam})}-\frac{2^{\lambdam-1}}{\log^2(2^{\lambdam-1})}\right) \cong {C_2}\frac{2^\lambdam}{\lambdam^2}
	\end{equation}
	The probability to fail to find a safe prime in $k$ guesses is 
	\begin{equation}
	Pr[\textup{k failures}]=\left(\frac{2^\lambdam-|\mathcal{P}(\lambdam)|}{2^\lambdam}\right)^k.\nonumber
	\end{equation}
	We show that 
	\begin{equation}
	\forall c \enspace \exists N : \forall \lambdam>N: Pr[\textup{k failures}] < \frac{1}{\lambdam^c}.\nonumber
	\end{equation}
	In other words, we are looking for $k$, such that the probability for $k$ failures is negligible.
	\begin{equation}
	\label{eq:k2}	
	Pr[\textup{k  failures}]=\left(\frac{2^\lambdam-|\mathcal{P}(\lambdam)|}{2^\lambdam}\right)^k < \frac{1}{\lambdam^c}.
	\end{equation}
	Equation (\ref{eq:k1}) and (\ref{eq:k2}) give us the required bound on $k$:
	\begin{equation}
	\label{eq:k3}	
	\left(1-\frac{C_2}{\lambdam^2}\right)^k < \frac{1}{\lambdam^c} \Rightarrow
	k > -\frac{c\log(\lambdam)}{\log(1-C_2\lambdam^{-2})}.
	\end{equation}
	Based on logarithmic inequalities \cite{topsok2006some}, using the fact that 
	\begin{equation}
	-C_2\lambdam^{-2}>-1,\nonumber,
	\end{equation}
	we bound the denominator, 
	\begin{equation}
	\log(1-C_2\lambdam^{-2}) > \frac{-C_2\lambdam^
		{-2}}{1-C_2\lambdam^{-2}}.
	\label{eq:k4}
	\end{equation}
	We plug \eqref{eq:k4} into \eqref{eq:k3} and use simple algebraic operation to simplify the bound on~$k$:
	\begin{eqnarray}
	-\frac{c\log(\lambdam)}{\log(1-C_2\lambdam^{-2})} &<& -\frac{c\log(\lambdam)(1-C_2\lambdam^{-2})}{-C_2\lambdam^{-2}} \nonumber\\
	&=& \frac{c}{C_2}\log(\lambdam)(\lambdam^2-C_2) \nonumber\\
	&<& \frac{c}{C_2}(\lambdam^3-C_2\lambdam)\nonumber\\
	&\mysmaller& \ \frac{1}{C_2}(\lambdam^4-C_2\lambdam^2)  	
	\label{eq:k5}
	\end{eqnarray}
	
	From \eqref{eq:k5} and \eqref{eq:k2} it follows that  
	\begin{equation}
	\forall k\geq \frac{1}{C_2}(\lambdam^4-C_2\lambdam^2)\enspace : \enspace Pr[\textup{k  failures}] < \frac{1}{\lambdam^c}.\nonumber
	\end{equation}

	We showed that the probability that we couldn't find a safe prime after $k$ rounds, for $k > \frac{1}{C_2}(\lambdam^4-C_2\lambdam^2)$ is negligible. We need $\lambdam-1$ pseudo random bits in order to create a safe prime candidate per round. Meaning, the length of required pseudo random bits is polynomial in $\lambdam$.
\end{proof}

\section{\bname Commit Protocol} \label{appsec:bc commit protocol}
The difference of \bname from \alg was described in \S\ref{sec:bc:protocol}.
We detail the full \bname's commit protocol in Figure \ref{bc commit protocol}, highlighting the differences from \alg's commit protocol (Figure \ref{commit protocol}).
In table \ref{table:differences}, we describe each difference followed by line number in \bname's protocol and by the matching line number in \alg's protocol, if exists.
\begin{table*}
	\centering
	\begin{tabular}{ | p{9cm} | p{2.25cm} | p{2.25cm} | }
		\hline
		Description & \bname & \alg  \\ \hline
		The user creates a nonce by hashing the committed message. & \refp{line:bc gen nonce} & \refp{line:gen nonce} \\ \hline
		The first element in the seed Merkle tree is \textit{metadata}. & \refc{line:bc insert metadata} & -  \\ \hline
		The miner stores the order of users' nonces in a list $L_\textit{seed}$. & \refc{line:bc lseed1}, \refc{line:bc lseed2}, \refc{line:bc lseed3} & -  \\ \hline
		The parties compute the puzzle using \pcrdlalg. & \refp{line:bc gen puzzle} & \refp{line:gen puzzle}  \\
		\hline
		
	\end{tabular}
	\caption{\bname differences from \alg}
	\label{table:differences}
\end{table*}

\begin{figure*}[t]
	\footnotesize
	\begin{tikzpicture}[auto,>=latex']
	\begin{scope}
	\node[fit={(0,0) (0.25,0.25)}, inner sep=0pt,xshift=6.25cm, yshift=1.35cm] (Pi) {
		$\mathbf{P_i}$
	};
	\node[fit={(0,0) (0.25,0.25)}, inner sep=0pt,xshift=6.25cm, yshift=-5.25cm] (Pie) {
	};
	
	\node[fit={(0,0) (1.6,0.25)}, inner sep=0pt, xshift=8.85cm, yshift=1.35cm] (Pd) {
		\textbf{Coordinator}
	};
	
	\node[fit={(0,0) (1.6,0.25)}, inner sep=0pt,xshift=8.85cm, yshift=-5.25cm] (Pde) {};
	
	\algrenewcommand{\alglinenumber}[1]{\footnotesize p{#1}:}
	\renewcommand{\Statex}{\item[\hphantom{\footnotesize p\arabic{ALG@line}:}]}
	
	\node[fit={(0,0) (6,1)}, inner sep=0pt,yshift=0.5cm] (I1) {
		\begin{algorithmic}[1]
		\State $\textit{nonce}_i \gets$ \hl{$\fun{hash}(m_i)$}
		\label{line:bc gen nonce}
		\algstore{I}
		\end{algorithmic}
	};

	\node[fit={(0,0) (6,2.5)}, inner sep=0pt,yshift=-2.35cm] (I2b) {
		\begin{algorithmic}[1]
		\algrestore{I}
		\Statex \textcolor{gray}{Encrypt and Commit}
		\State {\label{line:verify_mp} assert(verifyMP(${seed}, \textit{MP}_i, \textit{nonce}_i$)})
		\State $Pk_\textit{TL} \gets$  \hl{$\fun{PCRGenPuz}$}($\lambdam$, seed)
		\label{line:bc gen puzzle}
		\State $r_i \randselect \{0,1\}^{\lambdah}$
		\label{line:bc gen_r}
		\State $C_{m_i} \gets \fun{enc}_{Pk_\textit{TL}}(m_i \| r_i)$
		\label{line:bc enc m}
		\State $H_{C_{m_i}} \gets \fun{hash}(C_{m_i})$
		\label{line:bc query random oracle}
		\algstore{I}
		\end{algorithmic}
	};
	\node[fit={(0,0) (6,1.5)}, inner sep=0pt, yshift=-4.25cm] (I3) {
		\begin{algorithmic}[1]
		\algrestore{I}
		\State assert({$H_{C_{m_i}} \text{ appears once in } L_H$})
		\label{line:bc verify appears once}
		\State $S_{{L_H},i}=\fun{sign}_{Sk_i}(L_H \parallel \textit{seed})$
		\label{line:bc sign_list}
		\end{algorithmic}
	};
	\algrenewcommand{\alglinenumber}[1]{\footnotesize c{#1}:}
	\renewcommand{\Statex}{\item[\hphantom{\footnotesize c\arabic{ALG@line}:}]}
	
	\node[fit={(0,0) (5,2)}, inner sep=0pt,yshift=-0.9cm, xshift=10.75cm] (D1) {
		\begin{algorithmic}[1]
		\Statex \textcolor{gray}{Generate Seed}
		\State Initiate Merkle tree $T_\textit{seed}$
		\State \hl{Insert $\fun{hash}(\textit{metadata})$ to $T_\textit{seed}$}
		\label{line:bc insert metadata}
		\Statex [Wait for $N$ nonces]
		\State \hl{$S_\textit{seed} \gets S_\textit{users} \cup i$}
		\label{line:bc lseed1}
		\State \hl{$L_\textit{seed} \gets \fun{PrepareList}(S_\textit{seed})$}
		\label{line:bc lseed2}
		\State Aggregate all nonces in  $T_\textit{seed}$ \hl{according to the order in $L_\textit{seed}$}
		\label{line:bc gen merkle tree}
		\label{line:bc lseed3}
		\State $\textit{MP}_i \gets$ Merkle proof that $\textit{nonce}_i$ is in $T_\textit{seed}$
		\label{line:bc nonce merkle proof}
		\State $\textit{seed} \gets T_\textit{seed}$'s root
		\label{line:bc seed is ready}
		\algstore{D}
		\end{algorithmic}
	};

	\node[fit={(0,0) (7,1.5)}, inner sep=0pt, yshift=-3.2cm, xshift=10.75cm] (D2) {
		\begin{algorithmic}[1]
		\algrestore{D}
		\Statex \textcolor{gray}{Form a Commitment List}
		\State $S_H \gets S_H \cup (H_{C_{m_i}}, i)$
		\label{line:bc gather Hc}
		\Statex [Wait until done for all parties]
		\State $L_H \gets \fun{PrepareList}(S_H)$
		\label{line:bc gen_list}
		\algstore{D}
		\end{algorithmic}
	};

	\node[fit={(0,0) (7,2.85)}, inner sep=0pt, yshift=-5.65cm, xshift=10.75cm] (D3) {
		\begin{algorithmic}[1]
		\algrestore{D}
		\Statex \textcolor{gray}{Collect Ciphertexts and Signatures}
		\State assert({\fun{verifySig}($Pk_i$, $L_H \parallel \textit{seed}$, $S_{{L_H},i}$)})
		\label{line:bc verify p sig}
		\State $j \gets $ get index of participant $i$ in $L_H$
		\State assert({$L_H[j] = (\fun{hash}(C_{m_i}),i)$})
		\label{line:bc verify p cipher}
		\State $L_s[j] \gets S_{{L_H},i}$
		\label{line:bc collect sigs}
		\State $L_C[j] \gets (C_{m_i}, i)$
		\label{line:bc collect C}
		\end{algorithmic}
	};
	
	\node[fit={(0,0) (0.25,0.25)}, inner sep=0pt,xshift=6.25cm, yshift=0.75cm] (I1p1) {};
	
	\node[fit={(0,0) (0.25,0.25)}, inner sep=0pt,xshift=9.5cm, yshift=-0.25cm] (D1p1) {};
	\node[fit={(0,0) (0.25,0.25)}, inner sep=0pt,xshift=9.5cm, yshift=-0.5cm] (D1p2) {};
	
	\node[fit={(0,0) (0.25,0.25)}, inner sep=0pt,xshift=6.25cm, yshift=-1.35cm] (I2p1) {};
	\node[fit={(0,0) (0.25,0.25)}, inner sep=0pt,xshift=6.25cm, yshift=-1.6cm] (I2p2) {};
	
	\node[fit={(0,0) (0.25,0.25)}, inner sep=0pt,xshift=9.5cm, yshift=-2.45cm] (D2p1) {};
	\node[fit={(0,0) (0.25,0.25)}, inner sep=0pt,xshift=9.5cm, yshift=-2.7cm] (D2p2) {};
	
	\node[fit={(0,0) (0.25,0.25)}, inner sep=0pt,xshift=6.25cm, yshift=-3.55cm] (I3p1) {};
	\node[fit={(0,0) (0.25,0.25)}, inner sep=0pt,xshift=6.25cm, yshift=-3.8cm] (I3p2) {};
	
	\node[fit={(0,0) (0.25,0.25)}, inner sep=0pt,xshift=9.5cm, yshift=-4.65cm] (D3p1) {};
	
	\path[-] (Pi) edge node {} (Pie);       
	\path[-] (Pd) edge node {} (Pde);
	\path[->] (I1p1) edge node[above, midway, sloped] {$(\textit{nonce}_i, i)$} (D1p1);
	\path[->] (D1p2) edge node[above, midway, sloped] {(seed,$\textit{MP}_i$)} (I2p1);
	\path[->] (I2p2) edge node[above, midway, sloped] {($H_{C_{m_i}}, i$)} (D2p1);
	\path[->] (D2p2) edge node[above, midway, sloped] {$L_H$} (I3p1);
	\path[->] (I3p2) edge node[above, midway, sloped] {$(S_{{L_H},i}, C_{m_i}, i)$} (D3p1);
	
	\node[fit={(0,0) (0,0)}, inner sep=0pt,xshift=0cm, yshift=0.4cm] (I1Ls) {};
	\node[fit={(0,0) (0,0)}, inner sep=0pt,xshift=6.25cm, yshift=0.4cm] (I1Le) {};
	\node[fit={(0,0) (0,0)}, inner sep=0pt,xshift=0cm, yshift=-2.9cm] (I2Ls) {};
	\node[fit={(0,0) (0,0)}, inner sep=0pt,xshift=6.25cm, yshift=-2.9cm] (I2Le) {};
	
	\path[-] (I1Ls) edge[dashed] node {} (I1Le);   
	\path[-] (I2Ls) edge[dashed] node {} (I2Le);   
	
	\node[fit={(0,0) (0,0)}, inner sep=0pt,xshift=9.75cm, yshift=-1.8cm] (D1Ls) {};
	\node[fit={(0,0) (0,0)}, inner sep=0pt,xshift=17cm, yshift=-1.8cm] (D1Le) {};
	\node[fit={(0,0) (0,0)}, inner sep=0pt,xshift=9.75cm, yshift=-3.25cm] (D2Ls) {};
	\node[fit={(0,0) (0,0)}, inner sep=0pt,xshift=17cm, yshift=-3.25cm] (D2Le) {};
	
	\path[-] (D1Ls) edge[dashed] node {} (D1Le);   
	\path[-] (D2Ls) edge[dashed] node {} (D2Le);   
	
	\draw [decorate,decoration={brace,amplitude=8pt},yshift=0pt]
	(10.03,-1.7) -- (10.03,1.35) node {};
	\draw [decorate,decoration={brace,amplitude=8pt},yshift=0pt]
	(10.03,-3.15) -- (10.03,-1.9) node {};
	\draw [decorate,decoration={brace,amplitude=8pt},yshift=0pt]
	(10.03,-5.25) -- (10.03,-3.35) node {};
	
	\draw [decorate,decoration={brace,amplitude=8pt},yshift=0pt]
	(6,0.25) -- (6,-2.8) node {};
	\draw [decorate,decoration={brace,amplitude=8pt},yshift=0pt]
	(6,-3) -- (6,-4.15) node {};
	
	\end{scope}
	\end{tikzpicture}
	\caption{\bname Commit Protocol}
	\label{bc commit protocol}
\end{figure*}
\end{document}